\providecommand{\U}[1]{\protect\rule{.1in}{.1in}}
\journalname{Quantum Information Processing}
\begin{document}

\title{Public and private resource trade-offs for a quantum channel\thanks{M.M.W.
acknowledges support from the MDEIE (Qu\'{e}bec) PSR-SIIRI international
collaboration grant.} }
\author{Mark M. Wilde
\and Min-Hsiu Hsieh}
\institute{Mark M. Wilde is a postdoctoral fellow with the School of Computer Science, McGill University, Montreal, Quebec, Canada H3A 2A7 \email{mark.wilde@mcgill.ca}
\and Min-Hsiu Hsieh was with the
ERATO-SORST Quantum Computation and Information Project, Japan Science and Technology Agency 5-28-3, Hongo, Bunkyo-ku, Tokyo, Japan during the development of this paper. He is now with the Statistical Laboratory, University of Cambridge, Wilberforce Road, Cambridge, UK CB3 0WB \email{minhsiuh@gmail.com}}

\date{\today}
\maketitle

\begin{abstract}
Collins and Popescu realized a powerful analogy between several resources in
classical and quantum information theory. The Collins-Popescu analogy states
that public classical communication, private classical communication, and
secret key interact with one another somewhat similarly to the way that
classical communication, quantum communication, and entanglement interact.
This paper discusses the information-theoretic treatment of this analogy for
the case of noisy quantum channels. We determine a capacity region for a
quantum channel interacting with the noiseless resources of public classical
communication, private classical communication, and secret key. We then
compare this region with the classical-quantum-entanglement region from our
prior efforts and explicitly observe the information-theoretic consequences of
the strong correlations in entanglement and the lack of a super-dense coding
protocol in the public-private-secret-key setting. The region simplifies for
several realistic, physically-motivated channels such as entanglement-breaking
channels, Hadamard channels, and quantum erasure channels, and we are able to
compute and plot the region for several examples of these channels. \PACS{03.67.Hk \and 03.67.Pp}

\end{abstract}

\section{Introduction}

One of the first breakthroughs in quantum information theory was the discovery
of a protocol for establishing secret correlations with the use of a quantum
channel~\cite{BB84}. Such a task is now well known as quantum key
distribution~\cite{RevModPhys.81.1301}. This thriving area of research has
resulted in a currently available quantum technology, and efforts are now
underway to construct space-to-ground quantum communication
devices~\cite{U09,M08}.

These initial results on quantum key distribution inspired the quantum
information-theoretic study of secret communication over quantum channels, and
this line of inquiry has subsequently led to an improved understanding of the
relations between private classical information and quantum information.
Schumacher and Westmoreland were one of the first to study this
connection~\cite{SW98}, and Collins and Popescu then discussed a useful
analogy between the classical world and the quantum world~\cite{CP02}. The
Collins-Popescu analogy states that the way that a public classical bit, a
private classical bit, and a bit of secret key interact is qualitatively
similar to the way that a classical bit, a quantum bit, and a bit of
entanglement interact~\cite{CP02}. They justify this analogy operationally, by
comparing the teleportation protocol~\cite{BBCJPW93} to the one-time pad
protocol~\cite{V26}. Teleportation consumes two classical bits and one
maximally entangled pair to generate a qubit channel, whereas the one-time pad
protocol consumes one public bit and a bit of secret key to establish a
private classical bit. A qubit channel can establish entanglement, and a
private classical bit channel can establish a bit of secret key---these
protocols have the respective names entanglement distribution and secret key
distribution. Additionally, a qubit channel can generate a classical bit, and
a private classical bit channel can generate a public classical
bit.\footnote{This latter protocol, that we call private-to-public
transmission, follows from the particular communication model that we consider
in this paper.} But the lack of an analogy of the super-dense coding
protocol~\cite{BW92} in the public-private-secret-key setting is where this
analogy breaks down.

Shortly after this initial work, Devetak and Cai \textit{et al.}~independently
established the private capacity of a quantum channel as one of its
fundamental capacities~\cite{Devetak03,CWY04}. These results and the ideas
involved are formally similar to private information transmission in the
classical setting~\cite{CK67,Wyner75Bell,AC93CR,M93,AC93II}. In addition to
determining the private classical capacity, Devetak provided a good lower
bound on the quantum capacity of a quantum channel by showing how to construct
good quantum error-correcting codes from classical codes that transmit
classical information privately~\cite{Devetak03}. Devetak and Winter continued
these efforts, demonstrating many further important connections between
private classical information and quantum information~\cite{DW03b,DW03c}, and
Smith\textit{et al.}~then employed these capacity formulas to determine good
bounds on the secret key rate of the standard protocol for quantum key
distribution~\cite{PhysRevLett.100.170502}.

In this article, we study how a noisy quantum channel interacts with the
noiseless resources of public classical communication, private classical
communication, and secret key. That is, we determine trade-off formulas for
how a sender and receiver can use any of the noiseless resources to assist a
noisy quantum channel in generating any of the other noiseless resources. In
earlier work, we determined trade-off formulas in the setting where the
noiseless resources are classical communication, quantum communication, and
entanglement~\cite{HW08ITIT,HW09T3,HW09book,WH10}. Thus, one could view the
present work as the completion of the information-theoretic treatment of the
Collins-Popescu analogy (at least for the case of channels) that began in the
aforementioned papers and continued in Refs.~\cite{DS03,hsieh:042306}.

Our main result is the private dynamic capacity theorem. This theorem
determines the capabilities for a noisy quantum channel to generate any of the
three noiseless classical resources when assisted by the others. The rates in
the private dynamic capacity region can be either positive or negative,
depending on whether a protocol generates or consumes a given resource,
respectively. The result of this theorem is that combinations of only four
protocols are sufficient to generate the entire capacity region:\ the
publicly-enhanced private father protocol~\cite{HW09}, the one-time pad
protocol, secret key distribution, and private-to-public transmission. This
result is in line with the Collins-Popescu analogy because we found that the
classically-enhanced father protocol~\cite{HW08ITIT}, teleportation,
entanglement distribution, and super-dense coding are sufficient to realize
the quantum dynamic capacity region of a quantum
channel~\cite{HW08ITIT,HW09T3,HW09book,WH10}. This theorem also explicitly
demonstrates the aforementioned breakdown of the Collins-Popescu analogy---the
last two inequalities in each theorem are similar by inspection, but the first
one in each is different because of the lack of a super-dense coding protocol
in the public-private-secret-key setting and because the rates in
teleportation and the one-time pad are different.

We also explicitly compute and plot the private dynamic capacity region for
several realistic, physically motivated quantum
channels:\ entanglement-breaking channels~\cite{shor:4334,HSR03}, dephasing
channels, cloning channels~\cite{BHP09,B09,BDHM09,BHTW10}, and erasure
channels~\cite{PhysRevA.56.33}. Entanglement-breaking channels have
application in entanglement detection
protocols~\cite{PhysRevA.78.062105,PhysRevA.80.062314}. Dephasing noise occurs
in superconducting qubit systems \cite{BDKS08}, the cloning channel represents
a natural process that occurs during stimulated
emission~\cite{MH82,SWZ00,LSHB02}, and the erasure channel is a simplified
model for photon
loss~\cite{PhysRevLett.91.217901,PhysRevA.75.042316,Lu:2008:11050,DGJZ10}.
Br\'{a}dler \textit{et al.}~pointed out in Ref.~\cite{BHTW10} that both
dephasing channels and cloning channels are examples of Hadamard
channels~\cite{KMNR07}, and this Hadamard property is useful in proving that
the private dynamic capacity region is tractable. The proof for the quantum
erasure channel follows by exploiting its particular structure. We prove these
results first by showing that a formula, named the private dynamic capacity
formula, is additive for each of these channels. We then analyze each channel
individually and show that a particular ensemble suffices to achieve the
boundary points of the private dynamic capacity region.

We structure this paper as follows. We first review the communication model,
some definitions, and notation that are essential in understanding the rest of
the paper. Section~\ref{sec:capacity-theorem}\ states the private dynamic
capacity theorem and the next two sections prove the achievability part and
the converse part. We then introduce the private dynamic capacity formula,
show how its additivity implies that the computation of the capacity region
boundary simplifies, analyze special cases of
the formula, and compare the region to the quantum dynamic capacity region
from Refs.~\cite{HW09T3,HW09book,WH10}. Sections~\ref{sec:EB}\ and
\ref{sec:Hadamard}\ prove that the private dynamic capacity formula is
additive for entanglement-breaking channels and the Hadamard class of
channels, respectively. We finally compute and plot the private dynamic
capacity region for dephasing channels, cloning channels, and erasure channels
in Section~\ref{sec:special-channels}. We conclude with a discussion and some
open problems.

\section{Definitions and notation}

\label{sec:def-not}We first establish some definitions and notation that we
employ throughout the paper, and we review a few important properties of the
entropy. Consider a random variable $M$ with a uniform distribution on $D$
values. Let $\overline{\Phi}^{M_{A}M_{B}}$ denote an embedding of this random
variable into a maximally correlated state shared between two parties~$M_{A}$
and$~M_{B}$:%
\begin{equation}
\overline{\Phi}^{M_{A}M_{B}}\equiv\frac{1}{D}\sum_{m=1}^{D}\left\vert
m\right\rangle \left\langle m\right\vert ^{M_{A}}\otimes\left\vert
m\right\rangle \left\langle m\right\vert ^{M_{B}}.
\label{eq:max-correlated-state}%
\end{equation}
A common randomness bit corresponds to the special case where $D=2$. Suppose a
third party Eve possesses a quantum system $E$. A state $\rho^{M_{A}M_{B}E}$
on the systems $M_{A}$, $M_{B}$, and $E$ is a public common randomness state
if%
\begin{align*}
\text{Tr}_{E}\{\rho^{M_{A}M_{B}E}\}  &  =\overline{\Phi}^{M_{A}M_{B}},\\
\rho^{M_{A}M_{B}E}  &  \neq\overline{\Phi}^{M_{A}M_{B}}\otimes\sigma^{E},
\end{align*}
for some state $\sigma^{E}$. The above conditions imply that Eve has some
correlations with the above state and could learn about the random variable
$M$ by performing a measurement on her system. A state $\omega^{M_{A}M_{B}E}$
is a secret key state if%
\begin{align*}
\text{Tr}_{E}\{\omega^{M_{A}M_{B}E}\}  &  =\overline{\Phi}^{M_{A}M_{B}},\\
\omega^{M_{A}M_{B}E}  &  =\overline{\Phi}^{M_{A}M_{B}}\otimes\sigma^{E},
\end{align*}
for some state $\sigma^{E}$. In this case, Eve cannot learn anything about the
random variable $M$ by performing a measurement on her share of $\omega
^{M_{A}M_{B}E}$.

A completely-positive trace-preserving (CPTP)\ map $\mathcal{N}^{A^{\prime
}\rightarrow B}$\ is the most general map we consider that maps from a quantum
system $A^{\prime}$\ to another quantum system $B$\ (we usually call them
\textquotedblleft Alice\textquotedblright\ and \textquotedblleft
Bob\textquotedblright). It acts as follows on any density operator $\rho$:%
\[
\mathcal{N}^{A^{\prime}\rightarrow B}\left(  \rho\right)  =\sum_{k}A_{k}\rho
A_{k}^{\dag},
\]
where the operators $A_{k}$ satisfy the condition $\sum_{k}A_{k}^{\dag}%
A_{k}=I$. A quantum channel admits an isometric extension $U_{\mathcal{N}%
}^{A^{\prime}\rightarrow BE}$, which is a unitary embedding into a larger
Hilbert space. One recovers the original channel by taking a partial trace
over the \textquotedblleft environment\textquotedblright\ system $E$ (we
usually call this system \textquotedblleft Eve\textquotedblright). One obtains
the complementary channel~$\left(  \mathcal{N}^{c}\right)  ^{A^{\prime
}\rightarrow E}$ by taking a partial trace over the system $B$.

A channel is degradable if there is a degrading map~$\mathcal{D}^{B\rightarrow
E}$\ such that Bob can simulate the map to Eve~\cite{DS03}:%
\[
\forall\rho\ \ \ \ \ \ \mathcal{D}^{B\rightarrow E}\circ\mathcal{N}%
^{A^{\prime}\rightarrow B}\left(  \rho\right)  =\left(  \mathcal{N}%
^{c}\right)  ^{A^{\prime}\rightarrow E}\left(  \rho\right)  .
\]
A channel is antidegradable if there is a map $\mathcal{T}^{E\rightarrow B}$
such that Eve can simulate the map to Bob:%
\[
\forall\rho\ \ \ \ \ \ \mathcal{T}^{E\rightarrow B}\circ\left(  \mathcal{N}%
^{c}\right)  ^{A^{\prime}\rightarrow E}\left(  \rho\right)  =\mathcal{N}%
^{A^{\prime}\rightarrow B}\left(  \rho\right)  .
\]
A channel $\mathcal{N}_{\text{EB}}^{A^{\prime}\rightarrow B}$ is
entanglement-breaking if its output is a separable state whenever the input is
entangled~\cite{shor:4334,HSR03}:%
\[
\mathcal{N}_{\text{EB}}^{A^{\prime}\rightarrow B}(\left\vert \Gamma
\right\rangle \left\langle \Gamma\right\vert ^{AA^{\prime}})=\sum_{x}%
p_{X}\left(  x\right)  \sigma_{x}^{A}\otimes\theta_{x}^{B}.
\]
Such a channel is antidegradable and the antidegrading map consists of two
parts:\ 1)\ a measurement of the system $E$ that gives a classical variable
and 2) a state preparation conditional on the classical outcome of the
measurement. A quantum Hadamard channel is one whose complementary channel is
entanglement-breaking~\cite{KMNR07}. It is thus degradable with a similar
degrading map that consists of a measurement and state preparation.%
%TCIMACRO{\FRAME{ftbpFU}{5.047in}{2.2788in}{0pt}{\Qcb{(Color online) The
%communication model in this paper. Alice can prepare local states in her lab
%and choose to send them through a noisy channel or dump them locally at no
%cost in a bin $E_{1}$\ to which Eve has access. We depict the isometric
%extension $U_{\QTR{cal}{N}}^{A^{\prime}\rightarrow BE}$\ of the channel
%$\QTR{cal}{N}^{A^{\prime}\rightarrow B}$ and give Eve full access to the
%environment~$E$ of the channel. Bob receives the output~$B$\ of the channel
%and can process it locally at his end or dump it in a bin$~E_{2}$\ to which
%Eve has access. In this model, Alice and Bob can simulate a public classical
%channel from a private classical channel because Bob can choose to dispose
%quantum states in the bin~$E_{2}$ to which Eve has access.}}%
%{\Qlb{fig:comm-model}}{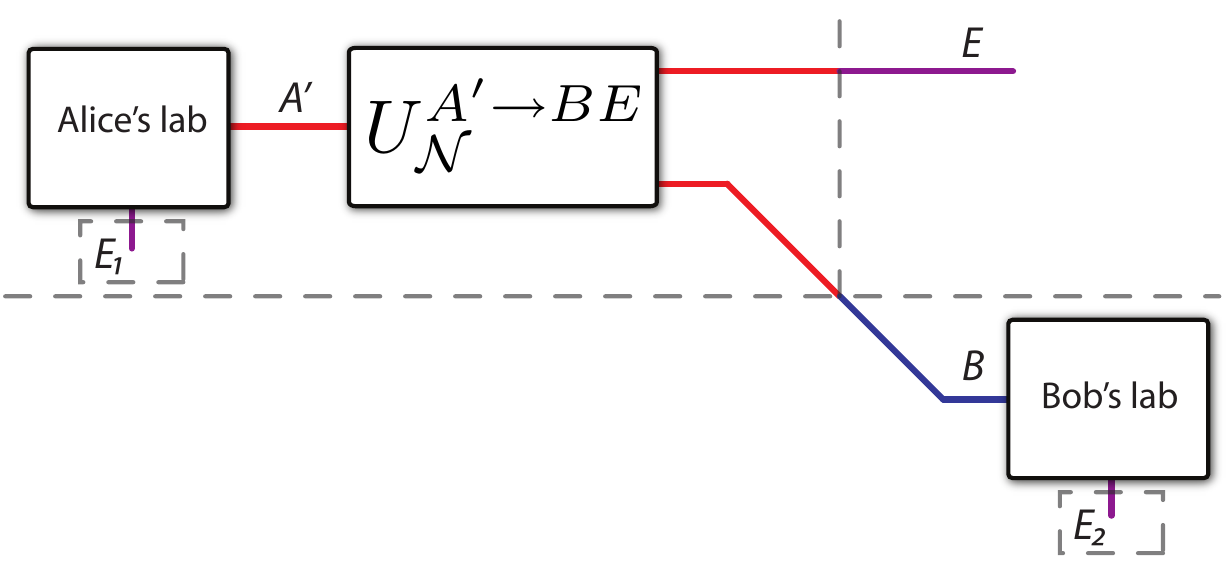}%
%{\special{ language "Scientific Word";  type "GRAPHIC";
%maintain-aspect-ratio TRUE;  display "USEDEF";  valid_file "F";
%width 5.047in;  height 2.2788in;  depth 0pt;  original-width 8.1465in;
%original-height 3.6538in;  cropleft "0";  croptop "1";  cropright "1";
%cropbottom "0";
%filename 'isometric-quantum-channel.pdf';file-properties "XNPEU";}}}%
%BeginExpansion
\begin{figure}
[ptb]
\begin{center}
\includegraphics[
natheight=3.653800in,
natwidth=8.146500in,
height=2.2788in,
width=5.047in
]%
{}%
\caption{(Color online) The communication model in this paper. Alice can
prepare local states in her lab and choose to send them through a noisy
channel or dump them locally at no cost in a bin $E_{1}$\ to which Eve has
access. We depict the isometric extension $U_{\mathcal{N}}^{A^{\prime
}\rightarrow BE}$\ of the channel $\mathcal{N}^{A^{\prime}\rightarrow B}$ and
give Eve full access to the environment~$E$ of the channel. Bob receives the
output~$B$\ of the channel and can process it locally at his end or dump it in
a bin$~E_{2}$\ to which Eve has access. In this model, Alice and Bob can
simulate a public classical channel from a private classical channel because
Bob can choose to dispose quantum states in the bin~$E_{2}$ to which Eve has
access.}%
\label{fig:comm-model}%
\end{center}
\end{figure}
%EndExpansion

We employ a particular model of communication in this paper (depicted in
Figure~\ref{fig:comm-model}). We define a public channel as one for which an
eavesdropper Eve can gain some information about what Alice and Bob transmit
over it. A private channel is one for which Eve cannot gain any information
about what they transmit. In this model, we give Eve access to the environment
of a noisy quantum channel and particular registers that Alice and Bob can
discard locally from their laboratories. We do not count this discarding as a
resource because it results from local actions that Alice and Bob take. This
particular model allows us to make close contact with result from the
classical-quantum-entanglement trade-off~\cite{HW08ITIT,HW09T3,HW09book,WH10}.

We consider a three-dimensional capacity region throughout this work (as in
Ref.~\cite{HW09T3}), whose points $\left(  R,P,S\right)  $\ correspond to
rates of public classical communication, private classical communication, and
secret key generation/consumption, respectively. For example, the one-time pad
protocol corresponds to the following point:%
\[
\left(  -1,1,-1\right)  ,
\]
because it consumes a public bit and a bit of secret in order to generate a
private bit. Secret key distribution corresponds to%
\[
\left(  0,-1,1\right)  ,
\]
and a private-to-public transmission corresponds to%
\[
\left(  1,-1,0\right)  .
\]

The entropy $H\left(  A\right)  _{\rho}$ of a density operator $\rho^{A}$ on
some quantum system $A$ is as follows:%
\[
H\left(  A\right)  _{\rho}\equiv-\text{Tr}\left\{  \rho^{A}\log\rho
^{A}\right\}  ,
\]
where the logarithm is base two. The entropy can never exceed the logarithm of
the dimension of system $A$. The quantum mutual information of a bipartite
density operator $\rho^{AB}$ is%
\[
I\left(  A;B\right)  _{\rho}\equiv H\left(  A\right)  _{\rho}+H\left(
B\right)  _{\rho}-H\left(  AB\right)  _{\rho},
\]
and the conditional quantum mutual information for a tripartite state
$\rho^{ABC}$ is%
\[
I\left(  A;B|C\right)  _{\rho}=H\left(  AC\right)  _{\rho}+H\left(  BC\right)
_{\rho}-H\left(  C\right)  _{\rho}-H\left(  ABC\right)  _{\rho}.
\]
The quantum mutual information obeys a chain rule:%
\begin{equation}
I\left(  AB;C\right)  _{\rho}=I\left(  A;C\right)  _{\rho}+I\left(
B;C|A\right)  _{\rho}. \label{eq:mut-chain-rule}%
\end{equation}

A classical-quantum state $\sigma^{XYBE}$\ of the following form plays an
important role throughout this paper:%
\[
\sigma^{XYBE}\equiv\sum_{x,y}p_{X,Y}\left(  x,y\right)  \left\vert
x\right\rangle \left\langle x\right\vert ^{X}\otimes\left\vert y\right\rangle
\left\langle y\right\vert ^{Y}\otimes U_{\mathcal{N}}^{A^{\prime}\rightarrow
BE}(\rho_{x,y}^{A^{\prime}}),
\]
where the states $\rho_{x,y}^{A^{\prime}}$ are mixed states and
$U_{\mathcal{N}}^{A^{\prime}\rightarrow BE}$ is the isometric extension of
some noisy channel $\mathcal{N}^{A^{\prime}\rightarrow B}$. Applying the above
chain rule gives the following relation:%
\begin{equation}
I\left(  YX;B\right)  _{\sigma}=I\left(  X;B\right)  _{\sigma}+I\left(
Y;B|X\right)  _{\sigma}. \label{eq:entropy-1}%
\end{equation}
An accessible introduction to concepts in quantum Shannon theory is available
in Yard's thesis~\cite{Yard05a}.

\section{The private dynamic capacity theorem}

\label{sec:capacity-theorem}The private dynamic capacity theorem gives bounds
on the reliable communication rates of a noisy quantum channel when combined
with the noiseless resources of public classical communication, private
classical communication, and a secret key. The theorem applies regardless of
whether a protocol consumes the noiseless resources or generates them.

\begin{theorem}
[Private Dynamic Capacity]\label{thm:main-theorem}The private dynamic capacity
region $\mathcal{C}_{\text{\emph{RPS}}}(\mathcal{N})$ of a quantum channel
$\mathcal{N}$ is equal to the following expression:%
\begin{equation}
\mathcal{C}_{\text{\emph{RPS}}}(\mathcal{N})=\overline{\bigcup_{k=1}^{\infty
}\frac{1}{k}\mathcal{C}_{\text{\emph{RPS}}}^{(1)}(\mathcal{N}^{\otimes k}%
)},\label{eq:multi-letter}%
\end{equation}
where the overbar indicates the closure of a set. The \textquotedblleft
one-shot\textquotedblright\ region $\mathcal{C}_{\text{\emph{RPS}}}%
^{(1)}(\mathcal{N})$ is the union of the \textquotedblleft one-shot,
one-state\textquotedblright\ regions $\mathcal{C}_{\text{\emph{RPS}},\sigma
}^{(1)}(\mathcal{N})$:%
\[
\mathcal{C}_{\text{\emph{RPS}}}^{(1)}(\mathcal{N})\equiv\bigcup_{\sigma
}\mathcal{C}_{\text{\emph{RPS}},\sigma}^{(1)}(\mathcal{N}).
\]
The \textquotedblleft one-shot, one-state\textquotedblright\ region
$\mathcal{C}_{\text{\emph{RPS}},\sigma}^{(1)}(\mathcal{N})$ is the set of all
rates $R$, $P$, and $S$ such that%
\begin{align}
R+P &  \leq I\left(  YX;B\right)  _{\sigma},\label{eq:RP-bound}\\
P+S &  \leq I\left(  Y;B|X\right)  _{\sigma}-I\left(  Y;E|X\right)  _{\sigma
},\label{eq:PS-bound}\\
R+P+S &  \leq I\left(  YX;B\right)  _{\sigma}-I\left(  Y;E|X\right)  _{\sigma
}.\label{eq:RPS-bound}%
\end{align}
The above entropic quantities are with respect to a classical-quantum state
$\sigma^{XYBE}$ where%
\begin{equation}
\sigma^{XYBE}\equiv\sum_{x,y}p_{X,Y}\left(  x,y\right)  \left\vert
x\right\rangle \left\langle x\right\vert ^{X}\otimes\left\vert y\right\rangle
\left\langle y\right\vert ^{Y}\otimes U_{\mathcal{N}}^{A^{\prime}\rightarrow
BE}(\rho_{x,y}^{A^{\prime}}),\label{eq:main-theorem-state}%
\end{equation}
and the states $\rho_{x,y}^{A^{\prime}}$ are mixed. It is implicit that one
should consider states on $A^{\prime k}$ instead of $A^{\prime}$ when taking
the regularization in (\ref{eq:multi-letter}).
\end{theorem}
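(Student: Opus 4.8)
The plan is to establish the claimed region by the two standard inclusions: an achievability (direct) part showing that every rate triple in the one-shot, one-state region $\mathcal{C}^{(1)}_{\mathrm{RPS},\sigma}(\mathcal{N})$ is attainable, and a converse part showing that any protocol with vanishing error has a rate triple lying in the regularized set of \eqref{eq:multi-letter}. Because the right-hand side of \eqref{eq:multi-letter} is the regularization of the one-shot region applied to $\mathcal{N}^{\otimes k}$, it suffices to prove one-shot achievability for an arbitrary channel---the blocklength-$k$ statement follows by applying it to $\mathcal{N}^{\otimes k}$ and rescaling by $1/k$---together with a matching multi-letter converse, after which one takes the union over $\sigma$, the union over $k$, and the closure.

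For achievability, I would first observe that the three inequalities \eqref{eq:RP-bound}--\eqref{eq:RPS-bound} define a translated polyhedral cone whose apex is the unique point saturating all three. Solving the three equalities with the chain rule $I(YX;B)_\sigma = I(X;B)_\sigma + I(Y;B|X)_\sigma$ gives the apex
\[
\bigl(I(X;B)_\sigma,\ I(Y;B|X)_\sigma,\ -I(Y;E|X)_\sigma\bigr),
\]
which is exactly the rate triple achieved by the publicly-enhanced private father protocol of Ref.~\cite{HW09}: it transmits public information carried by $X$ at rate $I(X;B)_\sigma$, private information carried by $Y$ at rate $I(Y;B|X)_\sigma$, and consumes secret key at rate $I(Y;E|X)_\sigma$ for privacy amplification against Eve. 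I would then check that the three unit protocols in the model---the one-time pad $(-1,1,-1)$, secret key distribution $(0,-1,1)$, and private-to-public transmission $(1,-1,0)$---are precisely the extreme rays of this cone, since each saturates exactly two of the three inequalities while strictly satisfying the third. Hence nonnegative combinations of them generate the entire cone, and composing the father protocol with appropriate amounts of the three unit protocols reaches every point of $\mathcal{C}^{(1)}_{\mathrm{RPS},\sigma}(\mathcal{N})$.

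For the converse, I would start from an arbitrary protocol that uses $\mathcal{N}^{\otimes n}$ and produces, up to error $\varepsilon$, the target public-common-randomness and secret-key states defined in Section~\ref{sec:def-not}. Each inequality then follows from an entropic argument on the induced state: the bound on $R+P$ is a Holevo-type (total classical capacity) bound obtained by data processing from Bob's decoding of the combined public-plus-private message; the bound on $P+S$ is a private-information bound, using the decoupling of Eve guaranteed by the secret-key condition $\omega^{M_A M_B E} = \overline{\Phi}^{M_A M_B}\otimes\sigma^E$ together with Bob's decoding; and the bound on $R+P+S$ is obtained by combining these two lines. The approximation error is absorbed with a continuity estimate (Alicki--Fannes--Winter) and a Fano-type bound, and dividing by $n$ and letting $n\to\infty$, $\varepsilon\to 0$ places the triple inside the closure in \eqref{eq:multi-letter}.

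I expect the main obstacle to be the converse for the $P+S$ inequality, where the private rate and the secret-key rate must be bounded \emph{simultaneously} by the single private-information quantity $I(Y;B|X)_\sigma - I(Y;E|X)_\sigma$: one must carefully condition out the public register $X$, invoke the secret-key decoupling condition correctly in the presence of the public message, and control everything against the protocol error so that the single-letter identification is clean. The achievability step is comparatively routine once the father protocol is granted, reducing to the geometric verification that the unit-protocol vectors are the cone's extreme rays; there the only point warranting care is the bookkeeping of resource consumption versus generation, i.e.\ correctly composing protocols whose rates may be negative.
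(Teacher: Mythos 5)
Your achievability argument is essentially the paper's, run in reverse: the paper starts from the cone generated by the three unit protocols, inverts the $3\times 3$ matrix of their rate triples to obtain the inequalities $R+P\leq 0$, $P+S\leq 0$, $R+P+S\leq 0$, and translates by the publicly-enhanced private father point $\left(I(X;B)_{\sigma},I(Y;B|X)_{\sigma},-I(Y;E|X)_{\sigma}\right)$, whereas you verify that this point is the apex of the polyhedron cut out by (\ref{eq:RP-bound})--(\ref{eq:RPS-bound}) and that the unit triples are its extreme rays. The two computations are equivalent and your extreme-ray check is correct.

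The genuine gap is in the converse. The region in Theorem~\ref{thm:main-theorem} contains points with negative coordinates (e.g.\ the one-time-pad direction $(-1,1,-1)$), so the converse must bound the \emph{net} rates of a protocol that is permitted to \emph{consume} noiseless public bits, private bits, and secret key in addition to generating them. Your outline bounds only the gross generation rates of a protocol that produces the target states, which at best establishes the theorem on the positive octant. The paper's converse (Section~\ref{sec:converse}) is explicitly catalytic: Alice's encoder outputs consumed registers $L$ (public, dimension $2^{n\widetilde{R}}$), $J$ (private, dimension $2^{n\widetilde{P}}$), and uses a pre-shared key $S_{B}$ (dimension $2^{n\widetilde{S}}$), and the crucial device is to absorb these consumed resources into the auxiliary classical systems of the single-letter state by setting $X\equiv KL$ and $Y\equiv MJS_{B}T_{A}$. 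The consumed rates then reappear as additive terms on the right-hand sides via dimension bounds such as $I(KM;LJ|S_{B})_{\omega}\leq n(\widetilde{R}+\widetilde{P})$ and $I(MT_{A};JS_{B}|KL)_{\omega}\leq n(\widetilde{P}+\widetilde{S})$, which is exactly what converts a bound on gross rates into a bound on net rates against the quantities in (\ref{eq:RP-bound})--(\ref{eq:RPS-bound}). Relatedly, the security condition must be invoked jointly with the public data---the paper uses $I(MJS_{B}T_{A};E^{n}KL)_{\omega}\leq\epsilon$, i.e.\ Eve's systems together with \emph{all} public registers are decoupled from all private data including the consumed key---and this is precisely what makes your ``condition out the public register'' step work for the $P+S$ and $R+P+S$ bounds. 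Without the catalytic bookkeeping and the identification of $X$ and $Y$ with the consumed registers, the single-letter form of the bounds does not emerge and the negative-rate portion of the region is not covered.
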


The above theorem is a \textquotedblleft multi-letter\textquotedblright%
\ capacity theorem because of the regularization in (\ref{eq:multi-letter}).
Though, we show later that the regularization is not necessary for
entanglement-breaking channels, the Hadamard class of channels, or the quantum
erasure channels. We prove the private dynamic capacity theorem in two parts:

\begin{enumerate}
\item The direct coding theorem below shows that combining the
\textquotedblleft publicly-enhanced private father protocol\textquotedblright%
\ with the one-time pad, secret key distribution, and private-to-public
transmission achieves the above region.

\item The converse theorem demonstrates that any coding scheme cannot do
better than the regularization in (\ref{eq:multi-letter}), in the sense that a
scheme with vanishing error should have its rates below the above amounts. We
prove the converse theorem directly in \textquotedblleft one fell
swoop,\textquotedblright\ by employing a catalytic, information-theoretic
approach (similar to the method introduced in Ref.~\cite{WH10}).
\end{enumerate}

\section{Dynamic achievable rate region}

\label{sec:achievable}The unit resource achievable region is what Alice and
Bob can achieve with the protocols secret key distribution, the one-time pad,
and private-to-public transmission. It is the cone of the rate triples
corresponding to these protocols:%
\[
\left\{  \alpha\left(  0,-1,1\right)  +\beta\left(  -1,1,-1\right)
+\gamma\left(  1,-1,0\right)  :\alpha,\beta,\gamma\geq0\right\}  .
\]
We can also write any rate triple $\left(  R,P,S\right)  $ in the unit
resource capacity region with a matrix equation:%
\begin{equation}%
\begin{bmatrix}
R\\
P\\
S
\end{bmatrix}
=%
\begin{bmatrix}
0 & -1 & 1\\
-1 & 1 & -1\\
1 & -1 & 0
\end{bmatrix}%
\begin{bmatrix}
\alpha\\
\beta\\
\gamma
\end{bmatrix}
. \label{eq:unit-resource-achievable-region}%
\end{equation}
The inverse of the above matrix is as follows:%
\[%
\begin{bmatrix}
-1 & -1 & 0\\
-1 & -1 & -1\\
0 & -1 & -1
\end{bmatrix}
,
\]
and gives the following set of inequalities for the unit resource achievable
region:%
\begin{align*}
R+P  &  \leq0,\\
R+P+S  &  \leq0,\\
P+S  &  \leq0,
\end{align*}
by inverting the matrix equation in (\ref{eq:unit-resource-achievable-region})
and applying the constraints $\alpha,\beta,\gamma\geq0$.

Now, let us include the publicly-enhanced private father protocol~\cite{HW09}.
Ref.~\cite{HW09} proved that we can achieve the following rate triple by
channel coding over a noisy quantum channel $\mathcal{N}^{A^{\prime
}\rightarrow B}$:%
\[
\left(  I\left(  X;B\right)  _{\sigma},I\left(  Y;B|X\right)  _{\sigma
},-I\left(  Y;E|X\right)  _{\sigma}\right)  ,
\]
for any state $\sigma^{XYBE}$\ of the form:%
\begin{equation}
\sigma^{XYBE}\equiv\sum_{x,y}p_{X,Y}\left(  x,y\right)  \left\vert
x\right\rangle \left\langle x\right\vert ^{X}\otimes\left\vert y\right\rangle
\left\langle y\right\vert ^{Y}\otimes U_{\mathcal{N}}^{A^{\prime}\rightarrow
BE}(\rho_{x,y}^{A^{\prime}}), \label{eq:maximization-state}%
\end{equation}
where $U_{\mathcal{N}}^{A^{\prime}\rightarrow BE}$ is an isometric extension
of the quantum channel $\mathcal{N}^{A^{\prime}\rightarrow B}$. Specifically,
we showed in Ref.~\cite{HW09}\ that one can achieve the above rates with
vanishing error in the limit of large blocklength. Thus the achievable rate
region is the following translation of the unit resource achievable region in
(\ref{eq:unit-resource-achievable-region}):%
\[%
\begin{bmatrix}
R\\
P\\
S
\end{bmatrix}
=%
\begin{bmatrix}
0 & -1 & 1\\
-1 & 1 & -1\\
1 & -1 & 0
\end{bmatrix}%
\begin{bmatrix}
\alpha\\
\beta\\
\gamma
\end{bmatrix}
+%
\begin{bmatrix}
I\left(  X;B\right)  _{\sigma}\\
I\left(  Y;B|X\right)  _{\sigma}\\
-I\left(  Y;E|X\right)  _{\sigma}%
\end{bmatrix}
.
\]
We can now determine bounds on an achievable rate region that employs the
above coding strategy. We apply the inverse of the matrix in
(\ref{eq:unit-resource-achievable-region}) to the LHS and RHS. Then using the
constraints $\alpha,\beta,\gamma\geq0$, we obtain the inequalities in
(\ref{eq:RP-bound}-\ref{eq:RPS-bound}), corresponding exactly to the one-shot,
one-state region in Theorem~\ref{thm:main-theorem}. Taking the union over all
possible states $\sigma$ in (\ref{eq:maximization-state}) and taking the
regularization gives the full private dynamic achievable rate region.

\section{Catalytic and information theoretic converse proof}

\label{sec:converse}This section provides a catalytic, information theoretic
converse proof of the private dynamic capacity region, showing that
(\ref{eq:multi-letter}) gives a multi-letter characterization of it. The
catalytic approach means that we are considering the most general protocol
that \textit{consumes and generates} public classical communication, private
classical communication, and secret key in addition to the uses of the noisy
quantum channel. Figure~\ref{fig:catalytic-secret-protocol} depicts the most
general protocol for generating public classical communication, private
classical communication, and a secret key with the consumption of a noisy
quantum channel $\mathcal{N}^{A^{\prime}\rightarrow B}$ and the same
respective resources. This approach has the advantage that we can prove the
converse theorem in \textquotedblleft one fell swoop.\textquotedblright\ We
employ the Alicki-Fannes' inequality, the chain rule for quantum mutual
information, elementary properties of quantum entropy, and the quantum data
processing inequality to prove the converse.

There are some subtleties in our proof for the converse theorem. We prove that
the bounds in (\ref{eq:RP-bound}-\ref{eq:RPS-bound}) hold for common
randomness generation and private key generation instead of public classical
communication and private classical communication, respectively, because a
capacity for generating common randomness and a private key can only be better
than that for generating public classical communication and private classical
communication. This setting is slightly different from that depicted in
Figure~\ref{fig:catalytic-secret-protocol}.%

\begin{figure}
[ptb]
\begin{center}
\includegraphics[
natheight=4.893100in,
natwidth=7.886200in,
height=3.141in,
width=5.047in
]%
{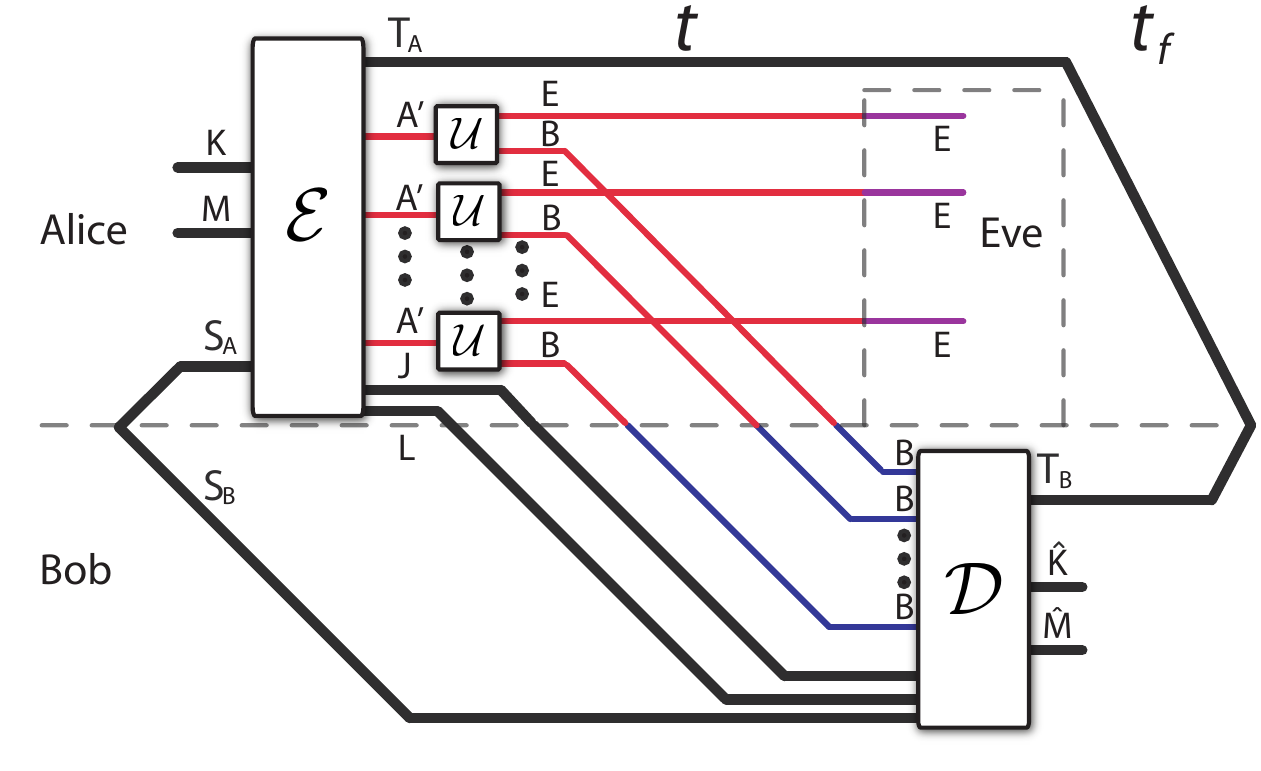}%
\caption{(Color online) The most general protocol for generating public
classical communication, private classical communication, and secret key with
the help of the same respective resources and many uses of a noisy quantum
channel. Alice begins with her public classical register $K$, her private
classical register $M$, and her share of the secret key in $S_{A}$. She
encodes according to some CPTP\ encoding map $\mathcal{E}$ that outputs a
classical register $T_{A}$, many quantum registers $A^{\prime n}$, a public
classical register $L$, and a private classical register $J$. She inputs
$A^{\prime n}$ to many uses of the noisy channel $\mathcal{N}^{A^{\prime
}\rightarrow B}$ (with isometric extension $U_{\mathcal{N}}^{A^{\prime
}\rightarrow BE}$), transmits $J$ over a noiseless private classical channel,
and transmits $L$ over a noiseless public classical channel. Bob receives the
channel outputs $B^{n}$, the private classical register $J$, and the public
classical register $L$ and performs a decoding $\mathcal{D}$ that recovers the
public and private classical information. The decoding also generates secret
key with system $T_{A}$.}%
\label{fig:catalytic-secret-protocol}%
\end{center}
\end{figure}
%EndExpansion

We prove that the converse theorem holds for a state of the following form:%
\begin{equation}
\sigma^{XYBE}\equiv\sum_{x,y}p_{X,Y}(x,y)\left\vert x\right\rangle
\left\langle x\right\vert ^{X}\otimes\left\vert y\right\rangle \left\langle
y\right\vert ^{Y}\otimes U_{\mathcal{N}}^{A^{\prime}\rightarrow BE}(\rho
_{x,y}^{A^{\prime}}), \label{eq:converse-state}%
\end{equation}
where the states $\rho_{x,y}^{A^{\prime}}$ are mixed.

We prove all three bounds in (\ref{eq:RP-bound}-\ref{eq:RPS-bound}). Alice
possesses the following classical registers:

\begin{enumerate}
\item Two public classical registers $K$ and $K_{A}$\ in the maximally
correlated state $\overline{\Phi}^{KK_{A}}$ where the dimension of both
systems is $2^{n\overline{R}}$. The register $K_{A}$ is for public classical communication.

\item Two private classical registers $M$ and $M_{A}$ in the maximally
correlated state $\overline{\Phi}^{MM_{A}}$ \ where the dimension of both
systems is $2^{n\overline{P}}$. The register $M_{A}$ is for private classical communication.

\item One share $S_{A}$\ of a secret key. The shared secret key is in the
maximally correlated state $\overline{\Phi}^{S_{A}S_{B}}$ where the dimension
of both systems is $2^{n\widetilde{S}}$. Bob possesses the other share $S_{B}$
of the secret key.
\end{enumerate}

Our convention above is that the protocol generates a resource whose rate has
an overbar and consumes a resource whose rate has a tilde.

The initial state is as follows:%
\[
\omega^{MM_{A}KK_{A}S_{A}S_{B}}\equiv\overline{\Phi}^{MM_{A}}\otimes
\overline{\Phi}^{KK_{A}}\otimes\overline{\Phi}^{S_{A}S_{B}}.
\]
She passes the registers $K_{A}$, $M_{A}$, and $S_{A}$\ into an encoding map
$\mathcal{E}^{K_{A}M_{A}S_{A}\rightarrow A^{\prime n}T_{A}LJ}$. This map
outputs a classical register $T_{A}$ of dimension $2^{n\overline{S}}$, a
public classical register $L$ of dimension $2^{n\widetilde{R}}$, a private
classical register $J$ of dimension $2^{n\widetilde{P}}$, and many quantum
systems $A^{\prime n}$ for input to the channel. The register $T_{A}$\ is for
creating a secret key with Bob. The state after this encoding map is as
follows:%
\[
\omega^{MKS_{B}A^{\prime n}T_{A}LJ}\equiv\mathcal{E}^{K_{A}M_{A}%
S_{A}\rightarrow A^{\prime n}T_{A}LJ}(\omega^{MM_{A}KK_{A}S_{A}S_{B}}).
\]
She sends the systems $A^{\prime n}$ through many uses $\mathcal{N}^{A^{\prime
n}\rightarrow B^{n}}$ of the noisy channel $\mathcal{N}^{A^{\prime}\rightarrow
B}$, transmits $L$ over a noiseless public classical channel, and transmits
$J$ over a noiseless private classical channel, producing the following state:%
\begin{equation}
\omega^{MKB^{n}E^{n}T_{A}LJS_{B}}\equiv U_{\mathcal{N}}^{A^{\prime
n}\rightarrow B^{n}E^{n}}(\omega^{MKS_{B}A^{\prime n}T_{A}LJ}),
\label{eq:post-channel-state}%
\end{equation}
where $U_{\mathcal{N}}^{A^{\prime n}\rightarrow B^{n}E^{n}}$ is the isometric
extension of the channel $\mathcal{N}^{A^{\prime n}\rightarrow B^{n}}$. The
above state is a state of the form in (\ref{eq:converse-state}) with $X\equiv
KL$ and $Y\equiv MJS_{B}T_{A}$. Bob then applies a map $\mathcal{D}%
^{B^{n}S_{B}LJ\rightarrow T_{B}\hat{M}\hat{K}}$ that outputs classical
registers $T_{B},\hat{M},\hat{K}$. Let $\omega^{\prime}$ denote the final state.

The following condition should hold for a catalytic private dynamic protocol
that transmits the public and private classical data and establishes secret
key with $\epsilon$-error:%
\begin{equation}
\left\Vert \overline{\Phi}^{M\hat{M}}\otimes\rho^{K\hat{K}E^{n}}%
\otimes\overline{\Phi}^{T_{A}T_{B}}-\left(  \omega^{\prime}\right)  ^{M\hat
{M}K\hat{K}E^{n}T_{A}T_{B}}\right\Vert _{1}\leq\epsilon
,\label{eq:+++_good-code}%
\end{equation}
where $\rho^{K\hat{K}E^{n}}$ is some state such that Tr$_{E^{n}}\{\rho
^{K\hat{K}E^{n}}\}=\overline{\Phi}^{K\hat{K}}$. Condition
(\ref{eq:+++_good-code}) implies that Alice and Bob establish maximal
classical correlations in $M$ and $\hat{M}$, in $K$ and $\hat{K}$, and in
$T_{A}$ and $T_{B}$. The following security condition should hold as well:%
\[
\left\Vert \omega^{MKE^{n}T_{A}LJS_{B}}-\pi^{MT_{A}JS_{B}}\otimes
\sigma^{KLE^{n}}\right\Vert _{1}\leq\epsilon,
\]
where $\omega^{MKE^{n}T_{A}LJS_{B}}$ is the state
in\ (\ref{eq:post-channel-state}) obtained from tracing over Bob's systems,
$\pi$ is the maximally mixed state, and $\sigma^{KLE^{n}}$ is some state on
the public registers and Eve's systems. This security criterion implies that
Eve cannot learn anything about any of the private data if she has access to
all of the public data in addition to her registers. It also implies that the
following information-theoretic bound holds:%
\begin{equation}
I\left(  MJS_{B}T_{A};E^{n}KL\right)  _{\omega}\leq\epsilon
.\label{eq_converse_cond2}%
\end{equation}
The net rate triple for the protocol is as follows: $(\overline{R}%
-\widetilde{R},\overline{P}-\widetilde{P},\overline{S}-\widetilde{S})$. The
protocol generates a resource if its corresponding rate is positive, and it
consumes a resource if its corresponding rate is negative.

We prove the first bound in (\ref{eq:RP-bound}). Consider the following chain
of inequalities:%
\begin{align*}
n\left(  \overline{R}+\overline{P}\right)   &  =I(KM;\hat{K}\hat
{M})_{\overline{\Phi}\otimes\overline{\Phi}}\\
&  \leq I(KM;\hat{K}\hat{M})_{\omega^{\prime}}+n\delta^{\prime}\\
&  \leq I\left(  KM;B^{n}LJS_{B}\right)  _{\omega}\\
&  =I(KM;B^{n}LJ|S_{B})_{\omega}\\
&  =H(KMS_{B})_{\omega}+H(B^{n}LJS_{B})_{\omega}-H(KMB^{n}LJS_{B})_{\omega
}-H(S_{B})_{\omega}\\
&  \leq H(KMS_{B})_{\omega}+H(B^{n})_{\omega}+H(LJS_{B})_{\omega}%
-H(KMB^{n}LJS_{B})_{\omega}-H(S_{B})_{\omega}\\
&  =I(KLMJS_{B};B^{n})_{\omega}-H(KLMJS_{B})_{\omega}+H(KMS_{B})_{\omega
}+H(LJS_{B})_{\omega}-H(S_{B})_{\omega}\\
&  =I(KLMJS_{B};B^{n})_{\omega}+I(KM;LJ|S_{B})_{\omega}\\
&  \leq I(KLMJS_{B}T_{A};B^{n})_{\omega}+I(KM;LJ|S_{B})_{\omega}\\
&  \leq I(XY;B^{n})_{\omega}+n(\widetilde{R}+\widetilde{P}).
\end{align*}
The first equality follows by evaluating the mutual information $I(MK;\hat
{M}\hat{K})$ on the state $\overline{\Phi}^{K\hat{K}}\otimes\overline{\Phi
}^{M\hat{M}}$. The first inequality follows from the condition in
(\ref{eq:+++_good-code}) and an application of the Alicki-Fannes' inequality
where $\delta^{\prime}$ vanishes as $\epsilon\rightarrow0$. We suppress this
term in the rest of the inequalities for convenience. The second inequality
follows from quantum data processing. The second equality follows by applying
the mutual information chain rule in (\ref{eq:mut-chain-rule}) and because
$I(KM;S_{B})_{\omega}=0$ for this protocol. The third equality follows from
expanding the conditional mutual information $I(KM;B^{n}LJ|S_{B})_{\omega}$.
The third inequality follows by subadditivity of the entropy $H(B^{n}%
LJS_{B})_{\omega}$. The fourth equality follows because%
\[
H(B^{n})_{\omega}-H(KMB^{n}LJS_{B})_{\omega}=I(KLMJS_{B};B^{n})_{\omega
}-H(KLMJS_{B})_{\omega},
\]
and the fifth equality follows because%
\[
-H(KLMJS_{B})_{\omega}+H(KMS_{B})_{\omega}+H(LJS_{B})_{\omega}-H(S_{B}%
)_{\omega}=I(KM;LJ|S_{B})_{\omega}.
\]
The fourth inequality follows from quantum data processing. The final
inequality follows from the definitions $X\equiv KL$ and $Y\equiv MJS_{B}%
T_{A}$ and because the quantum mutual information $I(KM;LJ|S_{B})_{\omega}$
can never be larger than the logarithm of the dimension of the classical
registers $LJ$.

We now prove the bound in (\ref{eq:PS-bound}). Consider the following chain of
inequalities:%
\begin{align*}
n\left(  \overline{P}+\overline{S}\right)   &  =I(MT_{A};\hat{M}%
T_{B})_{\overline{\Phi}\otimes\overline{\Phi}}\\
&  \leq I(MT_{A};\hat{M}T_{B})_{\omega^{\prime}}+n\delta^{\prime}\\
&  \leq I(MT_{A};B^{n}JLKS_{B})_{\omega}\\
&  \leq I(MT_{A};B^{n}JLKS_{B})_{\omega}-I(MT_{A}JS_{B};E^{n}KL)_{\omega
}+\epsilon\\
&  =I(MT_{A};B^{n}JS_{B}|KL)_{\omega}+I(MT_{A};KL)_{\omega}-I(MT_{A}%
JS_{B};E^{n}|KL)_{\omega}-I(MT_{A}JS_{B};KL)_{\omega}+\epsilon\\
&  =I(MT_{A}JS_{B};B^{n}|KL)_{\omega}+I(MT_{A};JS_{B}|KL)_{\omega}%
-I(B^{n};JS_{B}|KL)_{\omega}\\
&  \ \ \ \ \ \ \ +I(MT_{A};KL)_{\omega}-I(MT_{A}JS_{B};KL)_{\omega}%
-I(MT_{A}JS_{B};E^{n}|KL)_{\omega}+\epsilon\\
&  \leq I(MT_{A}JS_{B};B^{n}|KL)_{\omega}-I(MT_{A}JS_{B};E^{n}|KL)_{\omega
}+I(MT_{A};JS_{B}|KL)_{\omega}+\epsilon\\
&  \leq I(Y;B^{n}|X)_{\omega}-I(Y;E^{n}|X)_{\omega}+n(\widetilde{P}%
+\widetilde{S})+\epsilon.
\end{align*}
The first equality follows by evaluating the entropy for the state
$\overline{\Phi}^{T_{A}T_{B}}\otimes\overline{\Phi}^{M\hat{M}}$. The first
inequality follows from the condition in (\ref{eq:+++_good-code}) and an
application of the Alicki-Fannes' inequality where $\delta^{\prime}%
\rightarrow0$ as $\epsilon\rightarrow0$. We suppress this term in the rest of
the inequalities for convenience. The second inequality follows from quantum
data processing. The third inequality follows from the bound in
(\ref{eq_converse_cond2}) on Eve's information. The second and third
equalities follow from the chain rule for quantum mutual information. The
fourth inequality follows from quantum data processing $I(MT_{A}%
JS_{B};KL)_{\omega}\geq I(MT_{A};KL)_{\omega}$ and the fact that
$I(B^{n};JS_{B}|KL)_{\omega}\geq0$. The last inequality follows from the
definitions $X\equiv KL$ and $Y\equiv MJS_{B}T_{A}$ and because the mutual
information $I(MT_{A};JS_{B}|KL)_{\omega}$ can never be larger than the
logarithm of the dimensions of the registers $J,S_{B}$.

We finally prove the bound in (\ref{eq:RPS-bound}). Consider the following
chain of inequalities:%
\begin{align*}
n\left(  \overline{R}+\overline{P}+\overline{S}\right)   &  =I(KMT_{A};\hat
{K}\hat{M}T_{B})_{\overline{\Phi}\otimes\overline{\Phi}\otimes\overline{\Phi}%
}\\
&  \leq I(KMT_{A};\hat{K}\hat{M}T_{B})_{\omega^{\prime}}+n\delta^{\prime}\\
&  \leq I(KMT_{A};B^{n}JLS_{B})_{\omega}\\
&  \leq I(KMT_{A};B^{n}JLS_{B})_{\omega}-I(MT_{A}JS_{B};E^{n}|KL)_{\omega
}+\epsilon\\
&  =I(KLMT_{A}JS_{B};B^{n})_{\omega}+I\left(  JLS_{B};KMT_{A}\right)
_{\omega}-I\left(  JLS_{B};B^{n}\right)  _{\omega}\\
&  \ \ \ \ \ \ \ -I(MT_{A}JS_{B};E^{n}|KL)_{\omega}+\epsilon\\
&  \leq I(YX;B^{n})_{\omega}-I(Y;E^{n}|X)_{\omega}+n(\widetilde{R}%
+\widetilde{P}+\widetilde{S})+\epsilon.
\end{align*}
The first equality follows by evaluating the entropy for the state
$\overline{\Phi}^{K\hat{K}}\otimes\overline{\Phi}^{M\hat{M}}\otimes
\overline{\Phi}^{T_{A}T_{B}}$. The first inequality follows from the condition
in (\ref{eq:+++_good-code}) and an application of the Alicki-Fannes'
inequality where $\delta^{\prime}\rightarrow0$ as $\epsilon\rightarrow0$. We
suppress this term in the rest of the inequalities for convenience. The second
inequality follows from quantum data processing. The third inequality follows
from the condition in (\ref{eq_converse_cond2}) (note that
$I(MT_{A}JS_{B};E^{n}|KL)_{\omega} + I(MT_{A}JS_{B};KL)_{\omega} = 
I(MT_{A}JS_{B};E^{n}KL)_{\omega}$ from the chain rule and both terms on the
LHS are non-negative). The second equality follows
from the chain rule for quantum mutual information. The final inequality
follows from the definitions $X\equiv KL$ and $Y\equiv MJS_{B}T_{A}$, because
$I\left(  JLS_{B};B^{n}\right)  _{\omega}\geq0$, and because the mutual
information $I\left(  JLS_{B};KMT_{A}\right)  _{\omega}$ can never be larger
than the logarithm of the dimensions of the registers $J,L,S_{B}$.

\section{The private dynamic capacity formula}

\label{sec:dynamic-cap-formula}The private dynamic capacity formula is a
particular formula that is relevant in the computation of the private dynamic
capacity region. If this formula is additive for a particular channel, then
the computation of the region simplifies, in the sense that it requires an optimization
over a single use of the channel, rather than with an infinite number of
them~\cite{BV04}. The reasoning for this is similar to our discussion in
Section~6 of Ref.~\cite{WH10}, appealing to ideas from Pareto-optimal
trade-off analysis (see Chapter~4\ of Ref.~\cite{BV04}). Thus, we keep the
discussion to a minimum here and instead refer the reader to Section~6 of
Ref.~\cite{WH10} for further explanations.

\begin{definition}
[Private Dynamic Capacity Formula]The private dynamic capacity formula of a
quantum channel $\mathcal{N}$ is as follows:%
\begin{equation}
P_{\lambda,\mu}\left(  \mathcal{N}\right)  \equiv\max_{\sigma}I\left(
YX;B\right)  _{\sigma}+\lambda\left[  I\left(  Y;B|X\right)  _{\sigma
}-I\left(  Y;E|X\right)  _{\sigma}\right]  +\mu\left[  I\left(  YX;B\right)
_{\sigma}-I\left(  Y;E|X\right)  _{\sigma}\right]  , \label{eq:objective}%
\end{equation}
where $\lambda,\mu\geq0$.
\end{definition}

\begin{definition}
The regularized private dynamic capacity formula is as follows:%
\[
P_{\lambda,\mu}^{\text{reg}}\left(  \mathcal{N}\right)  \equiv\lim
_{n\rightarrow\infty}\frac{1}{n}P_{\lambda,\mu}\left(  \mathcal{N}^{\otimes
n}\right)  .
\]
\end{definition}

\begin{lemma}
\label{thm:CEQ-single-letter}Suppose the private dynamic capacity formula is
additive for channels $\mathcal{N}$ and $\mathcal{M}$:%
\[
P_{\lambda,\mu}\left(  \mathcal{N\otimes M}\right)  =P_{\lambda,\mu}\left(
\mathcal{N}\right) +P_{\lambda,\mu}\left(
\mathcal{M}\right) .
\]
Then the regularized private dynamic capacity formula is equal to the private
dynamic capacity formula:%
\[
P_{\lambda,\mu}^{\text{reg}}\left(  \mathcal{N}\right)  =P_{\lambda,\mu
}\left(  \mathcal{N}\right)  .
\]
In this sense, the regularized formula \textquotedblleft
single-letterizes\textquotedblright\ and it is not necessary to take the limit.
\end{lemma}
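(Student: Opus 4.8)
The plan is to show that the exact additivity hypothesis upgrades, by a one-line induction, to the multiplicative statement $P_{\lambda,\mu}(\mathcal{N}^{\otimes n}) = n\,P_{\lambda,\mu}(\mathcal{N})$ for every $n$, after which the regularized formula collapses immediately. Before running the induction I would record that one direction of additivity is automatic and requires no hypothesis: in the maximization defining $P_{\lambda,\mu}(\mathcal{N}\otimes\mathcal{M})$ one may restrict to product input ensembles, i.e.\ states of the form \eqref{eq:main-theorem-state} with $X=X_1X_2$, $Y=Y_1Y_2$ drawn independently and $\rho_{x_1x_2,y_1y_2}^{A^{\prime}}=\rho^{(1)}_{x_1,y_1}\otimes\rho^{(2)}_{x_2,y_2}$. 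The resulting global state factors as $\sigma_1\otimes\sigma_2$, and each entropic term in the objective, namely $I(YX;B)$, $I(Y;B|X)$, and $I(Y;E|X)$, is additive on such a product state. This yields the superadditive bound $P_{\lambda,\mu}(\mathcal{N}\otimes\mathcal{M})\geq P_{\lambda,\mu}(\mathcal{N})+P_{\lambda,\mu}(\mathcal{M})$ for free, so the genuine content of the additivity assumption is only the matching upper bound.

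Next I would carry out the induction on $n$. The base case $n=1$ is immediate. For the inductive step, assuming $P_{\lambda,\mu}(\mathcal{N}^{\otimes(n-1)})=(n-1)\,P_{\lambda,\mu}(\mathcal{N})$, I would invoke the additivity hypothesis with the second channel taken to be the tensor power $\mathcal{N}^{\otimes(n-1)}$, so that
\[
P_{\lambda,\mu}(\mathcal{N}^{\otimes n}) = P_{\lambda,\mu}(\mathcal{N}) + P_{\lambda,\mu}(\mathcal{N}^{\otimes(n-1)}) = n\,P_{\lambda,\mu}(\mathcal{N}).
\]
Substituting this into the definition of the regularized formula gives $\tfrac{1}{n}P_{\lambda,\mu}(\mathcal{N}^{\otimes n})=P_{\lambda,\mu}(\mathcal{N})$ for every $n$, so the sequence is in fact constant and its limit equals $P_{\lambda,\mu}(\mathcal{N})$. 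Notably, once exact additivity is available no convergence argument (such as Fekete's lemma applied to the superadditive sequence above) is needed at all.

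I do not expect any serious obstacle, since the lemma is essentially bookkeeping once additivity is granted; the only delicate point is the legitimacy of the induction. Specifically, the hypothesis is phrased for a pair of channels $\mathcal{N}$ and $\mathcal{M}$, and the induction applies it repeatedly with one factor equal to a tensor power $\mathcal{N}^{\otimes(n-1)}$ of the other. I would therefore make explicit that the hypothesis is to be read as additivity of $P_{\lambda,\mu}$ under tensoring of the relevant channels---which is precisely what is established case by case for entanglement-breaking channels in Section~\ref{sec:EB} and for Hadamard channels in Section~\ref{sec:Hadamard}---so that the induction is valid in exactly the settings where the lemma is later invoked.
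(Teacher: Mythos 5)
Your proposal is correct and follows essentially the same route as the paper, which defers to the induction argument of Lemma~1 in Ref.~\cite{WH10}: exact additivity iterates to $P_{\lambda,\mu}(\mathcal{N}^{\otimes n})=nP_{\lambda,\mu}(\mathcal{N})$, so the regularized sequence is constant. Your explicit caveat that the hypothesis must be read as additivity against the relevant tensor powers (which is exactly what Sections~\ref{sec:EB} and~\ref{sec:Hadamard} supply, since those classes are closed under the tensor products appearing in the induction) is a worthwhile clarification of a point the paper leaves implicit.
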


\begin{proof}
The proof is similar to the proof of Lemma~1 in Ref.~\cite{WH10}.
\end{proof}

\begin{theorem}
Single-letterization of the private dynamic capacity formula implies that the
computation of the Pareto optimal trade-off surface of the private dynamic
capacity region requires an optimization over a single channel use.
\end{theorem}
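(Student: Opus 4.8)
The plan is to show that single-letterization reduces the boundary computation to a family of convex programs parametrized by the weights $\lambda,\mu\geq 0$ appearing in the private dynamic capacity formula $P_{\lambda,\mu}(\mathcal{N})$. The theorem is essentially a consequence of weighted-sum scalarization from Pareto-optimal trade-off analysis, exactly as in Section~6 of Ref.~\cite{WH10}, so I would organize the argument around three observations. First, I would establish that the capacity region $\mathcal{C}_{\text{RPS}}(\mathcal{N})$ is a closed convex set. Convexity follows because the one-shot, one-state regions $\mathcal{C}_{\text{RPS},\sigma}^{(1)}$ are polyhedra (intersections of the three half-spaces in (\ref{eq:RP-bound}--\ref{eq:RPS-bound})), and taking the union over $\sigma$, then the regularization $\frac{1}{k}\mathcal{C}_{\text{RPS}}^{(1)}(\mathcal{N}^{\otimes k})$ and closure in (\ref{eq:multi-letter}), yields a convex set by a standard time-sharing / state-mixing argument together with the multi-letter structure. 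Closedness is built into the definition via the overbar.

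Second, I would show that, under single-letterization (i.e.\ $P_{\lambda,\mu}^{\text{reg}}=P_{\lambda,\mu}$, which holds by Lemma~\ref{thm:CEQ-single-letter} when the formula is additive), each boundary point of $\mathcal{C}_{\text{RPS}}(\mathcal{N})$ is a maximizer of some weighted linear functional $R+\lambda(P+S)+\mu(R+P+S)$ over the region, and the value of that functional at its maximum is precisely $P_{\lambda,\mu}(\mathcal{N})$. This is the key translation: rewriting the objective in (\ref{eq:objective}) as a sum of the three right-hand sides of (\ref{eq:RP-bound}--\ref{eq:RPS-bound}) weighted by $1$, $\lambda$, and $\mu$ respectively, one sees that maximizing (\ref{eq:objective}) over $\sigma$ computes the support function of the capacity region in the direction $(1+\mu,\ \lambda+\mu,\ \lambda+\mu)$ restricted to $\lambda,\mu\geq 0$. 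Because the region is closed and convex, the supporting-hyperplane theorem guarantees that the entire Pareto-optimal (boundary) surface is swept out as $(\lambda,\mu)$ ranges over the nonnegative quadrant, so knowing $P_{\lambda,\mu}(\mathcal{N})$ for all such $(\lambda,\mu)$ determines the boundary.

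Third, I would argue that for each fixed $(\lambda,\mu)$ the optimization in (\ref{eq:objective}) is itself a convex (or concave-maximization) program. Here one fixes the input alphabet sizes and the channel $\mathcal{N}$, and observes that the objective is a sum of conditional mutual information differences of the form $I(Y;B|X)_{\sigma}-I(Y;E|X)_{\sigma}$ evaluated on the classical--quantum state (\ref{eq:main-theorem-state}); these quantities are concave functionals of the underlying probability distribution $p_{X,Y}$ and the ensemble $\{\rho_{x,y}^{A^{\prime}}\}$ over the appropriate convex parameter set, so maximizing over $\sigma$ is a concave maximization, solvable efficiently by standard convex-programming methods~\cite{BV04}.

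The main obstacle I anticipate is the concavity claim in the third step: the relevant objects are \emph{differences} of mutual informations (a coherent-information-like ``private information'' term $I(Y;B|X)-I(Y;E|X)$), and such differences are not concave in general for arbitrary channels. This is exactly why the hypothesis of single-letterization (equivalently, additivity of $P_{\lambda,\mu}$) is indispensable: without it one would need the regularized limit $P_{\lambda,\mu}^{\text{reg}}$, whose optimization is over an unbounded sequence of tensor powers and hence not a finite convex program. I would therefore lean on Lemma~\ref{thm:CEQ-single-letter} to collapse the regularization, and then note that the concrete channels treated later (entanglement-breaking, Hadamard, erasure) are precisely the cases where this concavity and additivity can be verified, making the program genuinely tractable; for the general statement I would follow the Pareto-optimality framework of Ref.~\cite{WH10} and keep the argument at the level of the support-function characterization rather than re-deriving the convexity of each entropic term.
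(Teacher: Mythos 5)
Your overall strategy---weighted-sum scalarization of a convex region, identifying $P_{\lambda,\mu}(\mathcal{N})$ with a support function of the capacity region, and invoking single-letterization to collapse the regularization in (\ref{eq:multi-letter})---is exactly the route the paper takes: its proof is a one-line deferral to Theorem~2 of Ref.~\cite{WH10}, and your three-step outline is a reconstruction of that argument.

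There is, however, a concrete slip in your second step that breaks the support-function identification as you state it. The three constraints (\ref{eq:RP-bound}--\ref{eq:RPS-bound}) bound $R+P$, $P+S$, and $R+P+S$, so the scalarized objective must be $\left(R+P\right)+\lambda\left(P+S\right)+\mu\left(R+P+S\right)$; since the map $(R,P,S)\mapsto(R+P,\,P+S,\,R+P+S)$ is invertible, the maximum of this functional over the one-shot, one-state region is attained when all three constraints are tight and equals precisely the bracketed expression in (\ref{eq:objective}). You instead wrote $R+\lambda\left(P+S\right)+\mu\left(R+P+S\right)$, dropping the $P$ from the first constraint. In the coordinates $u=R+P$, $v=P+S$, $w=R+P+S$ your functional equals $\left(1+\mu\right)w+\left(\lambda-1\right)v$, which places no weight on $u$ and is unbounded above on the region whenever $\lambda<1$ (take $v\rightarrow-\infty$), so its maximum does not equal $P_{\lambda,\mu}(\mathcal{N})$. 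The correct normal vector is $\left(1+\mu,\ 1+\lambda+\mu,\ \lambda+\mu\right)$, not $\left(1+\mu,\ \lambda+\mu,\ \lambda+\mu\right)$. A second, softer point: single-letterization does not make the entropic objective concave in the ensemble---your own worry in the third step is well founded and is not resolved by additivity. What additivity buys, via Lemma~\ref{thm:CEQ-single-letter}, is that the optimization is over single-copy ensembles on $A^{\prime}$ rather than over all tensor powers, so the scalarized program is finite-dimensional; the ``tractable convex optimization'' refers to the scalarization of the convex region in the sense of Chapter~4 of Ref.~\cite{BV04}, and your closing instinct to remain at the level of the support-function characterization, as Ref.~\cite{WH10} does, is the right one.
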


\begin{proof}
The proof exploits the same techniques as the proof of Theorem~2 in
Ref.~\cite{WH10}.
\end{proof}

\subsection{Special cases of the private dynamic capacity formula}

We now consider several special cases of the private dynamic capacity formula.
These special cases have similar geometric interpretations as discussed in
Section~6.1 of Ref.~\cite{WH10}. The first case corresponds to considering a
supporting hyperplane of the capacity region with normal vector $\left(
1,1,0\right)  $, the second corresponds to considering a supporting hyperplane
with normal vector $\left(  0,1,1\right)  $, and the last a supporting
hyperplane with normal vector $\left(  1,1,1\right)  $. Each of these choices
corresponds to singling out only one of the inequalities in
Theorem~\ref{thm:main-theorem}\ and maximizing with respect to that inequality.

\begin{corollary}
The private dynamic capacity formula is equivalent to the HSW\ classical
capacity formula~\cite{Hol98,SW97}\ when $\lambda,\mu=0$, in the sense that%
\[
\max_{\sigma}I\left(  YX;B\right)  _{\sigma}=\max_{\rho^{XA^{\prime}}}I\left(
X;B\right)  _{\rho},
\]
where%
\[
\rho^{XA^{\prime}}\equiv\sum_{x}p_{X}\left(  x\right)  \left\vert
x\right\rangle \left\langle x\right\vert ^{X}\otimes\rho_{x}^{A^{\prime}},
\]
and $\sigma$ is a state of the form in Theorem~\ref{thm:main-theorem}.
\end{corollary}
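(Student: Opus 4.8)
The plan is to recognize that both sides of the claimed identity are maximizations of the same quantity---the Holevo information of an ensemble of channel outputs---differing only in how the classical index of that ensemble is labeled. On the left-hand side, the register pair $(X,Y)$ plays the role of a single classical variable correlated with the channel input, so $I\left(YX;B\right)_{\sigma}$ is precisely the Holevo information of the ensemble $\{p_{X,Y}(x,y),\mathcal{N}^{A^{\prime}\rightarrow B}(\rho_{x,y}^{A^{\prime}})\}$ indexed by the pairs $(x,y)$. On the right-hand side, $I\left(X;B\right)_{\rho}$ is the Holevo information of the ensemble $\{p_{X}(x),\mathcal{N}^{A^{\prime}\rightarrow B}(\rho_{x}^{A^{\prime}})\}$ indexed by the single variable $x$. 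Since a pair $(x,y)$ is just another label, the two optimizations range over the same collection of ensembles.

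I would establish the equality by proving both inequalities. For the $\leq$ direction, I would take an arbitrary state $\sigma^{XYBE}$ of the form in Theorem~\ref{thm:main-theorem} and define a new classical variable $X^{\prime}$ whose values are the pairs $(x,y)$, with distribution $p_{X^{\prime}}(x,y)=p_{X,Y}(x,y)$ and conditional input states $\rho_{(x,y)}^{A^{\prime}}\equiv\rho_{x,y}^{A^{\prime}}$. The resulting state $\rho^{X^{\prime}A^{\prime}}$ has exactly the right-hand-side form, and because tracing out $E$ and merging the two commuting classical registers $X$ and $Y$ leaves the reduced state relating the joint classical label to $B$ unchanged, one checks directly that $I\left(YX;B\right)_{\sigma}=I\left(X^{\prime};B\right)_{\rho}$. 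Hence every value attained on the left is also attained on the right. For the $\geq$ direction, I would start from any right-hand-side ensemble $\{p_{X}(x),\rho_{x}^{A^{\prime}}\}$ and embed it as a left-hand-side state by letting $Y$ be a trivial register carrying a single deterministic value $y_{0}$, so that $p_{X,Y}(x,y)=p_{X}(x)\,\delta_{y,y_{0}}$ and $\rho_{x,y_{0}}^{A^{\prime}}=\rho_{x}^{A^{\prime}}$; since a deterministic $Y$ carries no information about anything, $I\left(YX;B\right)_{\sigma}=I\left(X;B\right)_{\rho}$.

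The verification amounts to the observation that the quantum mutual information of a classical-quantum state depends only on the underlying ensemble of probabilities and output states, and not on how the classical labels are partitioned into one or more registers. This follows immediately from the definition of $I\left(YX;B\right)_{\sigma}$ through the entropy $H\left(B\right)_{\sigma}$ and the conditional entropy of $B$ given the classical registers. There is no genuine obstacle: the states $\rho_{x,y}^{A^{\prime}}$ are already permitted to be mixed in both formulas, so no purification or convexity argument is needed, and the isometric extension $U_{\mathcal{N}}^{A^{\prime}\rightarrow BE}$ enters only through its $B$-marginal $\mathcal{N}^{A^{\prime}\rightarrow B}$, so the environment $E$ plays no role once $\lambda=\mu=0$ eliminates every term involving $E$. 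The only point requiring care is the short entropy computation confirming that the left-hand-side objective $I\left(YX;B\right)_{\sigma}$ coincides with the Holevo quantity $I\left(X^{\prime};B\right)_{\rho}$ under this relabeling.
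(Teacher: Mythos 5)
Your argument is correct and is essentially the paper's own proof, which consists of the single observation that one may redefine the joint classical variable $XY$ in the first formula to be the classical variable $X$ in the second. Your two-directional relabeling (merging $(x,y)$ into one label, and embedding via a trivial $Y$) simply spells out that observation in more detail.
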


\begin{proof}
The proof of this statement follows merely by redefining the joint classical
variable $XY$ in the first formula to be the classical variable $X$ in the
second formula.
\end{proof}

\begin{corollary}
The private dynamic capacity formula is equivalent to the
Devetak-Cai-Winter-Yeung private classical capacity formula
\cite{Devetak03,CWY04}\ in the limit where $\lambda\rightarrow\infty$ and
$\mu$ is fixed, in the sense that%
\[
\max_{\sigma}\left[  I\left(  Y;B|X\right)  _{\sigma}-I\left(  Y;E|X\right)
_{\sigma}\right]  =\max_{\rho^{XA^{\prime}}}\left[  I\left(  X;B\right)
_{\rho}-I\left(  X;E\right)  _{\rho}\right]  ,
\]
where $\rho$ is a state of the form in the above corollary and $\sigma$ is a
state of the form in Theorem~\ref{thm:main-theorem}.
\end{corollary}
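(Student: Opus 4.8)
The plan is to prove that the two maximizations coincide by establishing both inequalities separately, exploiting the fact that the conditioning register $X$ can only help through a convex combination that the simpler single-register ensemble already saturates. Throughout, I keep in mind that in the quantity $I(X;B)_{\rho}-I(X;E)_{\rho}$ the systems $B$ and $E$ arise from sending the signal states $\rho_{x}^{A^{\prime}}$ through the isometric extension $U_{\mathcal{N}}^{A^{\prime}\rightarrow BE}$, exactly as in the state $\sigma$ of Theorem~\ref{thm:main-theorem}.

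First I would prove the inequality in which the left-hand maximum dominates the right-hand one. Starting from an optimal state $\rho^{XA^{\prime}}=\sum_{x}p_{X}(x)\left\vert x\right\rangle \left\langle x\right\vert ^{X}\otimes\rho_{x}^{A^{\prime}}$ for the right-hand side, I build a state $\sigma$ of the theorem form in which the register $X$ is trivial (a single value with probability one) and the register $Y$ carries the ensemble, namely $\sigma=\sum_{x}p_{X}(x)\left\vert x\right\rangle \left\langle x\right\vert ^{Y}\otimes U_{\mathcal{N}}^{A^{\prime}\rightarrow BE}(\rho_{x}^{A^{\prime}})$. Since conditioning on a trivial $X$ is vacuous, $I(Y;B|X)_{\sigma}-I(Y;E|X)_{\sigma}=I(Y;B)_{\sigma}-I(Y;E)_{\sigma}$, and after the obvious relabeling $Y\mapsto X$ this equals $I(X;B)_{\rho}-I(X;E)_{\rho}$. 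Taking the maximum over $\sigma$ then shows the left-hand max is at least the right-hand max.

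The reverse inequality is the substantive direction, and its key step is a convexity decomposition. Because $X$ is a classical register and $\sigma$ is block-diagonal in the $X$ basis, each conditional mutual information splits into an $X$-average: using $I(Y;B|X)=H(YX)+H(BX)-H(X)-H(YBX)$ and the analogous expansion for $E$, one obtains $I(Y;B|X)_{\sigma}-I(Y;E|X)_{\sigma}=\sum_{x}p_{X}(x)\left[I(Y;B)_{\sigma_{x}}-I(Y;E)_{\sigma_{x}}\right]$, where $\sigma_{x}=\sum_{y}p_{Y|X}(y|x)\left\vert y\right\rangle \left\langle y\right\vert ^{Y}\otimes U_{\mathcal{N}}^{A^{\prime}\rightarrow BE}(\rho_{x,y}^{A^{\prime}})$ is the state conditioned on $X=x$. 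Each $\sigma_{x}$ is itself a state of the right-hand-side form, with $Y$ playing the role of the classical register and the mixed states $\rho_{x,y}^{A^{\prime}}$ playing the role of the $\rho_{x}^{A^{\prime}}$, so every bracketed term is bounded above by $\max_{\rho}[I(X;B)_{\rho}-I(X;E)_{\rho}]$. A convex combination cannot exceed this bound, and maximizing over $\sigma$ yields the reverse inequality; together the two give the claimed equality.

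There is no serious analytic obstacle; the only points requiring care are (i) checking that each conditional state $\sigma_{x}$ genuinely lies in the admissible class on the right-hand side, which is exactly where it matters that both formulations permit \emph{mixed} signal states $\rho_{x,y}^{A^{\prime}}$, and (ii) justifying the decomposition of the conditional mutual informations into an $X$-average, which is routine given the classicality of $X$. The interpretation "$\lambda\rightarrow\infty$ with $\mu$ fixed" is then transparent: after dividing the objective in (\ref{eq:objective}) by $\lambda$, the term $I(Y;B|X)_{\sigma}-I(Y;E|X)_{\sigma}$ dominates, so the optimization reduces to maximizing precisely this quantity, which the corollary identifies with the Devetak--Cai--Winter--Yeung private classical capacity formula.
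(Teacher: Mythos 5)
Your proposal is correct and follows essentially the same route as the paper's proof: for one direction you collapse $X$ to a point mass and load the optimal ensemble into $Y$ (the paper's choice $p_X(x)=\delta_{x,x_0}$, $p_{Y|X}(y|x_0)=p_X^{\ast}(y)$, $\rho_{x_0,y}=\rho_y^{\ast}$), and for the other you expand the conditional mutual informations as an $X$-average of terms each admissible for the right-hand side and invoke that an average never exceeds a maximum. Your added remark that the conditional states $\sigma_x$ remain admissible because both formulations allow mixed signal states is a worthwhile point the paper leaves implicit.
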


\begin{proof}
The inequality LHS $\geq$ RHS follows by choosing the distribution
$p_{X,Y}\left(  x,y\right)  =p_{X}\left(  x\right)  p_{Y|X}\left(  y|x\right)
$ with $p_{Y|X}\left(  y|x\right)  =p_{X}^{\ast}\left(  y\right)  $ and
$p_{X}\left(  x\right)  =\delta_{x,x_{0}}$ and choosing the conditional
density operators $\rho_{x_{0},y}^{A^{\prime}}=\left(  \rho_{x}^{\ast}\right)
^{A^{\prime}}$ where the asterisked quantities are optimal for the RHS. The
inequality LHS $\leq$ RHS follows because the quantity $I\left(  Y;B|X\right)
_{\sigma}-I\left(  Y;E|X\right)  _{\sigma}=\sum_{x}p_{X}\left(  x\right)
\left[  I\left(  Y;B\right)  _{\sigma_{x}}-I\left(  Y;E\right)  _{\sigma_{x}%
}\right]  $ and an average is always less than a maximum.
\end{proof}

\begin{corollary}
The private dynamic capacity formula is equivalent to the HSW\ classical
capacity formula in the limit where $\mu\rightarrow\infty$ and $\lambda$ is
fixed, in the sense that%
\[
\max_{\sigma}\left[  I\left(  YX;B\right)  _{\sigma}-I\left(  Y;E|X\right)
_{\sigma}\right]  =\max_{\left\{  p_{X}\left(  x\right)  ,\psi_{x}\right\}
}I\left(  X;B\right)  .
\]
\end{corollary}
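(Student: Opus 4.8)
The plan is to observe that, as $\mu\rightarrow\infty$ with $\lambda$ held fixed, the objective (\ref{eq:objective}) is dominated by the term $\mu\left[I\left(YX;B\right)_{\sigma}-I\left(Y;E|X\right)_{\sigma}\right]$, so that single-letterizing this limit amounts to evaluating $\max_{\sigma}\left[I\left(YX;B\right)_{\sigma}-I\left(Y;E|X\right)_{\sigma}\right]$. This is the supporting hyperplane with normal vector $\left(1,1,1\right)$ and isolates the bound (\ref{eq:RPS-bound}). I would establish the claimed identity by proving the two opposite inequalities.

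For the inequality LHS $\leq$ RHS, I would first discard the conditional mutual information $I\left(Y;E|X\right)_{\sigma}$, which is non-negative by strong subadditivity of the von Neumann entropy, obtaining $I\left(YX;B\right)_{\sigma}-I\left(Y;E|X\right)_{\sigma}\leq I\left(YX;B\right)_{\sigma}$ for every admissible $\sigma$. Maximizing and invoking the first corollary above, namely $\max_{\sigma}I\left(YX;B\right)_{\sigma}=\max_{\rho^{XA'}}I\left(X;B\right)_{\rho}$, reduces the right-hand side to a Holevo quantity maximized over classical--quantum states with possibly mixed signal states $\rho_{x}^{A'}$. The standard HSW argument then lets me restrict to pure signal states: refining each $\rho_{x}$ into a pure-state decomposition and relabeling the classical index leaves the average channel output unchanged while, by concavity of entropy, not increasing the subtracted term $\sum_{x}p_{X}\left(x\right)S\left(\mathcal{N}\left(\rho_{x}\right)\right)$, so the Holevo quantity does not decrease. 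This gives $\max_{\sigma}I\left(YX;B\right)_{\sigma}=\max_{\left\{p_{X}\left(x\right),\psi_{x}\right\}}I\left(X;B\right)$, the desired upper bound.

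For the reverse inequality LHS $\geq$ RHS, I would exhibit a feasible state that saturates the bound. Given any pure-state ensemble $\left\{p_{X}\left(x\right),\psi_{x}\right\}$ attaining the HSW maximum, I take $Y$ to be deterministic (a single value $y_{0}$) and set $\rho_{x,y_{0}}^{A'}=\psi_{x}$ in the state (\ref{eq:main-theorem-state}). Because $Y$ is then constant, $I\left(Y;E|X\right)_{\sigma}=0$, while $I\left(YX;B\right)_{\sigma}=I\left(X;B\right)$ evaluated on the outputs $\mathcal{N}\left(\psi_{x}\right)$. Hence $I\left(YX;B\right)_{\sigma}-I\left(Y;E|X\right)_{\sigma}=I\left(X;B\right)$ for this choice, and maximizing over ensembles yields LHS $\geq$ RHS.

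I do not expect a genuinely hard step here: the essential content is simply that the term $-I\left(Y;E|X\right)_{\sigma}$ carries the wrong sign for the upper bound and may be dropped, after which the reduction of $\max_{\sigma}I\left(YX;B\right)_{\sigma}$ to a Holevo quantity (the first corollary together with the pure-state optimality of HSW) finishes the argument. The only point requiring care is confirming that discarding $I\left(Y;E|X\right)_{\sigma}$ loses no slack; this is guaranteed because the lower-bound construction makes that conditional mutual information vanish identically, so the two inequalities meet. I would also remark that this mirrors the first corollary, where the same collapse of the joint variable $XY$ into $X$ drives the proof.
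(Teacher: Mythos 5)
Your proposal is correct and follows essentially the same route as the paper: the lower bound comes from taking $Y$ deterministic with the optimal pure-state ensemble (so that $I\left(Y;E|X\right)_{\sigma}=0$), and the upper bound from dropping the non-negative term $I\left(Y;E|X\right)_{\sigma}$ and bounding $I\left(YX;B\right)_{\sigma}$ by the HSW quantity. The only difference is that you spell out the pure-state refinement/concavity step that the paper leaves implicit, which is a harmless elaboration.
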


\begin{proof}
The inequality LHS $\geq$ RHS follows by choosing the distribution
$p_{X,Y}\left(  x,y\right)  =p_{X}\left(  x\right)  p_{Y|X}\left(  y|x\right)
$ with $p_{Y|X}\left(  y|x\right)  =\delta_{y,y_{0}}$ and $p_{X}\left(
x\right)  =p_{X}^{\ast}\left(  x\right)  $ and choosing the conditional
density operators $\rho_{x,y_{0}}^{A^{\prime}}=\left(  \rho_{x}^{\ast}\right)
^{A^{\prime}}$. The inequality LHS $\leq$ RHS follows because $$
I\left(  YX;B\right)  _{\sigma}-I\left(  Y;E|X\right)  _{\sigma}  \leq
I\left(  YX;B\right)  _{\sigma}\leq\max_{\left\{  p_{X}\left(  x\right)
,\psi_{x}\right\}  }I\left(  X;B\right) . $$
\end{proof}

\subsubsection{Comparison between public-private and classical-quantum
regions}

We now compare the Devetak-Shor classical-quantum trade-off
formula~\cite{DS03} with a special case of our above formula that applies to a
trade-off between public and private classical communication. We should expect
these two formulas to be comparable from the Collins-Popescu analogy because
no entanglement or secret key is involved. The result is that the
public-private region is generally larger than the classical-quantum region,
but the two regions are equivalent for degradable quantum channels.

First consider the following refinement of the Devetak-Shor\ formula (see
Section~IV-A-4 of Ref.~\cite{HW08ITIT}):%
\[
f_{\mu}\left(  \mathcal{N}\right)  \equiv\max_{\rho}I\left(  X;B\right)
_{\rho}+I\left(  A\rangle BX\right)  _{\rho}+\mu I\left(  A\rangle BX\right)
_{\rho},
\]
where $\rho^{XAB}$ is a state of the form%
\[
\rho^{XAB}\equiv\sum_{x}p_{X}\left(  x\right)  \left\vert x\right\rangle
\left\langle x\right\vert ^{X}\otimes\mathcal{N}^{A^{\prime}\rightarrow
B}(\phi_{x}^{AA^{\prime}}).
\]
The formula for the public-private trade-off is a special case of the private
dynamic capacity formula:%
\[
P_{\mu}\left(  \mathcal{N}\right)  \equiv\max_{\sigma}I\left(  X;B\right)
_{\sigma}+I\left(  Y;B|X\right)  _{\sigma}-I\left(  Y;E|X\right)  _{\sigma
}+\mu\left[  I\left(  Y;B|X\right)  _{\sigma}-I\left(  Y;E|X\right)  _{\sigma
}\right]  ,
\]
where $\sigma^{XYBE}$ is a state of the form%
\[
\sigma^{XYBE}\equiv\sum_{x,y}p_{X,Y}\left(  x,y\right)  \left\vert
x\right\rangle \left\langle x\right\vert ^{X}\otimes\left\vert y\right\rangle
\left\langle y\right\vert ^{Y}\otimes U_{\mathcal{N}}^{A^{\prime}\rightarrow
BE}(\rho_{x,y}^{A^{\prime}}),
\]

\begin{lemma}
The classical-quantum trade-off formula is never greater than the
public-private trade-off formula:%
\[
f_{\mu}\left(  \mathcal{N}\right)  \leq P_{\mu}\left(  \mathcal{N}\right)  .
\]
\end{lemma}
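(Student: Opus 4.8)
The plan is to show that every state feasible for the classical-quantum formula $f_{\mu}$ can be transformed into a feasible state for the public-private formula $P_{\mu}$ with the same objective value. Since $P_{\mu}(\mathcal{N})$ is a maximization over all admissible $\sigma$, producing one such $\sigma$ for the optimal $\rho$ of $f_{\mu}$ immediately yields $f_{\mu}(\mathcal{N})\leq P_{\mu}(\mathcal{N})$.

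First I would fix a state $\rho^{XAB}$ of the Devetak-Shor form, built from pure states $\phi_{x}^{AA^{\prime}}$ sent through $\mathcal{N}$, and let $\left\vert \psi_{x}\right\rangle^{ABE} = U_{\mathcal{N}}^{A^{\prime}\rightarrow BE}\left\vert \phi_{x}\right\rangle^{AA^{\prime}}$ be the corresponding pure state on $ABE$. To construct the candidate $\sigma^{XYBE}$, I would measure the reference system $A$ of each $\phi_{x}$ in a fixed orthonormal basis $\{\left\vert y\right\rangle\}$, recording the outcome in a classical register $Y$ with conditional distribution $p_{Y|X}(y|x)$ and post-measurement input states $\rho_{x,y}^{A^{\prime}}$ on $A^{\prime}$. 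Because $U_{\mathcal{N}}$ acts only on $A^{\prime}$, this measurement commutes with the channel, so $\sigma^{XYBE}$ has precisely the form required in Theorem~\ref{thm:main-theorem}. The decisive structural fact is that, since each $\phi_{x}^{AA^{\prime}}$ is pure and the measurement is rank-one, each conditional input $\rho_{x,y}^{A^{\prime}}$ is \emph{pure}, and hence $\sigma_{x,y}^{BE}=U_{\mathcal{N}}(\rho_{x,y}^{A^{\prime}})$ is a pure state on $BE$.

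The argument then rests on two identities. The first is immediate: tracing out $Y$ and $E$ shows that the $B$-marginal of $\sigma$ conditioned on $X=x$ is $\mathcal{N}(\rho_{x}^{A^{\prime}})$ with $\rho_{x}^{A^{\prime}}=\text{Tr}_{A}\{\phi_{x}^{AA^{\prime}}\}$, which coincides with the $B$-marginal of $\rho$, so that $I(X;B)_{\sigma}=I(X;B)_{\rho}$. The second identity is where the purity of the conditional states is essential. Since $\sigma_{x,y}^{BE}$ is pure, $H(B)_{\sigma_{x,y}}=H(E)_{\sigma_{x,y}}$, so the conditional entropy terms cancel and
\[
I(Y;B|X)_{\sigma}-I(Y;E|X)_{\sigma}=H(B|X)_{\sigma}-H(E|X)_{\sigma}.
\]
The right-hand side equals $\sum_{x}p_{X}(x)\left[H(B)_{\psi_{x}}-H(E)_{\psi_{x}}\right]$, and using $H(E)_{\psi_{x}}=H(AB)_{\psi_{x}}$ (purity of $\psi_{x}^{ABE}$) this is exactly the coherent information $I(A\rangle BX)_{\rho}$. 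Substituting both identities into the definition of $P_{\mu}$ reproduces the $f_{\mu}$ objective evaluated at $\rho$; taking the maximum over $\sigma$ completes the proof.

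The step I expect to demand the most care is verifying that the conditional inputs $\rho_{x,y}^{A^{\prime}}$ are genuinely pure and that measuring $A$ commutes with the isometry $U_{\mathcal{N}}$, so that $\sigma$ lands in the admissible class of Theorem~\ref{thm:main-theorem} and the cancellation $H(B|YX)_{\sigma}=H(E|YX)_{\sigma}$ is licensed. Once that structural point is secured, the remaining manipulations are routine applications of the definitions of conditional mutual information and the standard entropy identities for pure bipartite states.
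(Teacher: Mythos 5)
Your proposal is correct and takes essentially the same approach as the paper: both arguments exhibit an admissible $\sigma^{XYBE}$ built from a pure-state decomposition of each conditional input $\rho_{x}^{A^{\prime}}=\text{Tr}_{A}\{\phi_{x}^{AA^{\prime}}\}$ and then use purity of the conditional $BE$ states to get $I\left(  Y;B|X\right)  _{\sigma}-I\left(  Y;E|X\right)  _{\sigma}=H\left(  B|X\right)  _{\rho}-H\left(  E|X\right)  _{\rho}=I\left(  A\rangle BX\right)  _{\rho}$. The only (immaterial) difference is that the paper obtains the pure-state ensemble via a spectral decomposition of $\rho_{x}^{A^{\prime}}$, whereas you obtain it by a rank-one measurement of the purifying reference $A$, which yields an equivalent decomposition.
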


\begin{proof}
The proof techniqiue is similar to that of Lemma~3 in
Ref.~\cite{PhysRevA.78.022306}. First let us rewrite the function $f$ so that
it is a function on the systems $X$, $B$, and $E$:%
\begin{align*}
f_{\mu}\left(  \mathcal{N}\right)   &  =\max_{\rho}I\left(  X;B\right)
_{\rho}+I\left(  A\rangle BX\right)  _{\rho}+\mu I\left(  A\rangle BX\right)
_{\rho}\\
&  =\max_{\rho}I\left(  X;B\right)  _{\rho}+\left(  \mu+1\right)  \left[
H\left(  B|X\right)  _{\rho}-H\left(  E|X\right)  _{\rho}\right]  .
\end{align*}
Thus, it is only important to consider the input system $A^{\prime}$ when
evaluating the above formula. Let
\[
\rho_{x}^{A^{\prime}}=\text{Tr}_{A}\{\phi_{x}^{AA^{\prime}}\},
\]
so that the maximization above is over a state of the following form:%
\[
\rho^{XBE}\equiv\sum_{x}p_{X}\left(  x\right)  \left\vert x\right\rangle
\left\langle x\right\vert ^{X}\otimes U_{\mathcal{N}}^{A^{\prime}\rightarrow
BE}(\rho_{x}^{A^{\prime}}),
\]
where $U$ is the isometric extension of the channel $\mathcal{N}$. Take a
spectral decomposition of the states $\rho_{x}^{A^{\prime}}$:%
\[
\rho_{x}^{A^{\prime}}=\sum_{y}p_{Y|X}\left(  y|x\right)  \psi_{x,y}%
^{A^{\prime}},
\]
where the states $\psi_{x,y}^{A^{\prime}}$ are pure. Then the following state
$\theta^{XYBE}$ is a particular state of the form $\sigma^{XYBE}$:%
\[
\theta^{XYBE}\equiv\sum_{x,y}p_{X}\left(  x\right)  p_{Y|X}\left(  y|x\right)
\left\vert x\right\rangle \left\langle x\right\vert ^{X}\otimes\left\vert
y\right\rangle \left\langle y\right\vert ^{Y}\otimes U_{\mathcal{N}%
}^{A^{\prime}\rightarrow BE}(\psi_{x,y}^{A^{\prime}}),
\]
such that Tr$_{Y}\left\{  \theta\right\}  =\rho^{XBE}$.
Consider the following chain of inequalities:%
\begin{align*}
&  I\left(  X;B\right)  _{\rho}+\left(  \mu+1\right)  \left[  H\left(
B|X\right)  _{\rho}-H\left(  E|X\right)  _{\rho}\right]  \\
&  =I\left(  X;B\right)  _{\theta}+\left(  \mu+1\right)  \left[  H\left(
B|X\right)  _{\theta}-H\left(  E|X\right)  _{\theta}\right]  \\
&  =I\left(  X;B\right)  _{\theta}+\left(  \mu+1\right)  \left[  H\left(
B|X\right)  _{\theta}-H\left(  B|YX\right)  _{\theta}-H\left(  E|X\right)
_{\theta}+H\left(  E|YX\right)  _{\theta}\right]  \\
&  =I\left(  X;B\right)  _{\theta}+\left(  \mu+1\right)  \left[  I\left(
Y;B|X\right)  _{\theta}-I\left(  Y;E|X\right)  _{\theta}\right]  \\
&  \leq P_{\mu}\left(  \mathcal{N}\right)  .
\end{align*}
The first equality follows because Tr$_{Y}\left\{  \theta\right\}  =\rho
^{XBE}$. The second equality follows because the entropies of $\theta$ on
systems $B$ and $E$ are equal when conditioned on $X$ and $Y$. The third
equality follows from the definition of conditional mutual information. The
final inequality follows from the definition of $P_{\mu}\left(  \mathcal{N}%
\right)  $.
\end{proof}

\begin{lemma}
Suppose that a quantum channel is degradable. Then the classical-quantum
trade-off formula is equivalent to the public-private trade-off formula.
\end{lemma}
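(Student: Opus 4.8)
The previous lemma already gives $f_{\mu}(\mathcal{N}) \le P_{\mu}(\mathcal{N})$, so the entire content of this statement is the reverse inequality $P_{\mu}(\mathcal{N}) \le f_{\mu}(\mathcal{N})$ under the degradability hypothesis. The plan is to take an arbitrary state $\sigma^{XYBE}$ of the form appearing in the definition of $P_{\mu}$ and to bound its objective value from above by an admissible value of the $f_{\mu}$ objective. Degradability will enter through a single clean fact: that the coherent information $H(\mathcal{N}(\rho)) - H(\mathcal{N}^{c}(\rho))$ is \emph{nonnegative} for every input density operator $\rho^{A^{\prime}}$ when $\mathcal{N}$ is degradable.

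First I would record that nonnegativity fact, which is where the degrading map $\mathcal{D}^{B\rightarrow E}$ is actually used. For a degradable channel the coherent information is a concave function of the input state, by the standard argument for degradable channels (cf.~Ref.~\cite{DS03}). Moreover, any pure input $\psi^{A^{\prime}}$ is sent by the isometry $U_{\mathcal{N}}^{A^{\prime}\rightarrow BE}$ to a pure state on $BE$, for which $H(B)=H(E)$, so the coherent information vanishes on pure inputs. Decomposing an arbitrary $\rho^{A^{\prime}}$ into pure states and invoking concavity then yields $H(\mathcal{N}(\rho)) - H(\mathcal{N}^{c}(\rho)) \ge 0$ for all $\rho$. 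It is precisely because the states $\rho_{x,y}^{A^{\prime}}$ in the definition are \emph{mixed} that I need the full concavity statement here rather than just the pure-state evaluation.

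Next I would manipulate the inner expression of the $P_{\mu}$ objective. Expanding the conditional mutual informations gives $I(Y;B|X)_{\sigma} - I(Y;E|X)_{\sigma} = [H(B|X)_{\sigma} - H(E|X)_{\sigma}] - [H(B|XY)_{\sigma} - H(E|XY)_{\sigma}]$. The key observation is that the subtracted bracket is exactly the average coherent information of the conditional inputs, namely $H(B|XY)_{\sigma} - H(E|XY)_{\sigma} = \sum_{x,y} p_{X,Y}(x,y)\,[\,H(\mathcal{N}(\rho_{x,y}^{A^{\prime}})) - H(\mathcal{N}^{c}(\rho_{x,y}^{A^{\prime}}))\,]$, which is nonnegative by the fact above. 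Hence $I(Y;B|X)_{\sigma} - I(Y;E|X)_{\sigma} \le H(B|X)_{\sigma} - H(E|X)_{\sigma}$, and since $\mu+1>0$ the whole $P_{\mu}$ objective evaluated on $\sigma$ is bounded above by $I(X;B)_{\sigma} + (\mu+1)[H(B|X)_{\sigma} - H(E|X)_{\sigma}]$.

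Finally I would recognize this right-hand side as an attainable value of $f_{\mu}$. Setting $\rho_{x}^{A^{\prime}} \equiv \sum_{y} p_{Y|X}(y|x)\,\rho_{x,y}^{A^{\prime}}$ and purifying each $\rho_{x}^{A^{\prime}}$ to a pure state $\phi_{x}^{AA^{\prime}}$ produces a state of exactly the form required in the definition of $f_{\mu}$; tracing out $Y$ leaves the $XB$ and $XE$ marginals unchanged, and purification does not alter the $B$ or $E$ marginals, so $I(X;B)$ and $H(B|X)-H(E|X)$ take precisely the values in the bound. Thus the bound reads $P_{\mu}\text{-objective}(\sigma) \le f_{\mu}(\mathcal{N})$, and maximizing over $\sigma$ gives $P_{\mu}(\mathcal{N}) \le f_{\mu}(\mathcal{N})$; combined with the previous lemma the two formulas coincide. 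The only genuine obstacle is the concavity/nonnegativity input of the second paragraph, which is exactly the step that fails for general channels and succeeds for degradable ones; the remainder is entropy bookkeeping.
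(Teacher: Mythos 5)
Your proof is correct and is essentially the paper's argument in different packaging: both reduce to showing $H(B|XY)_{\sigma}-H(E|XY)_{\sigma}\geq 0$, which the paper establishes by adjoining a spectral-decomposition register $Z$ and applying data processing through the degrading map (so that $I(Z;B|XY)_{\sigma}-I(Z;E|XY)_{\sigma}\geq0$ and $H(B|XYZ)_{\sigma}=H(E|XYZ)_{\sigma}$), while you invoke the equivalent standard fact that the coherent information of a degradable channel is nonnegative on every mixed input. Your closing purification step, identifying the resulting bound as an admissible value of $f_{\mu}$, is the same identification the paper makes implicitly in its final inequality.
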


\begin{proof}
The proof is again similar to that of Lemma~3 in
Ref.~\cite{PhysRevA.78.022306}. Consider the state definitions in the previous
lemma and the definition of $\sigma^{XYBE}$ from before. Consider a state
$\sigma^{XYZBE}$ defined as follows:%
\[
\sigma^{XYZBE}\equiv\sum_{x,y,z}p_{X,Y}\left(  x,y\right)  p_{Z|X,Y}\left(
z|x,y\right)  \left\vert x\right\rangle \left\langle x\right\vert ^{X}%
\otimes\left\vert y\right\rangle \left\langle y\right\vert ^{Y}\otimes
\left\vert z\right\rangle \left\langle z\right\vert ^{Z}\otimes U_{\mathcal{N}%
}^{A^{\prime}\rightarrow BE}(\varphi_{x,y,z}^{A^{\prime}}),
\]
where $\rho_{x,y}^{A^{\prime}}=\sum_{z}p_{Z|X,Y}\left(  z|x,y\right)
\varphi_{x,y,z}^{A^{\prime}}$ is a spectral decomposition of $\rho
_{x,y}^{A^{\prime}}$. Consider the following chain of inequalities that
applies to an arbitrary state $\sigma^{XYBE}$:%
\begin{align*}
&  I\left(  X;B\right)  _{\sigma}+\left(  \mu+1\right)  \left[  I\left(
Y;B|X\right)  _{\sigma}-I\left(  Y;E|X\right)  _{\sigma}\right]  \\
&  =I\left(  X;B\right)  _{\sigma}+\left(  \mu+1\right)  \left[  I\left(
YZ;B|X\right)  _{\sigma}-I\left(  Z;B|XY\right)  _{\sigma}-\left[  I\left(
YZ;E|X\right)  _{\sigma}-I\left(  Z;E|XY\right)  _{\sigma}\right]  \right]  \\
&  =I\left(  X;B\right)  _{\sigma}+\left(  \mu+1\right)  \left[  I\left(
YZ;B|X\right)  _{\sigma}-I\left(  YZ;E|X\right)  _{\sigma}-\left[  I\left(
Z;B|XY\right)  _{\sigma}-I\left(  Z;E|XY\right)  _{\sigma}\right]  \right]  \\
&  \leq I\left(  X;B\right)  _{\sigma}+\left(  \mu+1\right)  \left[  I\left(
YZ;B|X\right)  _{\sigma}-I\left(  YZ;E|X\right)  _{\sigma}\right]  \\
&  =I\left(  X;B\right)  _{\sigma}+\left(  \mu+1\right)  \left[  H\left(
B|X\right)  _{\sigma}-H\left(  B|XYZ\right)  _{\sigma}-H\left(  E|X\right)
_{\sigma}+H\left(  E|XYZ\right)  _{\sigma}\right]  \\
&  =I\left(  X;B\right)  _{\sigma}+\left(  \mu+1\right)  \left[  H\left(
B|X\right)  _{\sigma}-H\left(  E|X\right)  _{\sigma}\right]  \\
&  \leq f_{\mu}\left(  \mathcal{N}\right)  .
\end{align*}
The first equality follows by applying the chain rule for quantum mutual
information. The second equality follows by rearranging terms. The first
inequality follows because $I\left(  Z;B|XY\right)  _{\sigma}-I\left(
Z;E|XY\right)  _{\sigma}\geq0$ for a degradable quantum channel. The third
equality follows by expanding mutual informations. The fourth equality follows
because the entropies of the state $\sigma$ on systems $B$ and $E$ are equal
when conditioned on $X$, $Y$, and $Z$. The final inequality follows from the
definition of $f_{\mu}\left(  \mathcal{N}\right)  $.
\end{proof}

\subsubsection{Comparison between quantum dynamic and private dynamic
formulas}

We can compare the quantum dynamic and private dynamic capacity formulas for
the class of degradable channels. The proof exploits the simplified form of
the private dynamic capacity formula that results from
Lemma~\ref{lem:degradable-formula}, and the quantum dynamic capacity formula
appears in the proof below.

\begin{lemma}
Suppose that a quantum channel $\mathcal{N}_{\emph{D}}$ is degradable. Then
the quantum dynamic capacity formula can never be less than the private
dynamic capacity formula.
\end{lemma}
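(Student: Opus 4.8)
The plan is to use degradability in one decisive place---through Lemma~\ref{lem:degradable-formula}---to reduce the private dynamic capacity formula $P_{\lambda,\mu}(\mathcal{N})$ to a maximization over states whose conditional inputs $\rho_{x,y}^{A'}=\psi_{x,y}^{A'}$ are \emph{pure}, and then, for the optimizing such state $\sigma^{XYBE}$, to exhibit a single admissible state $\rho^{XAB}$ for the quantum dynamic capacity formula $Q_{\lambda,\mu}(\mathcal{N})=\max_{\rho}I(AX;B)_{\rho}+\lambda I(A\rangle BX)_{\rho}+\mu[I(X;B)_{\rho}+I(A\rangle BX)_{\rho}]$ that does at least as well. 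The state I would use coherently superposes the (now pure) conditional inputs over the index $y$: writing $\psi_{x,y}^{A'}=|\xi_{x,y}\rangle\langle\xi_{x,y}|^{A'}$, set $|\phi_{x}\rangle^{AA'}\equiv\sum_{y}\sqrt{p_{Y|X}(y|x)}\,|y\rangle^{A}|\xi_{x,y}\rangle^{A'}$, and let $\rho^{XAB}$ be the resulting channel-output state carrying the classical register $X$. By construction the ancilla $A$ holds a coherent copy of $Y$, and tracing out $A$ returns $\rho_{x}^{A'}=\sum_{y}p_{Y|X}(y|x)\psi_{x,y}^{A'}$, so the $B$ and $E$ marginals are preserved: $\rho_{x}^{BE}=\sigma_{x}^{BE}$, and hence $\rho^{XB}=\sigma^{XB}$.

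Next I would match the two ``lower'' faces exactly. Purity of the conditional inputs makes the output on $BE$ pure given $(x,y)$, so $H(B|XY)_{\sigma}=H(E|XY)_{\sigma}$ and the private Face~(\ref{eq:PS-bound}) term collapses to $I(Y;B|X)_{\sigma}-I(Y;E|X)_{\sigma}=H(B|X)_{\sigma}-H(E|X)_{\sigma}$. On the quantum side $\Phi_{x}^{ABE}$ is pure, giving $H(AB|X)_{\rho}=H(E|X)_{\rho}$, and since the marginals agree this yields $I(A\rangle BX)_{\rho}=H(B|X)_{\sigma}-H(E|X)_{\sigma}$ together with $I(X;B)_{\rho}=I(X;B)_{\sigma}$. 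Thus the quantum Face-2 term equals the private Face-2 term, and the quantum Face-3 term $I(X;B)_{\rho}+I(A\rangle BX)_{\rho}$ equals the private Face-3 term $I(YX;B)_{\sigma}-I(Y;E|X)_{\sigma}$.

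The decisive face is the first one, where I must show $I(AX;B)_{\rho}\geq I(YX;B)_{\sigma}$; this is precisely the place where the extra room of the super-dense-coding-enhanced bound $C+2Q\leq I(AX;B)$ must pay off. Since $I(X;B)$ agrees on the two states, it suffices to prove the per-letter inequality $I(A;B)_{\rho_{x}}\geq I(Y;B)_{\sigma_{x}}$ for each $x$. This I would obtain directly from data processing: the register $Y$ is produced from $A$ by the von Neumann measurement in the basis $\{|y\rangle\}$, which maps $\rho_{x}^{AB}$ to $\sigma_{x}^{YB}$, and measuring the correlated system cannot increase its mutual information with $B$. I expect the genuine difficulty to lie not in this one-line estimate but in designing the coherent-superposition state above so that it simultaneously preserves the $BE$ marginals (forcing Faces~2 and~3 to match as \emph{equalities}) and renders $Y$ a measurement of $A$ (supplying the Face-1 slack); recognizing that the factor of two on $Q$ is exactly what absorbs the gap $I(A;B)_{\rho_{x}}-I(Y;B)_{\sigma_{x}}\geq 0$ is the conceptual heart of the argument.

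Finally, because $\lambda,\mu\geq 0$ and because $I(AX;B)_{\rho}+\lambda I(A\rangle BX)_{\rho}+\mu[I(X;B)_{\rho}+I(A\rangle BX)_{\rho}]$ is a lower bound on $Q_{\lambda,\mu}(\mathcal{N})$, combining the one inequality and two equalities among the faces gives $Q_{\lambda,\mu}(\mathcal{N})\geq P_{\lambda,\mu}(\mathcal{N})$ for all $\lambda,\mu\geq 0$, which is the claim.
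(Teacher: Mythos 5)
Your proposal is correct and follows essentially the same route as the paper: both reduce to pure conditional inputs via Lemma~\ref{lem:degradable-formula}, build the coherent superposition $\left\vert \phi_{x}\right\rangle ^{AA^{\prime}}=\sum_{y}\sqrt{p_{Y|X}\left(  y|x\right)  }\left\vert y\right\rangle ^{A}\left\vert \psi_{x,y}\right\rangle ^{A^{\prime}}$, observe that the $\lambda$- and $\mu$-weighted terms coincide because the $X$-conditioned entropies of $B$ and $E$ are unchanged and $I\left(  A\rangle BX\right)  =H\left(  B|X\right)  -H\left(  E|X\right)  $ for pure $\phi_{x}^{ABE}$, and handle the remaining term by data processing under the dephasing/measurement of $A$ in the $\left\{  \left\vert y\right\rangle \right\}  $ basis. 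The only cosmetic difference is that you separate the argument face by face while the paper runs it as a single chain of inequalities.
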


\begin{proof}
We prove this theorem by showing that%
\[
P_{\lambda,\mu}(\mathcal{N}_{\text{D}})\leq D_{\lambda,\mu}\left(
\mathcal{N}_{D}\right)  ,
\]
where $D_{\lambda,\mu}(\mathcal{N}_{\text{D}})$ is the quantum dynamic
capacity formula given by%
\[
D_{\lambda,\mu}(\mathcal{N}_{\text{D}})\equiv\max_{\sigma}I\left(
AX;B\right)  _{\sigma}+\lambda I\left(  A\rangle BX\right)  _{\sigma}%
+\mu\left[  I\left(  X;B\right)  _{\sigma}+I\left(  A\rangle BX\right)
_{\sigma}\right]  .
\]
The state $\sigma^{XABE}$ is a state of the form%
\[
\sigma^{XABE}\equiv\sum_{x}p_{X}\left(  x\right)  \left\vert x\right\rangle
\left\langle x\right\vert ^{X}\otimes U_{\mathcal{N}_{\text{D}}}^{A^{\prime
}\rightarrow BE}(\phi_{x}^{AA^{\prime}}),
\]
where the states $\phi_{x}^{AA^{\prime}}$ are pure.
Suppose that the following state is the one that maximizes $P_{\lambda,\mu
}(\mathcal{N}_{\text{D}})$ for a given $\lambda$ and $\mu$:%
\[
\omega^{XYBE}\equiv\sum_{x,y}p_{X}\left(  x\right)  p_{Y|X}\left(  y|x\right)
\left\vert x\right\rangle \left\langle x\right\vert ^{X}\otimes\left\vert
y\right\rangle \left\langle y\right\vert ^{Y}\otimes U_{\mathcal{N}_{\text{D}%
}}^{A^{\prime}\rightarrow BE}(\psi_{x,y}^{A^{\prime}})
\]
Then we choose the states $\phi_{x}^{AA^{\prime}}$ in a given $\sigma^{XABE}$
to be as follows:%
\[
\left\vert \phi_{x}\right\rangle ^{AA^{\prime}}=\sum_{y}\sqrt{p_{Y|X}\left(
y|x\right)  }\left\vert y\right\rangle ^{A}\left\vert \psi_{x,y}\right\rangle
^{A^{\prime}}.
\]
We can obtain the state $\omega^{XYBE}$ from the state $\sigma^{XABE}$ by
performing a complete dephasing $\Delta^{A\rightarrow Y}$ where the dephasing
basis is $\left\{  \left\vert y\right\rangle \left\langle y\right\vert
\right\}  $. Then the following inequalities hold%
\begin{align*}
&  I\left(  YX;B\right)  _{\omega}+\lambda\left[  H\left(  B|X\right)
_{\omega}-H\left(  E|X\right)  _{\omega}\right]  +\mu\left[  H\left(
B\right)  _{\omega}-H\left(  E|X\right)  _{\omega}\right]  \\
&  =I\left(  YX;B\right)  _{\omega}+\lambda\left[  H\left(  B|X\right)
_{\sigma}-H\left(  E|X\right)  _{\sigma}\right]  +\mu\left[  H\left(
B\right)  _{\sigma}-H\left(  E|X\right)  _{\sigma}\right]  \\
&  \leq I\left(  AX;B\right)  _{\sigma}+\lambda\left[  H\left(  B|X\right)
_{\sigma}-H\left(  E|X\right)  _{\sigma}\right]  +\mu\left[  H\left(
B\right)  _{\sigma}-H\left(  E|X\right)  _{\sigma}\right]  \\
&  =I\left(  AX;B\right)  _{\sigma}+\lambda I\left(  A\rangle BX\right)
_{\sigma}+\mu\left[  I\left(  X;B\right)  _{\sigma}+I\left(  A\rangle
BX\right)  _{\sigma}\right]  \\
&  \leq D_{\lambda,\mu}(\mathcal{N}_{\text{D}}).
\end{align*}
The first equality follows because the entropies of $\omega$ and $\sigma$
without the $Y$ system are equivalent. The first inequality follows from
quantum data processing:\ one can obtain the state $\omega^{XYBE}$ be
performing a von Neumann measurement of the $A$ system of the state
$\sigma^{XABE}$ in the basis $\{\left\vert y\right\rangle ^{A}\}$. The second
equality follows by rearranging terms. The final equality follows from the
definition of $D_{\lambda,\mu}$.
\end{proof}

The above lemma explicitly shows how the analogy between the classical and
quantum worlds breaks down for the case of a degradable channel. The quantum
dynamic capacity formula is always larger than the private dynamic formula
because of the strong correlations in entanglement and because of the lack of
a super-dense coding protocol in the public-private-secret-key setting.

\section{Single-letter private dynamic capacity regions for
entanglement-breaking channels}

\label{sec:EB}Our first class of channels for which the private dynamic
capacity region simplifies is the class of entanglement-breaking channels.
Shor found that such channels have an additive classical capacity
\cite{shor:4334}, and we can extend his method of proof to show that the full
private dynamic capacity region for these channels is single-letter.

\begin{theorem}
[Private Dynamic Capacity for Entanglement-Breaking Channels]%
\label{thm:EB-region}The private dynamic capacity region $\mathcal{C}%
_{\text{\emph{RPS}}}(\mathcal{N}_{\emph{EB}})$ of an entanglement-breaking
quantum channel $\mathcal{N}_{\emph{EB}}$ is the set of all rates $R$, $P$,
and $S$, such that%
\begin{align}
R+P  &  \leq\max_{\omega}I\left(  X;B\right)  _{\omega},\\
P+S  &  \leq0,\\
R+P+S  &  \leq\max_{\omega}I\left(  X;B\right)  _{\omega}.
\end{align}
The above entropic quantities are with respect to a classical-quantum state
$\sigma^{XB}$ where%
\begin{equation}
\sigma^{XB}\equiv\sum_{x}p_{X}\left(  x\right)  \left\vert x\right\rangle
\left\langle x\right\vert ^{X}\otimes\mathcal{N}_{\emph{EB}}^{A^{\prime
}\rightarrow B}(\psi_{x}^{A^{\prime}}),
\end{equation}
and the states $\psi_{x}^{A^{\prime}}$ are pure.
\end{theorem}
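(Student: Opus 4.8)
The plan is to prove this theorem in two directions, following the structure that the general capacity theorem (Theorem~\ref{thm:main-theorem}) already provides. For \emph{achievability}, I would show that the claimed region is contained in the general region $\mathcal{C}_{\text{RPS}}(\mathcal{N}_{\text{EB}})$. The key observation is that for an entanglement-breaking channel the complementary-channel terms that normally supply the secret-key rate collapse. Concretely, I would take the general one-shot, one-state region of Theorem~\ref{thm:main-theorem} and specialize the state $\sigma^{XYBE}$ by \emph{eliminating the $Y$ register entirely}, i.e.\ choosing a trivial $Y$ and using pure input states $\psi_x^{A'}$. With this choice $I(Y;B|X)_\sigma = I(Y;E|X)_\sigma = 0$, so the bound \eqref{eq:PS-bound} becomes $P+S \le 0$, the bound \eqref{eq:RP-bound} becomes $R+P \le I(YX;B)_\sigma = I(X;B)_\sigma$, and \eqref{eq:RPS-bound} becomes $R+P+S \le I(X;B)_\sigma$. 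Taking the union over all such $\sigma$ and maximizing over $\omega$ then reproduces exactly the three inequalities claimed, establishing achievability.

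For the \emph{converse}, the task is to show that no state $\sigma^{XYBE}$ of the general form can do better than the single-letter region above, and crucially that the region is already single-letter (no regularization needed). The heart of this is an additivity argument. Following the outline the paper gives for entanglement-breaking channels (``we can extend Shor's method of proof''), I would first reduce the three general bounds to the claimed ones for a \emph{single} use of the channel: the entanglement-breaking property forces the relevant private information quantity $I(Y;B|X)_\sigma - I(Y;E|X)_\sigma \le 0$, which is the assertion that an entanglement-breaking channel has zero private capacity (it is antidegradable, as noted in Section~\ref{sec:def-not}). This immediately gives $P+S\le I(Y;B|X)_\sigma - I(Y;E|X)_\sigma \le 0$ from \eqref{eq:PS-bound}, and collapses \eqref{eq:RPS-bound} to $R+P+S \le I(YX;B)_\sigma \le \max_\omega I(X;B)_\omega$ via the data-processing reduction from $YX$ to a single classical variable (as in the $\lambda,\mu=0$ corollary, where $\max_\sigma I(YX;B)_\sigma = \max_\rho I(X;B)_\rho$).

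The main obstacle, and the step I expect to require the most care, is upgrading these single-letter inequalities through the regularization in \eqref{eq:multi-letter}. That is, I must argue that the tensor-power channel $\mathcal{N}_{\text{EB}}^{\otimes k}$ does not enlarge the normalized region. This is precisely where the entanglement-breaking structure does real work: one writes $\mathcal{N}_{\text{EB}} = \mathcal{P}\circ\mathcal{M}$ as a measurement $\mathcal{M}$ followed by a preparation $\mathcal{P}$, inserts a dummy classical register recording the measurement outcome, and uses the resulting conditional independence to split the relevant multipartite mutual-information quantity across the $k$ channel uses. The key technical lemma is the superadditivity/subadditivity estimate showing $\tfrac{1}{k}\,[\text{region for } \mathcal{N}_{\text{EB}}^{\otimes k}] \subseteq [\text{region for }\mathcal{N}_{\text{EB}}]$, driven by the fact that appending the measurement outcomes can only increase $I(X;B)$ while the private-capacity term stays nonpositive use-by-use.

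Finally I would verify that the claimed achievable ensemble, using \emph{pure} states $\psi_x^{A'}$ as in the theorem statement, actually attains the boundary of this single-letter region. Since the relevant quantity $\max_\omega I(X;B)_\omega$ is the Holevo information of the channel, and since for entanglement-breaking channels it suffices to optimize over pure-state ensembles (mixed inputs can be purified and decomposed without decreasing $I(X;B)$ by concavity arguments), the restriction to pure $\psi_x^{A'}$ loses nothing. Combining achievability, the single-use converse, and the additivity argument then yields the stated region exactly.
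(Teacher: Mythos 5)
Your proposal is correct and follows essentially the same route as the paper: achievability by specializing the general region with a trivial $Y$ register and pure inputs, a single-use converse driven by antidegradability forcing $I(Y;B|X)_{\sigma}-I(Y;E|X)_{\sigma}\leq 0$ (the paper's Lemma~\ref{lem:antidegradable-formula}), and additivity via Shor's measurement--preparation decomposition of the entanglement-breaking channel with an auxiliary classical register $Z$ splitting $I(X;B_{1}B_{2})$ across channel uses. The paper packages the regularization step through additivity of the weighted formula $P_{\lambda,\mu}$ and the supporting-hyperplane machinery of Lemma~\ref{thm:CEQ-single-letter}, which is exactly the formalization of the region-inclusion $\tfrac{1}{k}\mathcal{C}^{(1)}_{\text{RPS}}(\mathcal{N}_{\text{EB}}^{\otimes k})\subseteq\mathcal{C}^{(1)}_{\text{RPS}}(\mathcal{N}_{\text{EB}})$ that you identify as the key technical target.
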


We prove this theorem in a few steps. We first show that the private dynamic
capacity formula simplifies dramatically for antidegradable channels (recall
that entanglement-breaking channels are a special case of antidegradable
ones). We then show that this simplified formula is additive for an
entanglement-breaking channel and this result implies the form of the region
in the statement of the above theorem.

\begin{lemma}
\label{lem:antidegradable-formula}Suppose that a quantum channel
$\mathcal{N}_{\text{\emph{AD}}}$ is antidegradable. Then the private dynamic
capacity formula simplifies as follows:%
\[
P_{\lambda,\mu}\left(  \mathcal{N}_{\text{\emph{AD}}}\right)  =h_{\lambda,\mu
}\left(  \mathcal{N}_{\text{\emph{AD}}}\right)  ,
\]
where%
\[
h_{\lambda,\mu}\left(  \mathcal{N}_{\text{\emph{AD}}}\right)  \equiv\left(
1+\mu\right)  \max_{\omega}I\left(  X;B\right)  _{\omega},
\]
and $\omega^{XBE}$ is a state of the following form:%
\[
\omega^{XBE}\equiv\sum_{x}p_{X}\left(  x\right)  \left\vert x\right\rangle
\left\langle x\right\vert ^{X}\otimes U_{\mathcal{N}_{\text{\emph{AD}}}%
}^{A^{\prime}\rightarrow BE}(\psi_{x}^{A^{\prime}}),
\]
and the states $\psi_{x}^{A^{\prime}}$ are pure.
\end{lemma}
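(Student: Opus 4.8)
The plan is to evaluate the private dynamic capacity formula $P_{\lambda,\mu}(\mathcal{N}_{\text{AD}})$ directly using the defining property of antidegradable channels, namely that there exists a map $\mathcal{T}^{E\rightarrow B}$ such that $\mathcal{T}^{E\rightarrow B}\circ(\mathcal{N}^c)^{A'\rightarrow E}=\mathcal{N}^{A'\rightarrow B}$. The key observation is that for an antidegradable channel, Eve's output is ``at least as good'' as Bob's in a data-processing sense, which should force the private term $I(Y;B|X)_\sigma - I(Y;E|X)_\sigma$ appearing in (\ref{eq:objective}) to be nonpositive, thereby collapsing the optimization. So the first step is to show that for \emph{any} state $\sigma^{XYBE}$ of the form in Theorem~\ref{thm:main-theorem}, we have the inequality
\[
I\left(Y;B|X\right)_\sigma - I\left(Y;E|X\right)_\sigma \leq 0.
\]
This follows from quantum data processing: conditioned on each value $x$, the state on $YBE$ is such that $B$ can be obtained from $E$ by applying $\mathcal{T}^{E\rightarrow B}$ (which acts only on the channel output and commutes with the classical registers), so $I(Y;B|X)_\sigma = \sum_x p_X(x) I(Y;B)_{\sigma_x} \leq \sum_x p_X(x) I(Y;E)_{\sigma_x} = I(Y;E|X)_\sigma$.

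Granting this, the next step is to substitute into (\ref{eq:objective}). The coefficient of $\lambda$ is precisely the private term just shown to be nonpositive, and since $\lambda\geq 0$, dropping it can only increase the objective; but more importantly, to get equality I would argue that we may always choose the state so that this term \emph{vanishes}. Concretely, by taking each $\rho_{x,y}^{A'}$ to be pure and, say, independent of $y$ (or by degenerating $Y$ to a trivial register), the conditional mutual informations $I(Y;B|X)$ and $I(Y;E|X)$ both become zero. This kills the $\lambda$-term entirely while leaving the remaining objective $I(YX;B)_\sigma + \mu[I(YX;B)_\sigma - I(Y;E|X)_\sigma]$ equal to $(1+\mu)I(X;B)_\sigma$, since with $Y$ trivial $I(YX;B)_\sigma = I(X;B)_\sigma$ and $I(Y;E|X)_\sigma=0$. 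This shows $P_{\lambda,\mu}(\mathcal{N}_{\text{AD}}) \geq (1+\mu)\max_\omega I(X;B)_\omega = h_{\lambda,\mu}(\mathcal{N}_{\text{AD}})$.

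For the reverse inequality $P_{\lambda,\mu}(\mathcal{N}_{\text{AD}}) \leq h_{\lambda,\mu}(\mathcal{N}_{\text{AD}})$, I would take an arbitrary optimizing $\sigma^{XYBE}$ and bound the objective from above. Using the nonpositivity of the $\lambda$-term to discard it (valid since $\lambda\geq 0$), the objective is at most $I(YX;B)_\sigma + \mu[I(YX;B)_\sigma - I(Y;E|X)_\sigma]$. Here I would relabel the joint classical variable: define a new classical variable $X'\equiv XY$ absorbing both registers, so that $I(YX;B)_\sigma = I(X';B)_{\omega}$ where $\omega^{X'BE}$ has the one-variable form appearing in the lemma statement (the conditional states $\rho_{x,y}^{A'}$ can be taken pure by a spectral decomposition, introducing at most an additional classical label absorbed into $X'$). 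Since $I(Y;E|X)_\sigma\geq 0$, the $\mu$-bracket is at most $I(X';B)_\omega$, giving the bound $(1+\mu)I(X';B)_\omega \leq (1+\mu)\max_\omega I(X;B)_\omega$.

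The main obstacle I anticipate is the bookkeeping in the reverse direction: one must verify that after absorbing $Y$ into the classical variable and spectrally decomposing the mixed states $\rho_{x,y}^{A'}$ into pure states $\psi_x^{A'}$, the resulting state genuinely has the single-classical-variable, pure-conditional-state form claimed for $\omega^{XBE}$, and that this relabeling does not secretly enlarge the feasible set. The delicate point is ensuring consistency between the ``$Y$ trivial'' choice used for the lower bound and the ``$Y$ absorbed into $X$'' reduction used for the upper bound; both rely on the fact that the HSW-type quantity $\max I(X;B)$ is insensitive to how the classical data is partitioned between the two registers $X$ and $Y$, exactly as established in the Corollary relating $\max_\sigma I(YX;B)_\sigma$ to $\max_\rho I(X;B)_\rho$. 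Invoking that Corollary streamlines the argument considerably and is what makes the two bounds match at $(1+\mu)\max_\omega I(X;B)_\omega$.
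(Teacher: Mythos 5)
Your proposal is correct and follows essentially the same route as the paper's proof: the lower bound by degenerating $Y$ to a point mass with optimal pure conditional states, and the upper bound by using antidegradability to drop the nonpositive $\lambda$-term, dropping $-\mu I(Y;E|X)_{\sigma}\leq 0$, and bounding $I(YX;B)_{\sigma}\leq\max_{\omega}I(X;B)_{\omega}$ via absorbing $Y$ into the classical variable. The bookkeeping you flag (refining mixed conditional states into pure ones without decreasing the Holevo quantity) is standard and is exactly what the paper's final inequality implicitly relies on.
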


\begin{proof}
The inequality $P_{\lambda,\mu}(\mathcal{N}_{\text{AD}})\geq h_{\lambda,\mu
}(\mathcal{N}_{\text{AD}})$ follows by carefully choosing the state
$\sigma^{XYBE}$ for the maximization on the LHS: choose the distribution
$p_{X,Y}\left(  x,y\right)  =p_{X}^{\ast}\left(  x\right)  \delta_{y,y_{0}}$
and each state $\rho_{x,y_{0}}^{A^{\prime}}=\left(  \psi_{x}^{\ast}\right)
^{A^{\prime}}$ where the terms with asterisks are optimal for the RHS.
The other inequality $P_{\lambda,\mu}(\mathcal{N}_{\text{AD}})\leq
h_{\lambda,\mu}(\mathcal{N}_{\text{AD}})$ follows from the following chain of
inequalities:%
\begin{align*}
&  I\left(  YX;B\right)  _{\sigma}+\lambda\left[  I\left(  Y;B|X\right)
_{\sigma}-I\left(  Y;E|X\right)  _{\sigma}\right]  +\mu\left[  I\left(
YX;B\right)  _{\sigma}-I\left(  Y;E|X\right)  _{\sigma}\right]  \\
&  \leq I\left(  YX;B\right)  _{\sigma}+\mu I\left(  YX;B\right)  _{\sigma}\\
&  \leq h_{\lambda,\mu}(\mathcal{N}_{\text{AD}}).
\end{align*}
The first inequality follows because $\left[  I\left(  Y;B|X\right)  _{\sigma
}-I\left(  Y;E|X\right)  _{\sigma}\right]  \leq0$ (from antidegradability) and
by dropping the term $-\mu I\left(  Y;E|X\right)  _{\sigma}$. The second
inequality follows because $I\left(  YX;B\right)  _{\sigma}\leq\max_{\omega
}I\left(  X;B\right)  _{\omega}$.
\end{proof}

\begin{corollary}
The private dynamic capacity formula is additive for an antidegradable channel
$\mathcal{N}_{\emph{AD}}$\ and an entanglement-breaking channel $\mathcal{N}%
_{\emph{EB}}$:%
\[
P_{\lambda,\mu}(\mathcal{N}_{\emph{AD}}\otimes\mathcal{N}_{\emph{EB}%
})=P_{\lambda,\mu}(\mathcal{N}_{\emph{AD}})+P_{\lambda,\mu}(\mathcal{N}%
_{\emph{EB}})
\]
\end{corollary}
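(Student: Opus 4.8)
The plan is to reduce all three private dynamic capacity formulas to Holevo quantities using Lemma~\ref{lem:antidegradable-formula}, and then to invoke Shor's additivity theorem for entanglement-breaking channels~\cite{shor:4334}. First I would observe that both channels are antidegradable: $\mathcal{N}_{\text{EB}}$ is entanglement-breaking and hence antidegradable (as noted in Section~\ref{sec:def-not}), while $\mathcal{N}_{\text{AD}}$ is antidegradable by hypothesis. The crucial structural step is to verify that the product $\mathcal{N}_{\text{AD}}\otimes\mathcal{N}_{\text{EB}}$ is itself antidegradable: since the complementary channel of a tensor product factorizes as $(\mathcal{N}_{\text{AD}}\otimes\mathcal{N}_{\text{EB}})^{c}=\mathcal{N}_{\text{AD}}^{c}\otimes\mathcal{N}_{\text{EB}}^{c}$, the map $\mathcal{T}_{\text{AD}}\otimes\mathcal{T}_{\text{EB}}$ built from the two individual antidegrading maps antidegrades the product. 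This is what licenses applying Lemma~\ref{lem:antidegradable-formula} to the left-hand side as well as to each factor.

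Applying the lemma to all three channels, I would rewrite each formula as $(1+\mu)$ times a maximized Holevo quantity, with $B_{1}$ and $B_{2}$ denoting the outputs of $\mathcal{N}_{\text{AD}}$ and $\mathcal{N}_{\text{EB}}$ respectively:
\[
P_{\lambda,\mu}(\mathcal{N}_{\text{AD}}\otimes\mathcal{N}_{\text{EB}})=\left(1+\mu\right)\max_{\omega}I\left(X;B_{1}B_{2}\right)_{\omega},
\]
and likewise $P_{\lambda,\mu}(\mathcal{N}_{\text{AD}})=(1+\mu)\max_{\omega}I(X;B_{1})_{\omega}$ and $P_{\lambda,\mu}(\mathcal{N}_{\text{EB}})=(1+\mu)\max_{\omega}I(X;B_{2})_{\omega}$. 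Since $(1+\mu)\geq0$ is a common factor that divides out of both sides, the desired identity is equivalent to additivity of the HSW/Holevo quantity,
\[
\max_{\omega}I\left(X;B_{1}B_{2}\right)_{\omega}=\max_{\omega}I\left(X;B_{1}\right)_{\omega}+\max_{\omega}I\left(X;B_{2}\right)_{\omega}.
\]

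Finally I would establish this Holevo additivity in the two standard directions. The superadditive direction ($\geq$) is immediate by restricting the joint maximization to product ensembles. The subadditive direction ($\leq$) is precisely Shor's theorem that the Holevo information is additive whenever one of the two channels is entanglement-breaking~\cite{shor:4334}, with $\mathcal{N}_{\text{EB}}$ playing that role. Combining the two directions and reinserting the factor $(1+\mu)$ yields the corollary. The hard part here is not a new calculation but rather checking that the hypotheses of the two imported results are genuinely met—specifically that the product channel remains antidegradable, so that Lemma~\ref{lem:antidegradable-formula} applies on the left-hand side, after which Shor's additivity does the remaining work.
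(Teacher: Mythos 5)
Your proposal is correct and follows essentially the same route as the paper: both first observe that the tensor product of an antidegradable channel with an entanglement-breaking (hence antidegradable) channel is antidegradable, reduce all three quantities to $\left(1+\mu\right)$ times a Holevo quantity via Lemma~\ref{lem:antidegradable-formula}, and then appeal to Shor-type additivity of the Holevo information when one factor is entanglement-breaking. The only difference is cosmetic: the paper reproduces Shor's subadditivity argument inline (introducing the classical flag $Z$ from the separable decomposition of the entanglement-breaking output and using subadditivity plus conditioning), whereas you cite the result of Ref.~\cite{shor:4334} as a black box.
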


\begin{proof}
The proof is similar to the proof in Ref.~\cite{shor:4334}. We first note that
the tensor product of an antidegradable channel and an entanglement-breaking
channel is an antidegradable channel. This observation allows us to employ the
simplified formula in Lemma~\ref{lem:antidegradable-formula}. We employ the
following states in the proof:%
\begin{align*}
\omega^{XB} &  \equiv\sum_{x}p_{X}\left(  x\right)  \left\vert x\right\rangle
\left\langle x\right\vert ^{X}\otimes\mathcal{N}_{\text{AD}}^{A_{1}\rightarrow
B_{1}}\otimes\mathcal{N}_{\text{EB}}^{A_{2}\rightarrow B_{2}}(\psi_{x}%
^{A_{1}A_{2}})\\
&  =\sum_{x}p_{X}\left(  x\right)  \left\vert x\right\rangle \left\langle
x\right\vert ^{X}\otimes\sum_{z}p_{Z|X}\left(  z|x\right)  \mathcal{N}%
_{\text{AD}}^{A_{1}\rightarrow B_{1}}(\sigma_{z,x}^{A_{1}})\otimes\theta
_{z,x}^{B_{2}},\\
\omega^{XZB} &  \equiv\sum_{x,z}p_{X}\left(  x\right)  p_{Z|X}\left(
z|x\right)  \left\vert x\right\rangle \left\langle x\right\vert ^{X}%
\otimes\left\vert z\right\rangle \left\langle z\right\vert ^{Z}\otimes
\mathcal{N}_{\text{AD}}^{A_{1}\rightarrow B_{1}}(\sigma_{z,x}^{A_{1}}%
)\otimes\theta_{z,x}^{B_{2}}.
\end{align*}
Consider the following chain of inequalities:%
\begin{align*}
P_{\lambda,\mu}(\mathcal{N}_{\text{AD}}\otimes\mathcal{N}_{\text{EB}}) &
=h_{\lambda,\mu}(\mathcal{N}_{\text{AD}}\otimes\mathcal{N}_{\text{EB}})\\
&  =\left(  1+\mu\right)  I\left(  X;B_{1}B_{2}\right)  _{\omega}\\
&  =\left(  1+\mu\right)  \left[  H\left(  B_{1}B_{2}\right)  _{\omega
}-H\left(  B_{1}B_{2}|X\right)  _{\omega}\right]  \\
&  \leq\left(  1+\mu\right)  \left[  H\left(  B_{1}\right)  _{\omega}+H\left(
B_{2}\right)  _{\omega}-H\left(  B_{1}B_{2}|XZ\right)  _{\omega}\right]  \\
&  =\left(  1+\mu\right)  \left[  H\left(  B_{1}\right)  _{\omega}+H\left(
B_{2}\right)  _{\omega}-H\left(  B_{1}|XZ\right)  _{\omega}-H\left(
B_{2}|XZ\right)  _{\omega}\right]  \\
&  =\left(  1+\mu\right)  \left[  I\left(  X;B_{1}\right)  _{\omega}+I\left(
X;B_{2}\right)  _{\omega}\right]  \\
&  \leq h_{\lambda,\mu}(\mathcal{N}_{\text{AD}})+h_{\lambda,\mu}%
(\mathcal{N}_{\text{EB}})\\
&  =P_{\lambda,\mu}(\mathcal{N}_{\text{AD}})+P_{\lambda,\mu}(\mathcal{N}%
_{\text{EB}}).
\end{align*}
The first equality follows from Lemma~\ref{lem:antidegradable-formula}. The
second equality follows from the assumption that $\omega$ is a state that
maximizes $h_{\lambda,\mu}(\mathcal{N}_{\text{AD}}\otimes\mathcal{N}%
_{\text{EB}})$. The third equality follows from the definition of quantum
mutual information. The first inequality follows from subadditivity of entropy
and conditioning does not increase entropy. The fourth equality follows
because the state $\omega$ is product when conditioned on both $X$ and $Z$.
The fifth equality follows from the definition of quantum mutual information.
The second inequality follows because the mutual informations are always less
than their maxima, and the final equality follows from
Lemma~\ref{lem:antidegradable-formula}.
\end{proof}

\begin{example}
The private dynamic capacity region of a completely dephasing channel is the
set of all $R$, $P$, and $S$ satisfying the following inequalities:%
\begin{align*}
R+P  &  \leq1,\\
P+S  &  \leq0,\\
R+P+S  &  \leq1.
\end{align*}
This result follows because the completely dephasing channel is an
entanglement-breaking channel with public classical capacity equal to one.
\end{example}

\section{Single-letter private dynamic capacity regions for the quantum
Hadamard channels}

\label{sec:Hadamard}We now prove that the private dynamic capacity region is
additive for the class of quantum Hadamard channels. This result is perhaps
dual to the above result because Hadamard channels are ones for which the map
to the environment is entanglement-breaking, and they are degradable with a
degrading map from Bob to the environment Eve. Our method of proof is similar
as above---we first prove that the private dynamic capacity formula simplifies
for degradable channels and then prove additivity of the simplified formula
for the Hadamard channels.

\begin{lemma}
\label{lem:degradable-formula}Suppose that a quantum channel $\mathcal{N}%
_{\emph{D}}$ is degradable. Then the private dynamic capacity formula
simplifies as follows:%
\[
P_{\lambda,\mu}(\mathcal{N}_{\emph{D}})=g_{\lambda,\mu}(\mathcal{N}_{\emph{D}%
}),
\]
where%
\[
g_{\lambda,\mu}(\mathcal{N}_{\emph{D}})\equiv\max_{\omega}I\left(
YX;B\right)  _{\omega}+\lambda\left[  H\left(  B|X\right)  _{\omega}-H\left(
E|X\right)  _{\omega}\right]  +\mu\left[  H\left(  B\right)  _{\omega
}-H\left(  E|X\right)  _{\omega}\right]  ,
\]
and $\omega^{XYBE}$ is a state of the following form:%
\[
\omega^{XYBE}\equiv\sum_{x,y}p_{X,Y}\left(  x,y\right)  \left\vert
x\right\rangle \left\langle x\right\vert ^{X}\otimes\left\vert y\right\rangle
\left\langle y\right\vert ^{Y}\otimes U_{\mathcal{N}_{\emph{D}}}^{A^{\prime
}\rightarrow BE}(\psi_{x,y}^{A^{\prime}}),
\]
and the states $\psi_{x,y}^{A^{\prime}}$ are pure.
\end{lemma}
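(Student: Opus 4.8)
The plan is to establish the two inequalities $P_{\lambda,\mu}(\mathcal{N}_{\emph{D}}) \geq g_{\lambda,\mu}(\mathcal{N}_{\emph{D}})$ and $P_{\lambda,\mu}(\mathcal{N}_{\emph{D}}) \leq g_{\lambda,\mu}(\mathcal{N}_{\emph{D}})$ separately. The structural fact driving everything is that the objective defining $g_{\lambda,\mu}$ is nothing but the $P_{\lambda,\mu}$ objective specialized to \emph{pure} conditional inputs. Indeed, if every $\psi_{x,y}^{A^{\prime}}$ is pure, then $U_{\mathcal{N}_{\emph{D}}}^{A^{\prime}\rightarrow BE}(\psi_{x,y}^{A^{\prime}})$ is pure on $BE$, so conditioning on both $X$ and $Y$ gives $H(B|XY)_{\omega}=H(E|XY)_{\omega}$. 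Feeding this equality into the identities $H(B|X)-H(E|X)=[I(Y;B|X)-I(Y;E|X)]+[H(B|XY)-H(E|XY)]$ and $H(B)-H(E|X)=[I(YX;B)-I(Y;E|X)]+[H(E|XY)-H(B|XY)]$ collapses the extra terms and shows that the two objectives coincide on pure-conditional states. The inequality $\geq$ is then immediate: a maximizer of $g_{\lambda,\mu}$ is a pure-conditional state, which is a legitimate special case of the mixed-conditional states allowed in $P_{\lambda,\mu}$, and the objectives agree there.

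For the reverse inequality I would take an arbitrary admissible state $\sigma^{XYBE}$ with \emph{mixed} conditionals $\rho_{x,y}^{A^{\prime}}$ and spectrally decompose each one, $\rho_{x,y}^{A^{\prime}}=\sum_{z}p_{Z|X,Y}(z|x,y)\,\varphi_{x,y,z}^{A^{\prime}}$ with the $\varphi_{x,y,z}^{A^{\prime}}$ pure, exactly as in the earlier degradable comparison lemma. This yields a refined state $\sigma^{XYZBE}$ with pure conditionals (given $X$, $Y$, and $Z$) whose partial trace over $Z$ returns $\sigma^{XYBE}$. Regarding the composite register $YZ$ as playing the role of $Y$, I would compare the $P_{\lambda,\mu}$ objective evaluated on $\sigma^{XYBE}$ with the same objective evaluated on $\sigma^{XYZBE}$. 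Applying the chain rule $I(YZX;B)=I(YX;B)+I(Z;B|XY)$, $I(YZ;B|X)=I(Y;B|X)+I(Z;B|XY)$, and $I(YZ;E|X)=I(Y;E|X)+I(Z;E|XY)$, the value on the refined state exceeds the value on the original by exactly
\[
I(Z;B|XY)_{\sigma}+(\lambda+\mu)\left[I(Z;B|XY)_{\sigma}-I(Z;E|XY)_{\sigma}\right].
\]
Since the refined state has pure conditionals, its $P_{\lambda,\mu}$ value equals its $g_{\lambda,\mu}$ value and is therefore at most $g_{\lambda,\mu}(\mathcal{N}_{\emph{D}})$; hence every admissible $\sigma^{XYBE}$ is bounded by $g_{\lambda,\mu}(\mathcal{N}_{\emph{D}})$, giving $\leq$.

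The one step carrying real content, and hence the main obstacle, is showing that the displayed increment is nonnegative. The term $I(Z;B|XY)_{\sigma}$ is nonnegative by positivity of conditional quantum mutual information, and $\lambda,\mu\geq0$, so everything reduces to the degradability estimate $I(Z;B|XY)_{\sigma}-I(Z;E|XY)_{\sigma}\geq0$. This is the same inequality invoked in the earlier comparison lemma: conditioned on $X$ and $Y$, the degrading map $\mathcal{D}^{B\rightarrow E}$ produces Eve's state from Bob's, so $Z-B-E$ forms a quantum Markov chain and the quantum data-processing inequality gives the claim. Once this estimate is in hand, combining the two directions yields $P_{\lambda,\mu}(\mathcal{N}_{\emph{D}})=g_{\lambda,\mu}(\mathcal{N}_{\emph{D}})$.
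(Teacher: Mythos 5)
Your proposal is correct and follows essentially the same route as the paper's own proof: the lower bound comes from observing that pure conditional inputs make $H(B|XY)_{\omega}=H(E|XY)_{\omega}$ so the two objectives coincide, and the upper bound comes from spectrally decomposing each $\rho_{x,y}^{A^{\prime}}$ into a classical register $Z$, applying the chain rule, and using degradability to get $I\left(Z;B|XY\right)_{\sigma}-I\left(Z;E|XY\right)_{\sigma}\geq0$; your explicit isolation of the increment $I\left(Z;B|XY\right)_{\sigma}+\left(\lambda+\mu\right)\left[I\left(Z;B|XY\right)_{\sigma}-I\left(Z;E|XY\right)_{\sigma}\right]$ is a slightly cleaner bookkeeping of the same computation. (The sign of the correction term $H\left(B|XY\right)-H\left(E|XY\right)$ in your second displayed identity is flipped, but since that term vanishes for pure conditionals the slip is immaterial.)
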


\begin{proof}
The inequality $P_{\lambda,\mu}(\mathcal{N}_{\text{D}})\geq g_{\lambda,\mu
}(\mathcal{N}_{\text{D}})$ follows by choosing each state $\rho_{x,y}%
^{A^{\prime}}$ in $\sigma^{XYBE}$ for the maximization on the LHS\ to be the
pure state $\psi_{x,y}^{A^{\prime}}$ that maximizes the RHS. Consider the
following chain of inequalities:%
\begin{align*}
P_{\lambda,\mu}(\mathcal{N}_{\text{D}})  &  \geq I\left(  YX;B\right)
_{\sigma}+\lambda\left[  I\left(  Y;B|X\right)  _{\sigma}-I\left(
Y;E|X\right)  _{\sigma}\right]  +\mu\left[  I\left(  YX;B\right)  _{\sigma
}-I\left(  Y;E|X\right)  _{\sigma}\right] \\
&  =I\left(  YX;B\right)  _{\sigma}+\lambda\left[  H\left(  B|X\right)
_{\sigma}-H\left(  B|XY\right)  _{\sigma}-H\left(  E|X\right)  _{\sigma
}+H\left(  E|XY\right)  _{\sigma}\right] \\
&  \ \ \ \ \ \ +\mu\left[  H\left(  B\right)  _{\sigma}-H\left(  B|XY\right)
_{\sigma}-H\left(  E|X\right)  _{\sigma}+H\left(  E|XY\right)  _{\sigma
}\right] \\
&  =I\left(  YX;B\right)  _{\sigma}+\lambda\left[  H\left(  B|X\right)
_{\sigma}-H\left(  E|X\right)  _{\sigma}\right]  +\mu\left[  H\left(
B\right)  _{\sigma}-H\left(  E|X\right)  _{\sigma}\right] \\
&  =g_{\lambda,\mu}(\mathcal{N}_{\text{D}}).
\end{align*}
We now prove that the other inequality $P_{\lambda,\mu}(\mathcal{N}_{\text{D}%
})\leq g_{\lambda,\mu}(\mathcal{N}_{\text{D}})$\ holds. Suppose the state
$\sigma^{XYBE}$ maximizes $P_{\lambda,\mu}(\mathcal{N}_{\text{D}})$. Consider
a state $\sigma^{XYZBE}$ defined as follows:%
\[
\sigma^{XYZBE}\equiv\sum_{x,y,z}p_{X,Y}\left(  x,y\right)  p_{Z|X,Y}\left(
z|x,y\right)  \left\vert x\right\rangle \left\langle x\right\vert ^{X}%
\otimes\left\vert y\right\rangle \left\langle y\right\vert ^{Y}\otimes
\left\vert z\right\rangle \left\langle z\right\vert ^{Z}\otimes U_{\mathcal{N}%
_{\text{D}}}^{A^{\prime}\rightarrow BE}(\varphi_{x,y,z}^{A^{\prime}}),
\]
where $\rho_{x,y}^{A^{\prime}}=\sum_{z}p_{Z|X,Y}\left(  z|x,y\right)
\varphi_{x,y,z}^{A^{\prime}}$ is a spectral decomposition of each $\rho
_{x,y}^{A^{\prime}}$ in the state $\sigma^{XYBE}$. This state is a state of
the form $\omega^{XYBE}$ with $Y$ redefined to be $YZ$. Consider the following
chain of inequalities:%
\begin{align*}
P_{\lambda,\mu}(\mathcal{N}_{\text{D}})  &  =I\left(  YX;B\right)  _{\sigma
}+\lambda\left[  I\left(  Y;B|X\right)  _{\sigma}-I\left(  Y;E|X\right)
_{\sigma}\right]  +\mu\left[  I\left(  YX;B\right)  _{\sigma}-I\left(
Y;E|X\right)  _{\sigma}\right] \\
&  =I\left(  YZX;B\right)  _{\sigma}+\lambda\left[  I\left(  YZ;B|X\right)
_{\sigma}-I\left(  YZ;E|X\right)  _{\sigma}-\left[  I\left(  Z;B|YX\right)
_{\sigma}-I\left(  Z;E|YX\right)  _{\sigma}\right]  \right] \\
&  \ \ \ \ \ \ +\mu\left[  I\left(  X;B\right)  _{\sigma}+I\left(
YZ;B|X\right)  _{\sigma}-I\left(  YZ;E|X\right)  _{\sigma}-\left[  I\left(
Z;B|YX\right)  _{\sigma}-I\left(  Z;E|YX\right)  _{\sigma}\right]  \right] \\
&  \leq I\left(  YZX;B\right)  _{\sigma}+\lambda\left[  I\left(
YZ;B|X\right)  _{\sigma}-I\left(  YZ;E|X\right)  _{\sigma}\right] \\
&  \ \ \ \ \ \ +\mu\left[  I\left(  X;B\right)  _{\sigma}+I\left(
YZ;B|X\right)  _{\sigma}-I\left(  YZ;E|X\right)  _{\sigma}\right] \\
&  =I\left(  YZX;B\right)  _{\sigma}+\lambda\left[  H\left(  B|X\right)
_{\sigma}-H\left(  B|XYZ\right)  _{\sigma}-H\left(  E|X\right)  _{\sigma
}+H\left(  E|XYZ\right)  _{\sigma}\right] \\
&  \ \ \ \ \ \ +\mu\left[  H\left(  B\right)  _{\sigma}-H\left(  B|XYZ\right)
_{\sigma}-H\left(  E|X\right)  _{\sigma}+H\left(  E|XYZ\right)  _{\sigma
}\right] \\
&  =I\left(  YZX;B\right)  _{\sigma}+\lambda\left[  H\left(  B|X\right)
_{\sigma}-H\left(  E|X\right)  _{\sigma}\right]  +\mu\left[  H\left(
B\right)  _{\sigma}-H\left(  E|X\right)  _{\sigma}\right] \\
&  \leq g_{\lambda,\mu}(\mathcal{N}_{\text{D}}).
\end{align*}
The first equality follows by definition. The second equality follows from
applying the chain rule for mutual information. The first inequality follows
because $I\left(  Z;B|YX\right)  _{\sigma}-I\left(  Z;E|YX\right)  _{\sigma
}\geq0$ for a degradable channel. The third equality follows by expanding the
mutual informations, and the fourth equality follows because $H\left(
B|XYZ\right)  _{\sigma}=H\left(  E|XYZ\right)  _{\sigma}$. The final
inequality follows from the definition of $g_{\lambda,\mu}(\mathcal{N}%
_{\text{D}})$.
\end{proof}

\begin{lemma}
Suppose that $\mathcal{N}_{\emph{H}}$ is a quantum Hadamard channel and that
$\mathcal{N}_{\emph{D}}$ is a degradable quantum channel. Then the private
dynamic capacity formula is additive:%
\[
P_{\lambda,\mu}(\mathcal{N}_{\emph{H}}\otimes\mathcal{N}_{\emph{D}%
})=P_{\lambda,\mu}(\mathcal{N}_{\emph{H}})+P_{\lambda,\mu}(\mathcal{N}%
_{\emph{D}}).
\]
\end{lemma}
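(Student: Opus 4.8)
The plan is to reduce the claim to additivity of the simplified degradable formula $g_{\lambda,\mu}$ of Lemma~\ref{lem:degradable-formula}, and then establish that additivity by exploiting the special structure of Hadamard channels, in the same spirit as the entanglement-breaking additivity argument of Section~\ref{sec:EB}. First I would note that $\mathcal{N}_H$ is degradable (its complementary channel is entanglement-breaking, so it has a measure-and-prepare degrading map), that $\mathcal{N}_D$ is degradable by hypothesis, and that the tensor product of two degradable channels is degradable, with degrading map the tensor product $\mathcal{D}_1\otimes\mathcal{D}_2$ of the individual ones. Hence $\mathcal{N}_H\otimes\mathcal{N}_D$ is degradable, and Lemma~\ref{lem:degradable-formula} applies to all three channels, so $P_{\lambda,\mu}$ may be replaced everywhere by $g_{\lambda,\mu}$. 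It then suffices to prove $g_{\lambda,\mu}(\mathcal{N}_H\otimes\mathcal{N}_D)=g_{\lambda,\mu}(\mathcal{N}_H)+g_{\lambda,\mu}(\mathcal{N}_D)$. Superadditivity ($\geq$) is trivial by evaluating the left-hand side on a product input state, so the entire content is the subadditive bound ($\leq$).

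The structural fact I would extract from the Hadamard property is the following. Because $\mathcal{N}_H^c$ is entanglement-breaking it has rank-one Kraus operators $|\eta_y\rangle^{E_1}\langle\theta_y|^{A_1'}$, which yields an isometric extension $U_{\mathcal{N}_H}^{A_1'\to B_1E_1}=\sum_y|\eta_y\rangle^{E_1}|y\rangle^{B_1}\langle\theta_y|^{A_1'}$ with $\{|y\rangle^{B_1}\}$ orthonormal. Appending the coherent-copy isometry $V^{B_1\to B_1Y_1}:|y\rangle^{B_1}\mapsto|y\rangle^{B_1}|y\rangle^{Y_1}$ produces a classical register $Y_1$ on Bob's side with the crucial property that, conditioned on $Y_1$, the system $E_1$ is in a fixed pure state $|\eta_{y}\rangle$ \emph{and in tensor product with all remaining systems}. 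Since $V$ is a local isometry on Bob's side it leaves every entropy in the $g$-objective unchanged once Bob's output is relabelled $B_1Y_1$, so I may compute with the extended state throughout. Writing the joint optimal state $\omega$ with pure inputs $\psi_{x,y}^{A_1'A_2'}$, purity of the state given $(x,y)$ gives $I(YX;B_1Y_1B_2)_\omega=H(B_1Y_1B_2)-H(E_1E_2|XY)$, so that the objective becomes $(1+\mu)H(B_1Y_1B_2)+\lambda H(B_1Y_1B_2|X)-H(E_1E_2|XY)-(\lambda+\mu)H(E_1E_2|X)$.

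I would then split this quantity into the two single-channel integrands. The Bob terms split by the chain rule and subadditivity, grouping $Y_1$ with $B_1$ and conditioning $B_2$ on $XY_1$: namely $H(B_1Y_1B_2)\leq H(B_1Y_1)+H(B_2)$ and $H(B_1Y_1B_2|X)\leq H(B_1Y_1|X)+H(B_2|XY_1)$. For the environment terms I would first peel off $E_1$ by the chain rule, $H(E_1E_2|XY)=H(E_1|XY)+H(E_2|E_1XY)$ and similarly conditioned on $X$; the $E_1$ pieces are precisely the environment terms of $g(\mathcal{N}_H)$. The remaining $E_2$ pieces are conditioned on $E_1$, and here I replace the conditioning on $E_1$ by conditioning on $Y_1$ using data processing for conditional entropy: since the state preparation $y\mapsto|\eta_y\rangle$ is a channel from $Y_1$ to $E_1$, one has $H(E_2|E_1XY)\geq H(E_2|Y_1XY)$ and $H(E_2|E_1X)\geq H(E_2|Y_1X)$. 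Collecting the pieces, the objective is bounded above by the sum of the $g$-integrand for $\mathcal{N}_H$ (Bob system $B_1Y_1$, public register $X$, secret register $Y$) and the $g$-integrand for $\mathcal{N}_D$ (Bob system $B_2$, public register $XY_1$, secret register $Y$). The reduced inputs on $A_1'$ and $A_2'$ are generally mixed, so as a final step I would invoke the spectral-refinement maneuver of Lemma~\ref{lem:degradable-formula}, enlarging the secret register by the pure-state decomposition of these marginals; this is valid precisely because both channels are degradable, and it only lowers the relevant $H(E|XY\cdots)$ terms. Each bracket is then a legitimate single-channel $g$-value, giving $g_{\lambda,\mu}(\mathcal{N}_H\otimes\mathcal{N}_D)\leq g_{\lambda,\mu}(\mathcal{N}_H)+g_{\lambda,\mu}(\mathcal{N}_D)$.

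The main obstacle is the environment contribution. The terms $-H(E_1E_2|XY)$ and $-(\lambda+\mu)H(E_1E_2|X)$ enter with negative coefficients, so a direct split would require superadditivity of conditional entropy, which is false; ordinary subadditivity pushes these terms in the wrong direction. The Hadamard structure is exactly what rescues the argument: because $E_1$ becomes a pure, decoupled, classically-labelled state conditioned on the register $Y_1$ that Bob can extract from his own output, one can peel $E_1$ cleanly onto the $\mathcal{N}_H$ side and, via data processing, absorb the residual $E_2$-conditioning into the classical register $Y_1$ that is then treated as extra public side information for $\mathcal{N}_D$. Verifying that these register reassignments produce honest single-channel $g$-states (in particular the pure-input reduction) is the delicate bookkeeping that makes the bound close.
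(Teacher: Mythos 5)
Your proof is correct and follows essentially the same route as the paper's: the register $Y_{1}$ you extract via the coherent copy of the Hadamard channel's pointer basis is precisely the classical output $W$ of the measurement half of the degrading map that the paper applies to $B_{1}$, and your subsequent chain-rule and data-processing split of the Bob and environment terms matches the paper's chain of inequalities term for term. Your explicit handling of the final spectral-refinement step (restoring pure conditional inputs in each single-channel bracket) addresses a point the paper's proof glosses over, and it closes correctly because refining the classical label only decreases the $H\left(  E|XY\cdots\right)$ terms.
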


\begin{proof}
The inequality $P_{\lambda,\mu}(\mathcal{N}_{\text{H}}\otimes\mathcal{N}%
_{\text{D}})\geq P_{\lambda,\mu}(\mathcal{N}_{\text{H}})+P_{\lambda,\mu
}(\mathcal{N}_{\text{D}})$ trivially holds by picking the state on the LHS\ to
be a tensor product of the ones that individually maximize the RHS.
Thus, we prove the non-trivial inequality $P_{\lambda,\mu}(\mathcal{N}%
_{\text{H}}\otimes\mathcal{N}_{\text{D}})\leq P_{\lambda,\mu}(\mathcal{N}%
_{\text{H}})+P_{\lambda,\mu}(\mathcal{N}_{\text{D}})$ for the channels in the
hypothesis of the lemma. Consider a state of the form $\sigma^{XYB_{1}%
E_{1}B_{2}E_{2}}$ that arises from inputting a state of the form in
Lemma~\ref{lem:degradable-formula} to the tensor product channel. Let
$\omega^{XYZWE_{1}B_{2}E_{2}}$ be the state that arises from applying the
first part of the degrading map of the Hadamard channel to system $B_{1}$.
Then the following chain of inequalities holds:%
\begin{align*}
&  P_{\lambda,\mu}(\mathcal{N}_{\text{H}}\otimes\mathcal{N}_{\text{D}})\\
&  =g_{\lambda,\mu}(\mathcal{N}_{\text{H}}\otimes\mathcal{N}_{\text{D}})\\
&  =H\left(  B_{1}B_{2}\right)  _{\sigma}-H\left(  E_{1}E_{2}|YX\right)
_{\sigma}+\lambda\left[  H\left(  B_{1}B_{2}|X\right)  _{\sigma}-H\left(
E_{1}E_{2}|X\right)  _{\sigma}\right]  +\mu\left[  H\left(  B_{1}B_{2}\right)
_{\sigma}-H\left(  E_{1}E_{2}|X\right)  _{\sigma}\right] \\
&  =H\left(  B_{1}\right)  _{\sigma}-H\left(  E_{1}|YX\right)  _{\sigma
}+\lambda\left[  H\left(  B_{1}|X\right)  _{\sigma}-H\left(  E_{1}|X\right)
_{\sigma}\right]  +\mu\left[  H\left(  B_{1}\right)  _{\sigma}-H\left(
E_{1}|X\right)  _{\sigma}\right] \\
&  \ \ \ \ \ \ +H\left(  B_{2}|B_{1}\right)  _{\sigma}-H\left(  E_{2}%
|YXE_{1}\right)  _{\sigma}+\lambda\left[  H\left(  B_{2}|XB_{1}\right)
_{\sigma}-H\left(  E_{2}|XE_{1}\right)  _{\sigma}\right]  +\mu\left[  H\left(
B_{2}|B_{1}\right)  _{\sigma}-H\left(  E_{2}|XE_{1}\right)  _{\sigma}\right]
\\
&  \leq H\left(  B_{1}\right)  _{\sigma}-H\left(  E_{1}|YX\right)  _{\sigma
}+\lambda\left[  H\left(  B_{1}|X\right)  _{\sigma}-H\left(  E_{1}|X\right)
_{\sigma}\right]  +\mu\left[  H\left(  B_{1}\right)  _{\sigma}-H\left(
E_{1}|X\right)  _{\sigma}\right] \\
&  \ \ \ \ \ \ +H\left(  B_{2}\right)  _{\sigma}-H\left(  E_{2}|YXW\right)
_{\sigma}+\lambda\left[  H\left(  B_{2}|XW\right)  _{\sigma}-H\left(
E_{2}|XW\right)  _{\sigma}\right]  +\mu\left[  H\left(  B_{2}\right)
_{\sigma}-H\left(  E_{2}|XW\right)  _{\sigma}\right] \\
&  \leq g_{\lambda,\mu}(\mathcal{N}_{\text{H}})+g_{\lambda,\mu}(\mathcal{N}%
_{\text{D}})\\
&  =P_{\lambda,\mu}(\mathcal{N}_{\text{H}})+P_{\lambda,\mu}(\mathcal{N}%
_{\text{D}}).
\end{align*}
The first equality follows from Lemma~\ref{lem:degradable-formula} because a
Hadamard channel is degradable and thus the tensor product channel is
degradable as well. The second equality follows by definition. The third
equality follows by expanding with the chain rule for entropy. The first
inequality follows from subadditivity ($H\left(  B_{2}|B_{1}\right)  _{\sigma
}\leq H\left(  B_{2}\right)  _{\sigma}$) and because there is a degrading map
from $B_{1}\rightarrow W$ and from $W\rightarrow E_{1}$ (and so $H\left(
B_{2}|XB_{1}\right)  _{\sigma}\leq H\left(  B_{2}|XW\right)  _{\sigma}$ and
$H\left(  E_{2}|XW\right)  _{\sigma}\leq H\left(  E_{2}|XE_{1}\right)
_{\sigma}$). The second inequality follows from the definition of
$g_{\lambda,\mu}$, and the final equality follows from
Lemma~\ref{lem:degradable-formula}.
\end{proof}

\section{The private dynamic capacity region for special channels}

\label{sec:special-channels}In the forthcoming subsections, we explicitly
compute and plot the private dynamic capacity region for the qubit dephasing
channel, the $1\rightarrow N$ cloning channel, and the quantum erasure
channel. Interestingly, the ensemble required to achieve the boundary is the
same for all three boundaries. The proofs of the theorems in this section are
similar (though with subtle differences) to proofs from
Refs.~\cite{BHTW10,WH10}, and they all appear in the appendix.

\subsection{Dephasing channels}

\label{sec:dephasing}Consider the qubit dephasing channel $\mathcal{N}_{p}$
with dephasing probability $p$:
\begin{equation}
\mathcal{N}_{p}(\rho):=(1-p)\rho+p\Delta(\rho),
\end{equation}
where $\Delta(\rho):=\langle0|\rho|0\rangle|0\rangle\!\langle0|+\langle
1|\rho|1\rangle|1\rangle\!\langle1|$ is the completely dephasing channel. The
below theorem gives an explicit form for the private dynamic capacity region
of this channel, and Figure~\ref{fig:dephasing} plots the region for a
dephasing parameter $p=0.2$.%
%TCIMACRO{\FRAME{ftbpFU}{5.047in}{2.6057in}{0pt}{\Qcb{(Color online) The
%private dynamic triple trade-off for the qubit dephasing channel with
%dephasing parameter $p=0.2$. P2P is in the direction of private-to-public
%transmission, SKD is in the direction of secret key distribution, OTP is in
%the direction of the one-time pad protocol, and PEPFP is the publicly-enhanced
%private father trade-off curve (this convention is the same in the forthcoming
%figures). The region exhibits a non-trivial resource trade-off only on the
%surface below the PEPFP\ trade-off curve in the direction of secret key
%distribution.}}{\Qlb{fig:dephasing}}{dephasing-rps-trade-off.pdf}%
%{\special{ language "Scientific Word";  type "GRAPHIC";
%maintain-aspect-ratio TRUE;  display "USEDEF";  valid_file "F";
%width 5.047in;  height 2.6057in;  depth 0pt;  original-width 7.5801in;
%original-height 3.8934in;  cropleft "0";  croptop "1";  cropright "1";
%cropbottom "0";
%filename '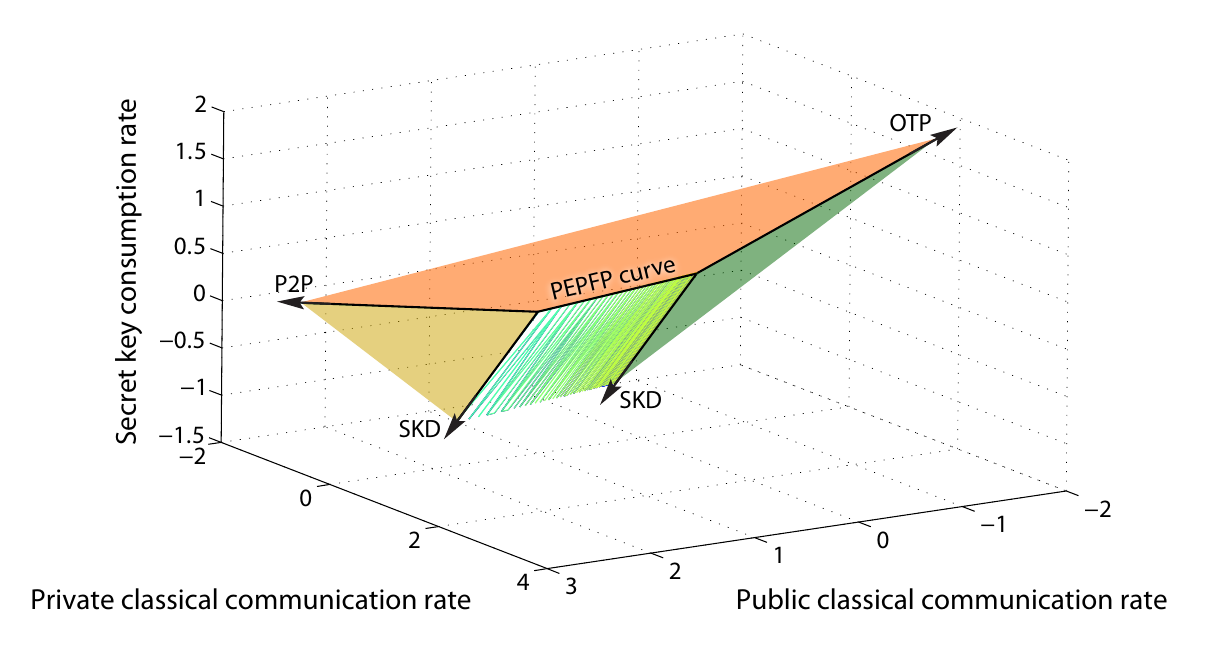';file-properties "XNPEU";}}}%
%BeginExpansion
\begin{figure}
[ptb]
\begin{center}
\includegraphics[
natheight=3.893400in,
natwidth=7.580100in,
height=2.6057in,
width=5.047in
]%
{}%
\caption{(Color online) The private dynamic triple trade-off for the qubit
dephasing channel with dephasing parameter $p=0.2$. P2P is in the direction of
private-to-public transmission, SKD is in the direction of secret key
distribution, OTP is in the direction of the one-time pad protocol, and PEPFP
is the publicly-enhanced private father trade-off curve (this convention is
the same in the forthcoming figures). The region exhibits a non-trivial
resource trade-off only on the surface below the PEPFP\ trade-off curve in the
direction of secret key distribution.}%
\label{fig:dephasing}%
\end{center}
\end{figure}
%EndExpansion

\begin{theorem}
\label{thm:dephasing}The private dynamic capacity region $\mathcal{C}%
_{\mathrm{{RPS}}}(\mathcal{N}_{p})$ of a dephasing channel with dephasing
parameter $p$\ is the set of all $R$, $P$, and $S$ such that%
\begin{align}
R+P  &  \leq1,\\
P+S  &  \leq H_{2}\left(  \nu\right)  -H_{2}(\gamma\left(  \nu,p\right)  ),\\
R+P+S  &  \leq1-H_{2}(\gamma\left(  \nu,p\right)  ),
\end{align}
where $\nu\in\left[  0,1/2\right]  $, $H_{2}$ is the binary entropy function,
and%
\[
\gamma\left(  \nu,p\right)  \equiv\frac{1}{2}+\frac{1}{2}\sqrt{1-16\cdot
\frac{p}{2}\left(  1-\frac{p}{2}\right)  \nu(1-\nu)}.
\]
\end{theorem}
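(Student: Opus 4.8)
The plan is to exploit that the qubit dephasing channel is a Hadamard channel (its complement is entanglement-breaking), so that the additivity result for the Hadamard class in Section~\ref{sec:Hadamard} applies. This collapses the regularization in (\ref{eq:multi-letter}) of Theorem~\ref{thm:main-theorem} and reduces everything to the single-copy region, which by Lemma~\ref{lem:degradable-formula} is controlled by the simplified degradable formula $g_{\lambda,\mu}$. By the supporting-hyperplane/Pareto correspondence, it then suffices to evaluate, for every $\lambda,\mu\geq0$,
\[
g_{\lambda,\mu}(\mathcal{N}_p)=\max_{\omega}I(YX;B)_\omega+\lambda\left[H(B|X)_\omega-H(E|X)_\omega\right]+\mu\left[H(B)_\omega-H(E|X)_\omega\right],
\]
and to show that the maximizing ensemble is the one-parameter family appearing in the theorem.

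First I would fix the isometric extension. Writing $\mathcal{N}_p(\rho)=(1-p/2)\rho+(p/2)Z\rho Z$, the isometry acts as
\[
|i\rangle^{A'}\longmapsto|i\rangle^{B}|e_i\rangle^{E},\qquad |e_0\rangle=\sqrt{1-p/2}\,|0\rangle+\sqrt{p/2}\,|1\rangle,\quad|e_1\rangle=\sqrt{1-p/2}\,|0\rangle-\sqrt{p/2}\,|1\rangle,
\]
so that $\langle e_0|e_1\rangle=1-p$. For a computational-basis ensemble in which $|1\rangle$ occurs with probability $q$, Bob receives an orthogonal mixture of $|0\rangle,|1\rangle$ (hence $H(B)=H_2(q)$), while Eve receives $q\,|e_1\rangle\langle e_1|+(1-q)\,|e_0\rangle\langle e_0|$, whose larger eigenvalue is exactly $\gamma(q,p)$; diagonalizing this $2\times2$ state and using the identity $4\cdot\frac{p}{2}(1-\frac{p}{2})=1-(1-p)^2$ yields $H(E)=H_2(\gamma(q,p))$. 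This is the routine computation that produces the function $\gamma$.

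For achievability I would take $X$ uniform on $\{0,1\}$, with conditional input the pure computational-basis state $|Y\rangle$ where $p_{Y|X}(1|0)=\nu$ and $p_{Y|X}(1|1)=1-\nu$. Since the conditional inputs are computational-basis states, $H(B|XY)=H(E|XY)=0$; the symmetric choice of biases forces the overall $B$-marginal to be maximally mixed, so $I(YX;B)=H(B)=1$; and $H(B|X)=H_2(\nu)$ while $H(E|X)=H_2(\gamma(\nu,p))$, using $\gamma(\nu,p)=\gamma(1-\nu,p)$. Substituting reproduces the three right-hand sides $1$, $H_2(\nu)-H_2(\gamma(\nu,p))$, and $1-H_2(\gamma(\nu,p))$, so every boundary point is achieved as $\nu$ ranges over $[0,1/2]$.

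The converse---that no ensemble exceeds this family in $g_{\lambda,\mu}$---is where the real work lies, and I expect it to be the main obstacle. Rewriting $g_{\lambda,\mu}=(1+\mu)H(B)-H(B|XY)+\lambda H(B|X)-(\lambda+\mu)H(E|X)$, the terms $(1+\mu)H(B)$ and $-H(B|XY)$ individually favor a maximally-mixed $B$-marginal and computational-basis inputs, but a genuine superposition input raises $H(B|X)$ (which helps, via $+\lambda$) while also raising $H(E|X)$ (which hurts, via $-(\lambda+\mu)$), so the trade-off cannot be settled term by term. The plan is to invoke the phase covariance $\mathcal{N}_p\circ\mathcal{Z}_\theta=\mathcal{Z}_\theta\circ\mathcal{N}_p$, with $\mathcal{Z}_\theta(\cdot)=e^{i\theta Z}(\cdot)e^{-i\theta Z}$, to restrict attention to phase-symmetric ensembles, and then---controlling precisely the competing effect of symmetrization on $H(B|X)$ versus $H(E|X)$ as in the dephasing analyses of Refs.~\cite{BHTW10,WH10}---to collapse the maximization to the scalar function $f(q)=\lambda H_2(q)-(\lambda+\mu)H_2(\gamma(q,p))$ under the mean constraint $\mathbb{E}[q]=1/2$ imposed by the $H(B)$ term. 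Since $f(q)=f(1-q)$, a support-reduction/concavity argument then pins the optimizer to the symmetric two-point distribution $\{\nu,1-\nu\}$, recovering exactly the claimed one-parameter family and completing the characterization of the region via the Pareto correspondence.
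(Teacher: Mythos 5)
Your proposal follows the same route as the paper: collapse the regularization via the Hadamard additivity of Section~\ref{sec:Hadamard}, reduce to the degradable formula $g_{\lambda,\mu}$ of Lemma~\ref{lem:degradable-formula}, and show that the symmetric two-point computational-basis ensemble in the theorem statement is optimal; your achievability computation and the identification of $\gamma(\nu,p)$ as the larger eigenvalue of Eve's conditional state are exactly the paper's. The one step you flag as ``the main obstacle''---controlling a competing effect of phase symmetrization on $H(B|X)$ versus $H(E|X)$---in fact dissolves for this channel: the complementary channel acts as $\rho\mapsto\sum_i\rho_{ii}\left\vert e_i\right\rangle\left\langle e_i\right\vert$, so it depends only on the diagonal of the input, and dephasing the conditional inputs (equivalently, averaging over $\mathcal{Z}_\theta$) leaves $H(E|X)$ \emph{exactly} invariant while weakly increasing $H(B|X)$ and $H(B)$ by concavity; there is no trade-off to control. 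This is precisely the content of the paper's appeal to Lemma~9 of Ref.~\cite{itit2008hsieh}. With that in hand your scalar reduction goes through: after refining $Y$ so that the conditional inputs are computational-basis states (which zeroes $H(B|XY)$ without affecting the other entropies), the objective becomes $(1+\mu)H_2(\bar{q})+\mathbb{E}_x[f(q_x)]$ with $f(q)=\lambda H_2(q)-(\lambda+\mu)H_2(\gamma(q,p))$, and since $f(q)=f(1-q)$ the mean constraint is free---replacing the maximizing $q$ by the pair $\{\nu,1-\nu\}$ with equal weights saturates $H_2(\bar{q})\leq1$ and $\mathbb{E}[f]\leq\max f$ simultaneously, so no concavity argument is even needed. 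The paper accomplishes the same thing slightly differently, by adjoining a register $J$ carrying a uniformly random Pauli (which forces $H(B)=1$ exactly while leaving the $J$-conditioned entropies unchanged) and then replacing the average over $x$ by a maximum; the two devices are interchangeable here.
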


\subsection{Quantum cloning channels}

A $1\rightarrow N$ cloning channel \cite{BHP09,B09,BDHM09,BHTW10} is the map
induced by a universal cloning machine~\cite{GM97}. It approximately copies
the input with a maximal fidelity independent of the input. The communication
model for this channel gives all of the approximate clones to the receiver Bob
and gives the environment of the map to Eve. The Kraus operators for a
$1\rightarrow N$ cloning channel are as follows:%
\[
\left\{  \frac{1}{\sqrt{\Delta_{N}}}\left(  \sqrt{N-i}\left\vert
i\right\rangle ^{B}\left\langle 0\right\vert ^{A^{\prime}}+\sqrt
{i+1}\left\vert i+1\right\rangle ^{B}\left\langle 1\right\vert ^{A^{\prime}%
}\right)  \right\}  _{i=0}^{N-1},
\]
where $\Delta_{N}\equiv N\left(  N+1\right)  /2$ and%
\[
\{\left\vert j\right\rangle ^{B}\equiv\left\vert N-j,j\right\rangle
\}_{j=0}^{N},
\]
where $\left\vert N-j,j\right\rangle ^{B}$ denotes a normalized state on an
$N$-qubit system that is a uniform superposition of computational basis states
with $N-j$ \textquotedblleft zeros\textquotedblright\ and $j$
\textquotedblleft ones.\textquotedblright\ Figure~\ref{fig:cloning-RPS}\ plots
the capacity region for a $1\rightarrow10$ cloning channel, and the proof of
the below theorem appears in the appendix.%
%TCIMACRO{\FRAME{ftbpFU}{5.5486in}{3.5042in}{0pt}{\Qcb{(Color online) The
%private dynamic capacity region for a $1\rightarrow10$ cloning channel. The
%region exhibits a non-trivial resource trade-off only on the surface below the
%PEPFP\ trade-off curve in the direction of secret key distribution.}%
%}{\Qlb{fig:cloning-RPS}}{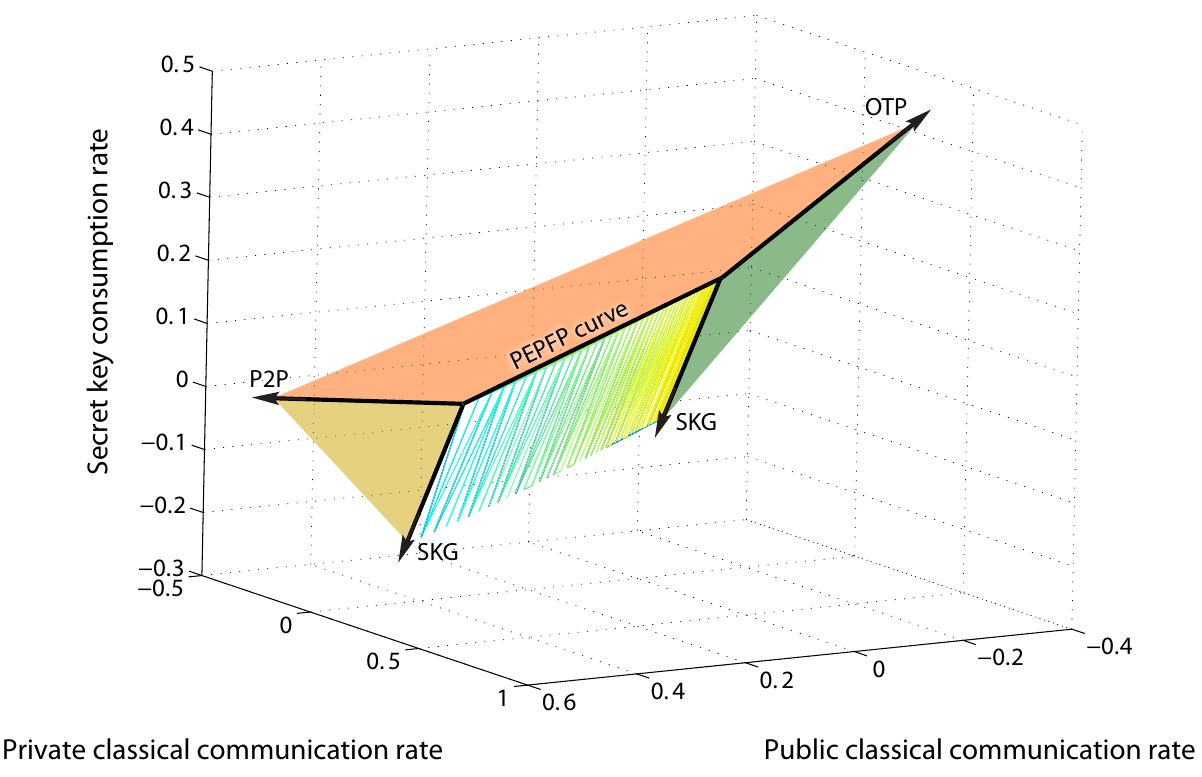}%
%{\special{ language "Scientific Word";  type "GRAPHIC";
%maintain-aspect-ratio TRUE;  display "USEDEF";  valid_file "F";
%width 5.5486in;  height 3.5042in;  depth 0pt;  original-width 7.9597in;
%original-height 5.0133in;  cropleft "0";  croptop "1";  cropright "1";
%cropbottom "0";  filename 'cloning-10.pdf';file-properties "XNPEU";}}}%
%BeginExpansion
\begin{figure}
[ptb]
\begin{center}
\includegraphics[
natheight=5.013300in,
natwidth=7.959700in,
height=3.5042in,
width=5.5486in
]%
{}%
\caption{(Color online) The private dynamic capacity region for a
$1\rightarrow10$ cloning channel. The region exhibits a non-trivial resource
trade-off only on the surface below the PEPFP\ trade-off curve in the
direction of secret key distribution.}%
\label{fig:cloning-RPS}%
\end{center}
\end{figure}
%EndExpansion

\begin{theorem}
\label{thm:cloning-theorem}The private dynamic capacity region $\mathcal{C}%
_{\mathrm{{RPS}}}(\mathcal{N}_{\text{Cl}})$ of a $1\rightarrow N$ quantum
cloning channel is the set of all $R$, $P$, and $S$ such that%
\begin{align*}
R+P  &  \leq1-\log{N}+\frac{1}{\Delta_{N}}\sum_{i=0}^{N}i\log i,\\
P+S  &  \leq H\left(  \lambda_{i}\left(  \mu\right)  /\Delta_{N}\right)
-H\left(  \eta_{i}\left(  \mu\right)  /\Delta_{N}\right)  ,\\
R+P+S  &  \leq\log\left(  {N+1}\right)  -H\left(  \eta_{i}\left(  \mu\right)
/\Delta_{N}\right)  ,
\end{align*}
where $H$ is the entropy function $H\left(  \cdot\right)  \equiv-\sum
_{i}\left(  \cdot\right)  \log\left(  \cdot\right)  $,%
\begin{align*}
\Delta_{N}  &  \equiv N\left(  N+1\right)  /2,\\
\lambda_{i}\left(  \mu\right)   &  \equiv(N-2i)\mu+i\ \ \ \text{for\ \ \ }%
0\leq i\leq N,\\
\eta_{i}\left(  \mu\right)   &  \equiv(N-1-2i)\mu+i+1\ \ \ \text{for\ \ \ }%
0\leq i\leq N-1,\\
\mu &  \in\left[  0,1/2\right]  .
\end{align*}
\end{theorem}

\subsection{Quantum erasure channel}

\label{sec:single-letter-erasure}Below we show that the private dynamic
capacity region simplifies if the quantum channel is a quantum erasure
channel. A quantum erasure channel with erasure parameter $\epsilon$\ is the
following map:%
\[
\mathcal{N}_{\epsilon}\left(  \rho\right)  \equiv\left(  1-\epsilon\right)
\rho+\epsilon\left\vert e\right\rangle \left\langle e\right\vert .
\]
Notice that the receiver can perform a measurement $\left\{  \left\vert
0\right\rangle \left\langle 0\right\vert +\left\vert 1\right\rangle
\left\langle 1\right\vert ,\left\vert e\right\rangle \left\langle e\right\vert
\right\}  $ and can learn whether the channel erased the state. The receiver
can do this without disturbing the state in any way. An isometric extension
$U_{\mathcal{N}_{\epsilon}}^{A^{\prime}\rightarrow BE}$\ of it acts as follows
on a purification $\left\vert \psi\right\rangle ^{AA^{\prime}}$\ of the state
$\rho^{A^{\prime}}$:%
\[
U_{\mathcal{N}_{\epsilon}}^{A^{\prime}\rightarrow BE}\left\vert \psi
\right\rangle ^{AA^{\prime}}=\sqrt{1-\epsilon}\left\vert \psi\right\rangle
^{AB}\left\vert e\right\rangle ^{E}+\sqrt{\epsilon}\left\vert \psi
\right\rangle ^{AE}\left\vert e\right\rangle ^{B}.
\]
In the above representation, we see that the erasure channel has the
interpretation that it hands the input to Bob with probability $1-\epsilon$
while giving an erasure flag $\left\vert e\right\rangle $ to Eve, and it hands
the input to Eve with probability $\epsilon$ while giving the erasure flag to
Bob. Figure~\ref{fig:erasure-RPS}\ plots the region for an erasure channel
with erasure parameter $\epsilon=1/4$, and the proof of the below theorem
appears in the appendix.%
%TCIMACRO{\FRAME{ftbpFU}{5.5486in}{3.3174in}{0pt}{\Qcb{(Color online) The
%private dynamic capacity region for a quantum erasure channel with erasure
%parameter $\epsilon=1/4$. The erasure channel does not have a non-trivial
%trade-off, i.e., time-sharing between different protocols is the optimal
%strategy.}}{\Qlb{fig:erasure-RPS}}{erasure-rps-trade-off.pdf}%
%{\special{ language "Scientific Word";  type "GRAPHIC";
%maintain-aspect-ratio TRUE;  display "USEDEF";  valid_file "F";
%width 5.5486in;  height 3.3174in;  depth 0pt;  original-width 8.1863in;
%original-height 4.8801in;  cropleft "0";  croptop "1";  cropright "1";
%cropbottom "0";
%filename '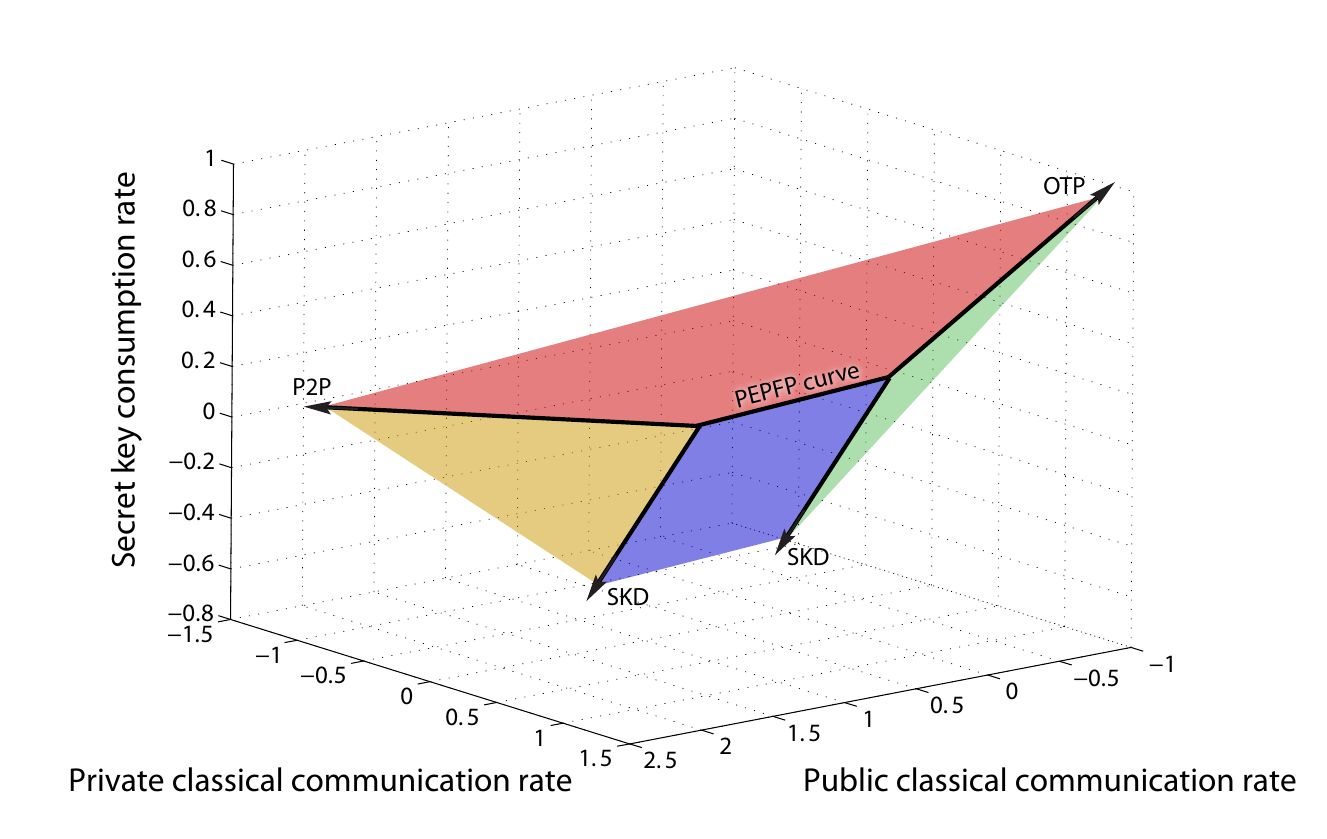';file-properties "XNPEU";}}}%
%BeginExpansion
\begin{figure}
[ptb]
\begin{center}
\includegraphics[
natheight=4.880100in,
natwidth=8.186300in,
height=3.3174in,
width=5.5486in
]%
{}%
\caption{(Color online) The private dynamic capacity region for a quantum
erasure channel with erasure parameter $\epsilon=1/4$. The erasure channel
does not have a non-trivial trade-off, i.e., time-sharing between different
protocols is the optimal strategy.}%
\label{fig:erasure-RPS}%
\end{center}
\end{figure}
%EndExpansion

\begin{theorem}
\label{thm:erasure-theorem}The private dynamic capacity region $\mathcal{C}%
_{\mathrm{{RPS}}}(\mathcal{N}_{\epsilon})$ of a quantum erasure channel
$\mathcal{N}_{\epsilon}$ is the set of all $R$, $P$, and $S$ such that%
\begin{align*}
R+P  &  \leq\left(  1-\epsilon\right)  ,\\
P+S  &  \leq\left(  1-2\epsilon\right)  H_{2}\left(  p\right)  ,\\
R+P+S  &  \leq1-\epsilon-\epsilon H_{2}\left(  p\right)  ,
\end{align*}
where $p\in\left[  0,1/2\right]  $.
\end{theorem}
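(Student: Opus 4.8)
The plan is to single-letterize the region by invoking the simplified capacity formulas for degradable and antidegradable channels and then to carry out the resulting optimization explicitly, exploiting the non-disturbing erasure detection. The quantum erasure channel $\mathcal{N}_\epsilon$ is degradable for $\epsilon\le 1/2$ and antidegradable for $\epsilon\ge 1/2$, so in the two regimes I would invoke Lemma~\ref{lem:degradable-formula} and Lemma~\ref{lem:antidegradable-formula}, respectively, to restrict attention to pure-state ensembles $\{p_{X,Y}(x,y),\psi_{x,y}^{A^{\prime}}\}$ and to the simplified objectives $g_{\lambda,\mu}$ and $h_{\lambda,\mu}$. To show that the region genuinely single-letterizes, so that the regularization in (\ref{eq:multi-letter}) is unnecessary, I would prove additivity of the relevant formula for $\mathcal{N}_\epsilon^{\otimes n}$. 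The key structural fact is that Bob can measure $\{\left\vert 0\right\rangle \left\langle 0\right\vert +\left\vert 1\right\rangle \left\langle 1\right\vert ,\left\vert e\right\rangle \left\langle e\right\vert \}$ on each output without disturbing the transmitted state, so one may adjoin to $B^{n}$ a classical flag $W$ recording the erasure pattern; conditioning on $W$ decouples the $n$ uses, after which the multi-use entropies factor across the good and erased positions and the objective functions add.

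The heart of the explicit computation is a pair of cancellations arising from the isometry $U_{\mathcal{N}_\epsilon}^{A^{\prime}\rightarrow BE}\left\vert \psi\right\rangle =\sqrt{1-\epsilon}\left\vert \psi\right\rangle ^{B}\left\vert e\right\rangle ^{E}+\sqrt{\epsilon}\left\vert e\right\rangle ^{B}\left\vert \psi\right\rangle ^{E}$. For a pure input $\psi_{x,y}^{A^{\prime}}$ the conditional outputs are $(1-\epsilon)\psi_{x,y}+\epsilon\left\vert e\right\rangle \left\langle e\right\vert$ on $B$ and $\epsilon\psi_{x,y}+(1-\epsilon)\left\vert e\right\rangle \left\langle e\right\vert$ on $E$, so that $H(B|XY)_{\sigma}=H(E|XY)_{\sigma}=H_{2}(\epsilon)$; similarly $H(B|X)_{\sigma}=(1-\epsilon)\sum_{x}p_{X}(x)H(\rho_{x})+H_{2}(\epsilon)$ and $H(E|X)_{\sigma}=\epsilon\sum_{x}p_{X}(x)H(\rho_{x})+H_{2}(\epsilon)$, where $\rho_{x}\equiv\sum_{y}p_{Y|X}(y|x)\psi_{x,y}^{A^{\prime}}$, and $H(B)_{\sigma}=(1-\epsilon)H(\bar{\rho})+H_{2}(\epsilon)$ with $\bar{\rho}\equiv\sum_{x,y}p_{X,Y}(x,y)\psi_{x,y}^{A^{\prime}}$. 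Substituting into the three trade-off quantities of Theorem~\ref{thm:main-theorem}, the $H_{2}(\epsilon)$ terms cancel and I would obtain
\begin{align*}
I(YX;B)_{\sigma} &= (1-\epsilon)H(\bar{\rho}),\\
I(Y;B|X)_{\sigma}-I(Y;E|X)_{\sigma} &= (1-2\epsilon)\textstyle\sum_{x}p_{X}(x)H(\rho_{x}),\\
I(YX;B)_{\sigma}-I(Y;E|X)_{\sigma} &= (1-\epsilon)H(\bar{\rho})-\epsilon\textstyle\sum_{x}p_{X}(x)H(\rho_{x}).
\end{align*}

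These three expressions depend on the ensemble only through $H(\bar{\rho})\le 1$ and the average conditional entropy $\bar{h}\equiv\sum_{x}p_{X}(x)H(\rho_{x})\in[0,H(\bar{\rho})]$. For the converse I would bound $H(\bar{\rho})\le 1$ and choose $p\in[0,1/2]$ with $H_{2}(p)=\bar{h}$, which places every one-state polytope inside the claimed region; for $\epsilon\ge 1/2$ the factor $(1-2\epsilon)$ is non-positive and one simply takes $p=0$, collapsing the genuine trade-off to $P+S\le 0$ in accordance with antidegradability. For achievability I would exhibit a single ensemble that saturates all three bounds at a given $p$: take $X$ uniform on $\{0,1\}$ with $\rho_{0}=\mathrm{diag}(p,1-p)$ and $\rho_{1}=\mathrm{diag}(1-p,p)$, and let $Y$ label the computational-basis eigenstates, so that $\bar{\rho}=\pi$ (hence $H(\bar{\rho})=1$) while $\bar{h}=H_{2}(p)$. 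Feeding this ensemble into the direct coding theorem of Section~\ref{sec:achievable} and taking the union over $p\in[0,1/2]$ yields exactly the region of Theorem~\ref{thm:erasure-theorem}.

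I expect the additivity argument to be the main obstacle. Once the single-use entropies are in hand the optimization is routine—both the converse, since every one-state polytope embeds in the claimed region via a suitable $p$, and the achievability, since the two-state ensemble saturates all three facets—but showing that flag-conditioning genuinely makes the multi-use objective additive, so that the regularization in (\ref{eq:multi-letter}) collapses, is the step requiring real work, especially in the degradable regime where one must argue additivity of $g_{\lambda,\mu}$ rather than of the simpler $h_{\lambda,\mu}$. I would also need to confirm that the union over $p\in[0,1/2]$ exhausts the region and that the boundary is assembled from the four protocols of Section~\ref{sec:achievable} as claimed.
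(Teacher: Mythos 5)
Your single-letter analysis is correct and agrees with the paper: for pure-state ensembles the erasure structure gives exactly
\begin{align*}
I(YX;B)_{\sigma} &= (1-\epsilon)H(\bar{\rho}),\\
I(Y;B|X)_{\sigma}-I(Y;E|X)_{\sigma} &= (1-2\epsilon)\bar{h},\\
I(YX;B)_{\sigma}-I(Y;E|X)_{\sigma} &= (1-\epsilon)H(\bar{\rho})-\epsilon\bar{h},
\end{align*}
and your two-state ensemble with $\bar{\rho}=\pi$ and $\bar{h}=H_{2}(p)$ is the one that saturates the boundary. Your converse route --- bound $H(\bar{\rho})\leq 1$ directly and pick $p$ with $H_{2}(p)=\bar{h}$ --- is in fact a little cleaner than the paper's, which instead adjoins a Pauli-randomizing register $I$ to force $H(B)_{\sigma}=1+H_{2}(\epsilon)$ before optimizing.

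The genuine gap is the additivity step, and the mechanism you propose for it would not work as stated. Conditioning on the erasure pattern $W$ does not decouple the $n$ uses: given $W=w$, the surviving output is the joint input state restricted to the non-erased positions, so one is left with terms like $H(A_{1}^{\prime}A_{2}^{\prime}|X)$ that do \emph{not} factor when the inputs are correlated across channel uses. The paper's Lemma~\ref{lem:erasure-base-case} confronts this head-on: after expanding all multi-use entropies via the erasure probabilities, splitting $H(A_{1}^{\prime}A_{2}^{\prime}|X\cdots)$ into single-system entropies leaves a residual term $-\left[\lambda(1-2\epsilon)-\mu\epsilon^{2}\right]I(A_{1}^{\prime};A_{2}^{\prime}|X\cdots)$, whose sign is not automatically favorable. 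The missing idea is Lemma~\ref{lem:suff-condition}: one first shows that for $\lambda(1-2\epsilon)<\mu\epsilon$ the formula (\ref{eq:objective-erasure}) degenerates to the classical-capacity point $(1-\epsilon,0,0)$ translated by the unit resource region, so it suffices to consider $\lambda(1-2\epsilon)\geq\mu\epsilon\geq\mu\epsilon^{2}$, which makes the coefficient of the cross mutual information non-negative and lets subadditivity close the argument. Without that restriction on $(\lambda,\mu)$, the ``entropies factor after flag-conditioning'' claim is false and the additivity proof does not go through; you correctly identified additivity as the hard step, but the specific route you sketch for it fails.
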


\section{Conclusion}

This paper completes the information-theoretic treatment of the
Collins-Popescu analogy between classical communication, quantum
communication, entanglement and public classical communication, private
classical communication, and secret key (at least for the case of channels).
Our main theorem gives the private dynamic capacity region of a quantum
channel. The catalytic information theoretic converse proof technique again
proves to be useful in obtaining a simplified converse proof. The private
dynamic capacity region dramatically simplifies for entanglement-breaking
channels, Hadamard channels, and erasure channels, so that we can actually
plot the region for several examples of these channels.

The open question remaining is to complete the Collins-Popescu analogy for the
case of a static resource (a bipartite state shared between Alice and Bob). We
have determined the static region for the classical-quantum-entanglement
trade-off~\cite{HW09T3}, and this first step should help in completing the
analogy. Another ambitious open question would be to solve the quintuple
trade-off between public classical communication, private classical
communication, quantum communication, entanglement, and secret key, of which
the regions in this paper are merely a projection. The catalytic
information-theoretic converse proof technique should be helpful in obtaining
a capacity theorem. Completing this larger trade-off problem could further our
understanding of the nature of these different resources and their interaction
with a noisy quantum resource.

\section*{Acknowledgements}

The authors thank Patrick Hayden for suggesting the communication model in
Figure~\ref{fig:comm-model}.

\appendix

\section{Proofs}

\begin{proof}
[Theorem~\ref{thm:dephasing} (Dephasing channel region)]We first
prove that it is sufficient to consider an ensemble of the following form to
characterize the boundary points of the region:%
\begin{multline}
\frac{\nu}{2}\left\vert 0\right\rangle \left\langle 0\right\vert ^{X}%
\otimes\left\vert 0\right\rangle \left\langle 0\right\vert ^{Y}\otimes
\left\vert 0\right\rangle \left\langle 0\right\vert ^{A^{\prime}}+\frac{1-\nu
}{2}\left\vert 0\right\rangle \left\langle 0\right\vert ^{X}\otimes\left\vert
1\right\rangle \left\langle 1\right\vert ^{Y}\otimes\left\vert 1\right\rangle
\left\langle 1\right\vert ^{A^{\prime}}+\\
\frac{1-\nu}{2}\left\vert 1\right\rangle \left\langle 1\right\vert ^{X}%
\otimes\left\vert 0\right\rangle \left\langle 0\right\vert ^{Y}\otimes
\left\vert 0\right\rangle \left\langle 0\right\vert ^{A^{\prime}}+\frac{\nu
}{2}\left\vert 1\right\rangle \left\langle 1\right\vert ^{X}\otimes\left\vert
1\right\rangle \left\langle 1\right\vert ^{Y}\otimes\left\vert 1\right\rangle
\left\langle 1\right\vert ^{A^{\prime}}, \label{eq:mu-cq-state-CEQ}%
\end{multline}
where $\nu\in\left[  0,1/2\right]  $. We can use the simplified form of the
private dynamic capacity formula in Lemma~\ref{lem:degradable-formula}%
\ because the dephasing channel is a degradable channel. Consider a
classical-quantum state with a finite number $N_{x}N_{y}$ of conditional
density operators $\phi_{x,y}^{A^{\prime}}$:%
\begin{equation}
\rho^{XYA^{\prime}}\equiv\sum_{x=0}^{N_{x}-1}\sum_{y=0}^{N_{y}-1}p_{X}\left(
x\right)  p_{Y|X}\left(  y|x\right)  |x\rangle\langle x|^{X}\otimes\left\vert
y\right\rangle \left\langle y\right\vert ^{Y}\otimes\phi_{x,y}^{A^{\prime}%
}.\nonumber
\end{equation}
Let $\phi_{x}^{A^{\prime}}$ denote the conditional states if $X$ is known but
$Y$ is not:%
\[
\phi_{x}^{A^{\prime}}\equiv\sum_{y=0}^{N_{y}-1}p_{Y|X}(y|x)\phi_{x,y}%
^{A^{\prime}}.
\]
It suffices for these states to be diagonal in the dephasing basis because the
channel output entropy when conditioned on $X$ can only be larger while the
environment's entropy when conditioned on $X$ remains constant (see Lemma~9 of
Ref.~\cite{itit2008hsieh}). We can form a new classical-quantum state with
quadruple the number of conditional density operators by applying all four
Pauli operators to the original conditional density operators:%
\[
\sigma^{XYJA^{\prime}}\equiv\sum_{x=0}^{N_{x}-1}\sum_{y=0}^{N_{y}-1}\sum
_{j=0}^{3}\frac{1}{4}p_{X}\left(  x\right)  p_{Y|X}\left(  y|x\right)
\ |x\rangle\langle x|^{X}\otimes|y\rangle\langle y|^{Y}\otimes|j\rangle\langle
j|^{J}\otimes\sigma_{j}\phi_{x,y}^{A^{\prime}}\sigma_{j},
\]
where $\sigma_{0}=I$, $\sigma_{1}=\sigma_{Z}$, $\sigma_{2}=\sigma_{X}$ and
$\sigma_{3}=\sigma_{Y}$ are the Pauli matrices. Let $\rho^{XYBE}$ and
$\sigma^{XYJBE}$ be the respective states after sending the $A^{\prime}$
system of $\rho^{XYA^{\prime}}$ and $\sigma^{XYJA^{\prime}}$ through the
isometric extension $U_{\mathcal{N}}^{A^{\prime}\rightarrow BE}$ of the
dephasing channel. Consider the following chain of inequalities that holds for
all $\lambda,\mu\geq0$:%
\begin{align*}
&  H\left(  B\right)  _{\rho}-H\left(  B|YX\right)  _{\rho}+\lambda\left[
H\left(  B|X\right)  _{\rho}-H\left(  E|X\right)  _{\rho}\right]  +\mu\left[
H\left(  B\right)  _{\rho}-H\left(  E|X\right)  _{\rho}\right] \\
&  =H\left(  B\right)  _{\rho}-H\left(  B|YXJ\right)  _{\sigma}+\lambda\left[
H\left(  B|XJ\right)  _{\sigma}-H\left(  E|XJ\right)  _{\sigma}\right]
+\mu\left[  H\left(  B\right)  _{\rho}-H\left(  E|XJ\right)  _{\sigma}\right]
\\
&  \leq H\left(  B\right)  _{\sigma}-H\left(  B|YXJ\right)  _{\sigma}%
+\lambda\left[  H\left(  B|XJ\right)  _{\sigma}-H\left(  E|XJ\right)
_{\sigma}\right]  +\mu\left[  H\left(  B\right)  _{\sigma}-H\left(
E|XJ\right)  _{\sigma}\right] \\
&  =1-H\left(  B|YXJ\right)  _{\sigma}+\lambda\left[  H\left(  B|XJ\right)
_{\sigma}-H\left(  E|XJ\right)  _{\sigma}\right]  +\mu\left[  1-H\left(
E|XJ\right)  _{\sigma}\right] \\
&  =1+\mu+\sum_{x=1}^{N_{x}-1}p_{X}\left(  x\right)  \left[  -H\left(
B\right)  _{\mathcal{N}\left(  \phi_{x,y}\right)  }+\lambda H\left(  B\right)
_{\mathcal{N}\left(  \phi_{x}\right)  }-\left(  \lambda+\mu\right)  H\left(
E\right)  _{\mathcal{N}^{c}\left(  \phi_{x}\right)  }\right] \\
&  \leq1+\mu+\max_{x}\left[  -H\left(  B\right)  _{\mathcal{N}\left(
\phi_{x,y}\right)  }+\lambda H\left(  B\right)  _{\mathcal{N}\left(  \phi
_{x}\right)  }-\left(  \lambda+\mu\right)  H\left(  E\right)  _{\mathcal{N}%
^{c}\left(  \phi_{x}\right)  }\right] \\
&  =1+\mu-H\left(  B\right)  _{\mathcal{N}\left(  \phi_{x,y}^{\ast}\right)
}+\lambda H\left(  B\right)  _{\mathcal{N}\left(  \phi_{x}^{\ast}\right)
}-\left(  \lambda+\mu\right)  H\left(  E\right)  _{\mathcal{N}^{c}\left(
\phi_{x}^{\ast}\right)  }\\
&  =1+\mu+\lambda H\left(  B\right)  _{\mathcal{N}\left(  \phi_{x}^{\ast
}\right)  }-\left(  \lambda+\mu\right)  H\left(  E\right)  _{\mathcal{N}%
^{c}\left(  \phi_{x}^{\ast}\right)  }\\
&  =1+\lambda\left[  H\left(  B\right)  _{\mathcal{N}\left(  \phi_{x}^{\ast
}\right)  }-H\left(  E\right)  _{\mathcal{N}^{c}\left(  \phi_{x}^{\ast
}\right)  }\right]  +\mu\left[  1-H\left(  E\right)  _{\mathcal{N}^{c}\left(
\phi_{x}^{\ast}\right)  }\right]  .
\end{align*}
The first equality follows because conditioning on $J$ does not change the
conditional entropies. That is, the conditional entropies $H\left(
B|X\right)  $ and $H\left(  B|YX\right)  $ are invariant under a Pauli
operator on the input state that commutes with the channel. Furthermore, a
Pauli operator on the input state does not change the eigenvalues for the
output of the dephasing channel's complementary channel: $H(E)_{\mathcal{N}%
^{c}(X\phi_{x}^{A^{\prime}}X)}=H(E)_{\mathcal{N}^{c}(\phi_{x}^{A^{\prime}})}$.
The first inequality follows because entropy is concave, i.e., the local state
$\sigma^{B}$ is a mixed version of $\rho^{B}$. The second equality follows
because%
\[
H(B)_{\sigma^{B}}=H\left(  \sum_{x,y,j}\frac{1}{4}p_{X}\left(  x\right)
p_{Y|X}\left(  y|x\right)  \sigma_{j}\phi_{x,y}^{B}\sigma_{j}\right)
=H\left(  \sum_{x,y}p_{X}\left(  x\right)  p_{Y|X}\left(  y|x\right)
I/2\right)  =1.
\]
The third equality follows because the system $X$ is classical and
conditioning on $J$ does not change the entropies. The second inequality
follows because the maximum value of a realization of a random variable is not
less than its expectation. The fourth equality follows be defining the
ensemble with a $\ast$ to be the optimal ensemble with respect to the
maximization over $x$. The fifth equality follows from a further optimization:
it is better to choose the pure states $\phi_{x,y}^{\ast}$ to be pure states
in the basis of the dephasing channel. The final equality follows by
rearranging terms. The final state $\phi_{x}^{\ast}$ then has the form
$\nu\left\vert 0\right\rangle \left\langle 0\right\vert ^{A^{\prime}}+\left(
1-\nu\right)  \left\vert 1\right\rangle \left\langle 1\right\vert ^{A^{\prime
}}$ for some value of $\nu$ because $\phi_{x}^{\ast}$ is diagonal in the
dephasing basis. The three other states $\sigma_{X}\phi_{x}^{\ast}\sigma_{X}$,
$\sigma_{Y}\phi_{x}^{\ast}\sigma_{Y}$, and $\sigma_{Z}\phi_{x}^{\ast}%
\sigma_{Z}$ have a similar form, but $\phi_{x}^{\ast}=\sigma_{Z}\phi_{x}%
^{\ast}\sigma_{Z}$ and $\sigma_{X}\phi_{x}^{\ast}\sigma_{X}=\sigma_{Y}\phi
_{x}^{\ast}\sigma_{Y}$. Thus, it suffices to choose the state $\phi_{x}^{\ast
}$ and its bit-flipped version, and the variable $Y$ needs only have
distribution $\left(  \nu,1-\nu\right)  $ because of the particular form of
$\phi_{x}^{\ast}$. Thus, an ensemble of the kind in (\ref{eq:mu-cq-state-CEQ})
is sufficient to attain a point on the boundary of the region. Evaluating the
entropic quantities in Theorem~\ref{thm:main-theorem}\ on a state of the above
form gives the expression for the region in Theorem~\ref{thm:dephasing}.
\end{proof}

\begin{proof}
[Theorem~\ref{thm:cloning-theorem} (Cloning channel region)]We first
prove that the same ensemble as in (\ref{eq:mu-cq-state-CEQ}) suffices for
achieving the limits of the region. We exploit the following classical-quantum
states:%
\begin{align*}
\rho^{XYA^{\prime}}  &  \equiv\sum_{x,y}p_{X}\left(  x\right)  p_{Y|X}\left(
y|x\right)  \left\vert x\right\rangle \left\langle x\right\vert ^{X}%
\otimes\left\vert y\right\rangle \left\langle y\right\vert ^{Y}\otimes
\phi_{x,y}^{A^{\prime}},\\
\sigma^{XYIA^{\prime}}  &  \equiv\sum_{x,i}\frac{1}{4}p_{X}\left(  x\right)
\left\vert x\right\rangle \left\langle x\right\vert ^{X}\otimes\left\vert
y\right\rangle \left\langle y\right\vert ^{Y}\otimes\left\vert i\right\rangle
\left\langle i\right\vert ^{I}\otimes(\sigma_{i}^{A^{\prime}})\phi
_{x,y}^{A^{\prime}}(\sigma_{i}^{A^{\prime}}),
\end{align*}
where the states $\phi_{x,y}^{A^{\prime}}$ are pure, and let $\rho^{XYBE}$ and
$\sigma^{XYIBE}$ be the states obtained by transmitting the $A^{\prime}$
system through the isometric extension of the erasure channel. Let $\sigma
_{x}^{A^{\prime}Y}\equiv\sum_{y}p_{Y|X}\left(  y|x\right)  \left\vert
y\right\rangle \left\langle y\right\vert ^{Y}\otimes\phi_{x,y}^{A^{\prime}}$.
The cloning channel is degradable and covariant~\cite{B09,BHP09}, the latter
meaning that the following relationships hold for any input density operator
$\sigma$ and any unitary $V$ acting on the input system $A^{\prime}$:%
\begin{align*}
\mathcal{N}_{\text{Cl}}\left(  V\sigma V^{\dag}\right)   &  =R_{V}%
\mathcal{N}_{\text{Cl}}\left(  \sigma\right)  R_{V}^{\dag},\\
\mathcal{N}_{\text{Cl}}^{c}\left(  V\sigma V^{\dag}\right)   &  =S_{V}%
\mathcal{N}_{\text{Cl}}^{c}\left(  \sigma\right)  S_{V}^{\dag},
\end{align*}
where $R_{V}$ and $S_{V}$ are higher-dimensional irreducible representations
of the unitary $V$ on the respective systems $B$ and $E$. The state
$\sigma^{B}$ is equal to the maximally mixed state on the symmetric subspace
for the following reasons:%
\begin{equation}
\sigma^{B}=\mathcal{N}_{\text{Cl}}\left(  \sigma^{A^{\prime}}\right)
=\mathcal{N}_{\text{Cl}}\left(  \frac{I^{A^{\prime}}}{2}\right)
=\mathcal{N}_{\text{Cl}}\left(  \int V\omega V^{\dag}\ \text{d}V\right)  =\int
R_{V}\mathcal{N}\left(  \omega\right)  R_{V^{\dag}}\ \text{d}V=\frac{1}%
{N+1}\sum_{i=0}^{N}\left\vert i\right\rangle \left\langle i\right\vert ^{B},
\label{eq:cloning-unital-relation}%
\end{equation}
where the fourth equality exploits the linearity and covariance of the cloning
channel $\mathcal{N}_{\text{Cl}}$.
Consider the following chain of inequalities:%
\begin{align*}
&  H\left(  B\right)  _{\rho}-H\left(  B|YX\right)  _{\rho}+\lambda\left[
H\left(  B|X\right)  _{\rho}-H\left(  E|X\right)  _{\rho}\right]  +\mu\left[
H\left(  B\right)  _{\rho}-H\left(  E|X\right)  _{\rho}\right] \\
&  =\left(  \mu+1\right)  H\left(  B\right)  _{\rho}-H\left(  B|YX\right)
_{\rho}+\lambda H\left(  B|X\right)  _{\rho}-\left(  \lambda+\mu\right)
H\left(  E|X\right)  _{\rho}\\
&  =\left(  \mu+1\right)  H\left(  B\right)  _{\rho}-H\left(  B|YXI\right)
_{\sigma}+\lambda H\left(  B|XI\right)  _{\sigma}-\left(  \lambda+\mu\right)
H\left(  E|XI\right)  _{\sigma}\\
&  \leq\left(  \mu+1\right)  H\left(  B\right)  _{\sigma}-H\left(
B|YXI\right)  _{\sigma}+\lambda H\left(  B|XI\right)  _{\sigma}-\left(
\lambda+\mu\right)  H\left(  E|XI\right)  _{\sigma}\\
&  =\left(  \mu+1\right)  \log\left(  N+1\right)  -\sum_{x,y}p_{X}\left(
x\right)  p_{Y|X}\left(  y|x\right)  H\left(  \frac{i}{\Delta_{N}}\right)
+\sum_{x}p_{X}\left(  x\right)  \left[  \lambda H\left(  B\right)
_{\mathcal{N}(\sigma_{x}^{A^{\prime}})}-\left(  \lambda+\mu\right)  H\left(
E\right)  _{\mathcal{N}^{c}(\sigma_{x}^{A^{\prime}})}\right] \\
&  \leq\left(  \mu+1\right)  \log\left(  N+1\right)  -H\left(  \frac{i}%
{\Delta_{N}}\right)  +\lambda H\left(  B\right)  _{\mathcal{N}(\sigma
_{x}^{\ast})}-\left(  \lambda+\mu\right)  H\left(  E\right)  _{\mathcal{N}%
^{c}(\sigma_{x}^{\ast})}\\
&  =1-\log{N}+\frac{1}{\Delta_{N}}\sum_{i=0}^{N}i\log i+\lambda\left[
H\left(  B\right)  _{\mathcal{N}(\sigma_{x}^{\ast})}-H\left(  E\right)
_{\mathcal{N}^{c}(\sigma_{x}^{\ast})}\right]  +\mu\left[  \log\left(
N+1\right)  -H\left(  E\right)  _{\mathcal{N}^{c}(\sigma_{x}^{\ast})}\right]
.
\end{align*}
The first equality follows by rearranging terms. The second equality follows
because the conditional entropies are invariant under unitary transformations:%
\[
H(B)_{R_{\sigma_{j}}\rho_{x}^{B}R_{\sigma_{j}}^{\dag}}=H(B)_{\rho_{x}^{B}%
},\ \ \ \ \ \ H(E)_{S_{\sigma_{j}}\rho_{x}^{E}S_{\sigma_{j}}^{\dag}%
}=H(E)_{\rho_{x}^{E}},
\]
where $R_{\sigma_{j}}$ and $S_{\sigma_{j}}$ are higher-dimensional
representations of $\sigma_{j}$ on systems $B$ and $E$, respectively. The
first inequality follows because entropy is concave, i.e., the local state
$\sigma^{B}$ is a mixed version of $\rho^{B}$. The third equality follows
because (\ref{eq:cloning-unital-relation}) implies that $H(B)_{\sigma^{B}%
}=\log\left(  N+1\right)  $, from applying unitary covariance of the cloning
channel to the term $H\left(  B|YXI\right)  _{\sigma}=\sum_{x,y}p_{X}\left(
x\right)  p_{Y|X}\left(  y|x\right)  H\left(  B\right)  _{\mathcal{N}\left(
\psi_{x,y}\right)  }$ (all pure states have the same output entropy---thus, it
does not matter which particular pure states we input), and from expanding the
conditional entropies $H\left(  B|XI\right)  _{\sigma}$ and $H\left(
E|XI\right)  _{\sigma}$. The second inequality follows because the maximum
value of a realization of a random variable is not less than its expectation.
The final equality follows by observing that $\log\left(  N+1\right)
-H\left(  \frac{i}{\Delta_{N}}\right)  =1-\log{N}+\frac{1}{\Delta_{N}}%
\sum_{i=0}^{N}i\log i$.
The entropies $H(B)_{\mathcal{N}(\sigma_{x}^{\ast})}$ and $H(E)_{\mathcal{N}%
^{c}(\sigma_{x}^{\ast})}$ depend only on the eigenvalues of the input state
$\sigma_{x}^{\ast}$ by the covariance of both the cloning channel and its
complement. We can therefore choose $\sigma_{x}^{\ast}$ to be diagonal in the
$\{|0\rangle,|1\rangle\}$ basis of $A^{\prime}$, and without loss of
generality, suppose these eigenvalues are equal to $\nu$ and $1-\nu$. The
ensemble defined to consist of $\sigma_{x}^{\ast}$ and $X\sigma_{x}^{\ast}X$
assigned equal probabilities then saturates the upper bound.
The final analytic form in the statement of the theorem follows by evaluating
the entropies and these calculations are similar to calculations available in
Section~V-B of Ref.~\cite{BHTW10}.
\end{proof}

\bigskip

\textbf{Proof of Theorem~\ref{thm:erasure-theorem} (Erasure channel region)}.
We prove Theorem~\ref{thm:erasure-theorem} in several steps.

\begin{lemma}
The private dynamic capacity formula in (\ref{eq:objective})\ simplifies as
follows for a quantum erasure channel~$\mathcal{N}_{\epsilon}$:%
\begin{equation}
P_{\lambda,\mu}\left(  \mathcal{N}_{\epsilon}\right)  \equiv\max_{p\in\left[
0,1/2\right]  }\left(  1-\epsilon\right)  +\lambda\left(  1-2\epsilon\right)
H_{2}\left(  p\right)  +\mu\left(  \left(  1-\epsilon\right)  -\epsilon
H_{2}\left(  p\right)  \right)  . \label{eq:objective-erasure}%
\end{equation}
Thus, the \textquotedblleft one-shot\textquotedblright\ dynamic capacity
region of a quantum erasure channel is as Theorem~\ref{thm:erasure-theorem} states.
\end{lemma}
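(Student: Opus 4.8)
The plan is to exploit the orthogonal direct-sum structure of the erasure channel's isometry, which makes every relevant entropy an affine function of the input entropy, and then to reduce the maximization in (\ref{eq:objective}) to a two-parameter optimization. First I would record the two elementary identities that follow directly from $U_{\mathcal{N}_{\epsilon}}^{A'\rightarrow BE}$: for any input $\rho^{A'}$,
\[
H(B)_{\mathcal{N}_{\epsilon}(\rho)}=H_{2}(\epsilon)+(1-\epsilon)H(\rho),\qquad H(E)_{\mathcal{N}_{\epsilon}^{c}(\rho)}=H_{2}(\epsilon)+\epsilon H(\rho),
\]
since on $B$ the output is the orthogonal mixture of $\rho$ (weight $1-\epsilon$) and the flag $|e\rangle\langle e|$ (weight $\epsilon$), and on $E$ the roles are interchanged.

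Because the erasure channel is degradable for $\epsilon\leq1/2$, I would invoke Lemma~\ref{lem:degradable-formula} to replace $P_{\lambda,\mu}$ by $g_{\lambda,\mu}$ and restrict to pure conditional inputs $\psi_{x,y}^{A'}$ (for $\epsilon\geq1/2$ the channel is antidegradable, and Lemma~\ref{lem:antidegradable-formula} yields the same endpoint). Writing $\rho_{x}\equiv\sum_{y}p_{Y|X}(y|x)\psi_{x,y}^{A'}$ and $\bar{\rho}\equiv\sum_{x}p_{X}(x)\rho_{x}$, the identities above make all $H_{2}(\epsilon)$ terms and all pure-input terms cancel, collapsing $g_{\lambda,\mu}$ to
\[
(1-\epsilon)(1+\mu)H(\bar{\rho})+\left[\lambda(1-2\epsilon)-\mu\epsilon\right]\sum_{x}p_{X}(x)H(\rho_{x}).
\]
Thus the whole optimization depends on the ensemble only through the two numbers $u\equiv H(\bar{\rho})$ and $v\equiv\sum_{x}p_{X}(x)H(\rho_{x})$.

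The core of the argument is then the elementary maximization of $au+cv$ over the region $0\leq v\leq u\leq1$, where $a\equiv(1-\epsilon)(1+\mu)\geq0$ and $c\equiv\lambda(1-2\epsilon)-\mu\epsilon$; here $v\leq u$ is concavity of entropy and $u\leq1$ is the qubit bound. Since $a\geq0$ it is never disadvantageous to set $u=1$, i.e.\ $\bar{\rho}=I/2$, after which $v$ may range freely over $[0,1]$. The delicate point, which I expect to be the main obstacle, is to show that $u=1$ and an arbitrary target $v=H_{2}(p)$ can be realized \emph{simultaneously} by an admissible ensemble rather than only individually; this is settled by the ensemble in (\ref{eq:mu-cq-state-CEQ}), where $\rho_{x}$ with eigenvalues $(p,1-p)$ is paired with its bit-flip $X\rho_{x}X$ with equal probability, giving $\bar{\rho}=I/2$ and $\sum_{x}p_{X}(x)H(\rho_{x})=H_{2}(p)$. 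Maximizing $a+c\,H_{2}(p)$ over $p\in[0,1/2]$ (the range covering $H_{2}(p)\in[0,1]$, which automatically handles both signs of $c$) reproduces exactly (\ref{eq:objective-erasure}).

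Finally, since this single family of ensembles is simultaneously optimal for every $\lambda,\mu\geq0$, it traces out the entire boundary, so the one-shot region is obtained by evaluating the three bounds of Theorem~\ref{thm:main-theorem} on it. Using $\bar{\rho}=I/2$ and $\sum_{x}p_{X}(x)H(\rho_{x})=H_{2}(p)$ gives $I(YX;B)_{\sigma}=1-\epsilon$, $I(Y;B|X)_{\sigma}-I(Y;E|X)_{\sigma}=(1-2\epsilon)H_{2}(p)$, and $I(YX;B)_{\sigma}-I(Y;E|X)_{\sigma}=(1-\epsilon)-\epsilon H_{2}(p)$, which are precisely the inequalities in Theorem~\ref{thm:erasure-theorem}.
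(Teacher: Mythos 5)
Your proposal is correct, and it reaches the formula by the same essential mechanism as the paper: invoke Lemma~\ref{lem:degradable-formula} to restrict to pure conditional inputs, then use the direct-sum structure of the erasure isometry to express every entropy affinely in input entropies, at which point the optimization collapses to a scalar problem. The packaging differs in ways worth noting. The paper introduces a classical erasure-flag variable and a Pauli-symmetrized ensemble $\sigma^{XYIA'}$, bounds $H(A')_{\rho}\leq H(A')_{\sigma}=1$ by concavity, and then applies an average-versus-maximum step; you instead state the identities $H(B)=H_{2}(\epsilon)+(1-\epsilon)H(\rho)$ and $H(E)=H_{2}(\epsilon)+\epsilon H(\rho)$ up front, reduce to maximizing $au+cv$ over $0\leq v\leq u\leq1$ with $u=H(\bar{\rho})$ and $v=\sum_{x}p_{X}(x)H(\rho_{x})$, and verify achievability of the corner $(1,H_{2}(p))$ with the bit-flip-paired ensemble. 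Your route is cleaner and is more careful on two points the paper glosses over: (i) you handle $\epsilon\geq1/2$, where the erasure channel is antidegradable rather than degradable, via Lemma~\ref{lem:antidegradable-formula} and check consistency with (\ref{eq:objective-erasure}), whereas the paper's appeal to degradability is strictly valid only for $\epsilon\leq1/2$; and (ii) you treat both signs of the coefficient $c=\lambda(1-2\epsilon)-\mu\epsilon$ within the same maximization (the range $H_{2}(p)\in[0,1]$ absorbs the sign), whereas the paper's ``average is less than a maximum'' step implicitly assumes $c\geq0$ and defers the negative case to Lemma~\ref{lem:suff-condition}. The cancellation of all $H_{2}(\epsilon)$ terms, which the paper buries in the flag-variable step, is also transparent in your affine identities.
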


\begin{proof}
We can use the simplified form of the private dynamic capacity formula in
Lemma~\ref{lem:degradable-formula}\ because the quantum erasure channel is a
degradable channel. We exploit the following classical-quantum states:%
\begin{align}
\rho^{XYA^{\prime}} &  \equiv\sum_{x,y}p_{X}\left(  x\right)  p_{Y|X}\left(
y|x\right)  \left\vert x\right\rangle \left\langle x\right\vert ^{X}%
\otimes\left\vert y\right\rangle \left\langle y\right\vert ^{Y}\otimes
\phi_{x,y}^{A^{\prime}},\nonumber\\
\sigma^{XYIA^{\prime}} &  \equiv\sum_{x,i}\frac{1}{4}p_{X}\left(  x\right)
\left\vert x\right\rangle \left\langle x\right\vert ^{X}\otimes\left\vert
y\right\rangle \left\langle y\right\vert ^{Y}\otimes\left\vert i\right\rangle
\left\langle i\right\vert ^{I}\otimes(\sigma_{i}^{A^{\prime}})\phi
_{x,y}^{A^{\prime}}(\sigma_{i}^{A^{\prime}}),\label{eq:cq-state-erasure-mixed}%
\end{align}
and let $\rho^{XYBE}$ and $\sigma^{XYIBE}$ be the states obtained by
transmitting the $A^{\prime}$ system through the isometric extension of the
erasure channel. Let $\sigma_{x}^{A^{\prime}}\equiv\sum_{y}p_{Y|X}\left(
y|x\right)  \phi_{x,y}^{A^{\prime}}$. Furthermore, let the eigenvalues of the
state $\sigma_{x}^{A^{\prime}}$ with highest entropy on system $A^{\prime}$ be
$p$ and $1-p$. Consider that the following chain of inequalities holds for any
state $\rho^{XYBE}$:%
\begin{align*}
&  H\left(  B\right)  _{\rho}-H\left(  B|YX\right)  _{\rho}+\lambda\left[
H\left(  B|X\right)  _{\rho}-H\left(  E|X\right)  _{\rho}\right]  +\mu\left[
H\left(  B\right)  _{\rho}-H\left(  E|X\right)  _{\rho}\right]  \\
&  =\left(  \mu+1\right)  H\left(  B\right)  _{\rho}-H\left(  B|YX\right)
_{\rho}+\lambda H\left(  B|X\right)  _{\rho}-\left(  \lambda+\mu\right)
H\left(  E|X\right)  _{\rho}\\
&  =\left(  \mu+1\right)  H\left(  B|X_{E}\right)  _{\rho}-H\left(
B|YXX_{E}\right)  _{\rho}+\lambda H\left(  B|XX_{E}\right)  _{\rho}-\left(
\lambda+\mu\right)  H\left(  E|XX_{E}\right)  _{\rho}\\
&  =\left(  \mu+1\right)  \left(  1-\epsilon\right)  H\left(  A^{\prime
}\right)  _{\rho}-\left(  1-\epsilon\right)  H\left(  A^{\prime}|YX\right)
_{\rho}+\lambda\left(  1-\epsilon\right)  H\left(  A^{\prime}|X\right)
_{\rho}-\left(  \lambda+\mu\right)  \epsilon H\left(  A^{\prime}|X\right)
_{\rho}\\
&  =\left(  \mu+1\right)  \left(  1-\epsilon\right)  H\left(  A^{\prime
}\right)  _{\rho}+\left[  \lambda\left(  1-\epsilon\right)  -\left(
\lambda+\mu\right)  \epsilon\right]  H\left(  A^{\prime}|X\right)  _{\rho}\\
&  =\left(  \mu+1\right)  \left(  1-\epsilon\right)  H\left(  A^{\prime
}\right)  _{\rho}+\left[  \lambda\left(  1-\epsilon\right)  -\left(
\lambda+\mu\right)  \epsilon\right]  H\left(  A^{\prime}|XI\right)  _{\sigma}.
\end{align*}
The first equality follows by rearrnging terms. The second equality follows by
incorporating the classical erasure flag variable. The third equality follows
by exploiting the properties of the quantum erasure channel. The fourth
equality follows by rearranging terms and because the entropy $H\left(
A^{\prime}|YX\right)  _{\rho}$ vanishes (the state on $A^{\prime}$ conditioned
on both $X$ and $Y$ is pure). The fifth equality follows because $H\left(
A^{\prime}|X\right)  _{\rho}=H\left(  A^{\prime}|XI\right)  _{\sigma}$.
Continuing,%
\begin{align*}
&  \leq\left(  \mu+1\right)  \left(  1-\epsilon\right)  H\left(  A^{\prime
}\right)  _{\sigma}+\left[  \lambda\left(  1-\epsilon\right)  -\left(
\lambda+\mu\right)  \epsilon\right]  H\left(  A^{\prime}|XI\right)  _{\sigma
}\\
&  =\left(  \mu+1\right)  \left(  1-\epsilon\right)  +\left[  \lambda\left(
1-\epsilon\right)  -\left(  \lambda+\mu\right)  \epsilon\right]  H\left(
A^{\prime}|XI\right)  _{\sigma}\\
&  =\left(  \mu+1\right)  \left(  1-\epsilon\right)  +\left[  \lambda\left(
1-\epsilon\right)  -\left(  \lambda+\mu\right)  \epsilon\right]  \sum_{x}%
p_{X}\left(  x\right)  H\left(  A^{\prime}\right)  _{\sigma_{x}^{A^{\prime}}%
}\\
&  \leq\left(  \mu+1\right)  \left(  1-\epsilon\right)  +\left[
\lambda\left(  1-\epsilon\right)  -\left(  \lambda+\mu\right)  \epsilon
\right]  H\left(  A^{\prime}\right)  _{\sigma_{x}^{\ast}}\\
&  =\left(  1-\epsilon\right)  +\lambda\left(  1-2\epsilon\right)
H_{2}\left(  p\right)  +\mu\left(  \left(  1-\epsilon\right)  -\epsilon
H_{2}\left(  p\right)  \right)  .
\end{align*}
The first inequality follows because the unconditional entropy of the state
$\rho$ is always less than that of the state $\sigma$. The first equality
follows because $H\left(  A^{\prime}\right)  _{\sigma}=1$. The second equality
follows by expanding the conditional entropy. The second inequality follows
because an average is always less than a maximum. The final equality follows
by rearranging terms and by plugging in the eigenvalues of $\sigma_{x}^{\ast}%
$. The form of the private dynamic capacity formula then follows because this
chain of inequalities holds for any input ensemble.
\end{proof}

\begin{lemma}
\label{lem:suff-condition}It suffices to consider the set of $\lambda,\mu
\geq0$ for which%
\[
\lambda\left(  1-2\epsilon\right)  \geq\mu\epsilon.
\]
Otherwise, we are just maximizing the public classical capacity, which we know
from Ref.~\cite{PhysRevLett.78.3217} is equal to $1-\epsilon$.
\end{lemma}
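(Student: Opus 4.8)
The plan is to read the answer off directly from the single-letter objective in (\ref{eq:objective-erasure}), which is \emph{affine} in the binary entropy $H_{2}(p)$. First I would collect the $H_{2}(p)$ terms, rewriting the objective as
\[
(1-\epsilon)(1+\mu)+\left[  \lambda\left(  1-2\epsilon\right)  -\mu
\epsilon\right]  H_{2}\left(  p\right)  .
\]
Since $H_{2}$ is strictly increasing on $[0,1/2]$ with $H_{2}(0)=0$ and $H_{2}(1/2)=1$, maximizing over $p\in[0,1/2]$ reduces to a sign check on the coefficient $\lambda(1-2\epsilon)-\mu\epsilon$. When this coefficient is nonnegative, i.e. $\lambda(1-2\epsilon)\geq\mu\epsilon$, the maximum occurs at $p=1/2$ and the full trade-off is active; this is precisely the regime named in the statement. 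When the coefficient is negative, i.e. $\lambda(1-2\epsilon)<\mu\epsilon$, the maximum occurs at $p=0$, where $H_{2}(p)=0$, and the objective collapses to $(1-\epsilon)(1+\mu)$.

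Second, I would interpret this boundary value to confirm that the excluded regime contributes nothing new. Setting $p=0$ forces each conditional input state $\sigma_{x}^{A^{\prime}}$ to be pure, so the auxiliary variable $Y$ carries no information beyond $X$; consequently the $\lambda$-weighted term $I(Y;B|X)-I(Y;E|X)$ vanishes and the $\mu$-weighted term reduces to $\mu\, I(X;B)$. The objective therefore equals $(1+\mu)\max_{X}I(X;B)$, which is $(1+\mu)$ times the public classical capacity of the erasure channel. By Ref.~\cite{PhysRevLett.78.3217} this capacity equals $1-\epsilon$, matching $(1-\epsilon)(1+\mu)$. Hence in the regime $\lambda(1-2\epsilon)<\mu\epsilon$ the formula only reproduces the public classical capacity bound $1-\epsilon$, which already determines the $R+P$ and $R+P+S$ faces of the region and yields no additional boundary point.

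There is no substantive obstacle here: once the objective is recognized as affine in $H_{2}(p)$, the entire argument is a one-line endpoint analysis of a monotone function. The only point requiring mild care is the interpretation step---verifying that the value $(1-\epsilon)(1+\mu)$ genuinely coincides with $(1+\mu)$ times the public classical capacity and therefore imposes no constraint beyond the one already captured when $\lambda(1-2\epsilon)\geq\mu\epsilon$. This justifies restricting the optimization over supporting hyperplanes to the half-space $\lambda(1-2\epsilon)\geq\mu\epsilon$ when tracing out the nontrivial boundary of the capacity region.
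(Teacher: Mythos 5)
Your proposal is correct and follows essentially the same route as the paper: collect the $H_{2}(p)$ terms to exhibit the objective as affine in $H_{2}(p)$ with coefficient $\lambda(1-2\epsilon)-\mu\epsilon$, observe that a negative coefficient forces $p=0$, and identify the resulting value $(1+\mu)(1-\epsilon)$ with the public classical capacity point, so that regime adds nothing to the boundary. The paper phrases the last step by writing out the corresponding region and noting it is a translation of the unit resource region to $(1-\epsilon,0,0)$, while you argue via the purity of the conditional inputs at $p=0$; these are equivalent.
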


\begin{proof}
Consider rewriting the expression in (\ref{eq:objective-erasure}) as follows:%
\[
\max_{p\in\left[  0,1/2\right]  }\left(  1-\epsilon\right)  +\mu\left(
1-\epsilon\right)  +\left[  \lambda\left(  1-2\epsilon\right)  -\mu
\epsilon\right]  H_{2}\left(  p\right)  .
\]
Suppose that the expression in square brackets is negative, i.e.,%
\[
\lambda\left(  1-2\epsilon\right)  <\mu\epsilon.
\]
Then the maximization over $p$ simply chooses $p=0$ so that $H_{2}\left(
p\right)  $ vanishes and the negative term disappears. The resulting
expression for the private dynamic capacity formula is%
\[
\left(  1-\epsilon\right)  +\mu\left(  1-\epsilon\right)  ,
\]
which corresponds to the following region%
\begin{align*}
R+P  &  \leq1-\epsilon,\\
P+S  &  \leq0,\\
R+P+S  &  \leq1-\epsilon.
\end{align*}
The above region is equivalent to a translation of the unit resource capacity
region to the public classical capacity rate triple $\left(  1-\epsilon
,0,0\right)  $. Thus, it suffices to restrict the parameters $\lambda$ and
$\mu$ as above for the quantum erasure channel.
\end{proof}

\begin{lemma}
\label{lem:erasure-base-case}The following additivity relation holds for two
quantum erasure channels $\mathcal{N}_{\epsilon}$ with the same erasure
parameter $\epsilon$:%
\[
P_{\lambda,\mu}(\mathcal{N}_{\epsilon}\otimes\mathcal{N}_{\epsilon
})=P_{\lambda,\mu}(\mathcal{N}_{\epsilon})+P_{\lambda,\mu}(\mathcal{N}%
_{\epsilon}).
\]
\end{lemma}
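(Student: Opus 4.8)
The plan is to establish the nontrivial inequality $P_{\lambda,\mu}(\mathcal{N}_{\epsilon}\otimes\mathcal{N}_{\epsilon})\leq P_{\lambda,\mu}(\mathcal{N}_{\epsilon})+P_{\lambda,\mu}(\mathcal{N}_{\epsilon})$, the reverse bound being immediate by feeding the tensor product a product of the two optimizing single-channel ensembles. Since the erasure channel is degradable and a tensor product of degradable channels is degradable, Lemma~\ref{lem:degradable-formula} lets me replace $P_{\lambda,\mu}(\mathcal{N}_{\epsilon}^{\otimes 2})$ by $g_{\lambda,\mu}(\mathcal{N}_{\epsilon}^{\otimes 2})$, i.e.\ work with the objective $(1+\mu)H(B)-H(B|YX)+\lambda H(B|X)-(\lambda+\mu)H(E|X)$ evaluated on $B=B_{1}B_{2}$ and $E=E_{1}E_{2}$ for an arbitrary input ensemble. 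By Lemma~\ref{lem:suff-condition} I may further restrict to the regime $\lambda(1-2\epsilon)\geq\mu\epsilon$; in the complementary regime the formula extracts only public classical communication, whose capacity $1-\epsilon$ is additive, so additivity there follows at once.

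The key device, exactly as in the one-copy argument, is to adjoin the two classical erasure-flag variables, one per channel, which Bob (and Eve) can read off without disturbance. Conditioning every entropy on these flags is free: for each flag the coefficient identity $(1+\mu)-1+\lambda-(\lambda+\mu)=0$ annihilates the flag-entropy contributions, just as in the single-letter proof. Conditioned on the flags there are four erasure patterns, and I would use the erasure structure to rewrite each output entropy as a weighted sum of input-system entropies: the pattern ``both to Bob'' contributes $(1-\epsilon)^{2}H(A_{1}'A_{2}'|\cdot)$ to $B$, the two mixed patterns contribute $(1-\epsilon)\epsilon\,H(A_{1}'|\cdot)$ and $\epsilon(1-\epsilon)H(A_{2}'|\cdot)$ to $B$ with the mirror image going to $E$, and ``both to Eve'' gives $\epsilon^{2}H(A_{1}'A_{2}'|\cdot)$ to $E$.

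From here the argument is bookkeeping guided by signs. The $-H(B|YX)$ term now carries genuinely nonzero mixed-pattern contributions $-(1-\epsilon)\epsilon[H(A_{1}'|YX)+H(A_{2}'|YX)]$, a feature absent in the one-copy case where conditioning on $X,Y$ rendered the single input pure; since these are nonpositive I would simply discard them for an upper bound. I would then collect the coefficients of the joint entropies $H(A_{1}'A_{2}')$ and $H(A_{1}'A_{2}'|X)$, namely $(1+\mu)(1-\epsilon)^{2}$ and $\lambda(1-2\epsilon)-\mu\epsilon^{2}$, both of which are nonnegative under $\lambda(1-2\epsilon)\geq\mu\epsilon$, so subadditivity $H(A_{1}'A_{2}'|\cdot)\leq H(A_{1}'|\cdot)+H(A_{2}'|\cdot)$ decouples the two systems while preserving the upper bound. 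Regrouping leaves a symmetric sum over $j\in\{1,2\}$ of $(1+\mu)(1-\epsilon)H(A_{j}')+[\lambda(1-2\epsilon)-\mu\epsilon]H(A_{j}'|X)$; bounding each qubit entropy by its maximum value $1$ (the surviving coefficients again being nonnegative) yields precisely $P_{\lambda,\mu}(\mathcal{N}_{\epsilon})$ per summand, matching the simplified formula in (\ref{eq:objective-erasure}) at $p=1/2$, so the two terms sum to $2P_{\lambda,\mu}(\mathcal{N}_{\epsilon})$.

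The step I expect to be the main obstacle is the sign analysis: the entire reduction hinges on every joint- and conditional-entropy coefficient landing on the correct side of zero, so that subadditivity and the dimension bound both point in the upper-bounding direction. This is exactly what the restriction from Lemma~\ref{lem:suff-condition} secures, forcing $\lambda(1-2\epsilon)-\mu\epsilon^{2}\geq\mu\epsilon(1-\epsilon)\geq0$ and $\lambda(1-2\epsilon)-\mu\epsilon\geq0$; a secondary point to verify is that discarding the nonpositive mixed-pattern terms costs no tightness, which it does not, since the product-state construction already saturates the bound at $2P_{\lambda,\mu}(\mathcal{N}_{\epsilon})$.
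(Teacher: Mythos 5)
Your proposal is correct and follows essentially the same route as the paper's proof: restrict to $\lambda(1-2\epsilon)\geq\mu\epsilon$ via Lemma~\ref{lem:suff-condition}, decompose all output entropies over the four erasure patterns, discard the nonpositive $-\epsilon(1-\epsilon)[H(A_{1}'|YX)+H(A_{2}'|YX)]$ contribution, apply subadditivity to the joint-entropy terms whose coefficients $(1+\mu)(1-\epsilon)^{2}$ and $\lambda(1-2\epsilon)-\mu\epsilon^{2}$ are nonnegative in that regime, and bound each single-system piece by the one-copy formula at $p=1/2$. The only cosmetic difference is that you bound the qubit entropies directly by $1$ where the paper routes through the Pauli-randomized auxiliary state and phrases the subadditivity gap as $-[\lambda(1-2\epsilon)-\mu\epsilon^{2}]I(A_{1}';A_{2}'|XIJ)$; the content is identical.
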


\begin{proof}
We prove the non-trivial inequality $P_{\lambda,\mu}(\mathcal{N}_{\epsilon
}\otimes\mathcal{N}_{\epsilon})\leq P_{\lambda,\mu}(\mathcal{N}_{\epsilon
})+P_{\lambda,\mu}(\mathcal{N}_{\epsilon})$. We define the following states:%
\begin{align*}
\rho^{XYA_{1}^{\prime}A_{2}^{\prime}}  &  \equiv\sum_{x,y}p_{X}\left(
x\right)  p_{Y|X}\left(  y|x\right)  \left\vert x\right\rangle \left\langle
x\right\vert ^{X}\otimes\left\vert y\right\rangle \left\langle y\right\vert
^{Y}\otimes\phi_{x,y}^{A_{1}^{\prime}A_{2}^{\prime}},\\
\omega^{XYB_{1}E_{1}B_{2}E_{2}}  &  \equiv U_{\mathcal{N}_{\epsilon}}%
^{A_{1}^{\prime}\rightarrow B_{1}E_{1}}\otimes U_{\mathcal{N}_{\epsilon}%
}^{A_{2}^{\prime}\rightarrow B_{2}E_{2}}(\rho^{XYA_{1}^{\prime}A_{2}^{\prime}%
}),
\end{align*}
and we suppose that $\rho^{XYA_{1}A_{2}}$ is the state that maximizes
$P_{\lambda,\mu}(\mathcal{N}_{\epsilon}\otimes\mathcal{N}_{\epsilon})$.
Consider the following equality:%
\begin{align*}
&  H\left(  B_{1}B_{2}\right)  _{\omega}-H\left(  B_{1}B_{2}|YX\right)
_{\omega}+\lambda\left(  H\left(  B_{1}B_{2}|X\right)  _{\omega}-H\left(
E_{1}E_{2}|X\right)  _{\omega}\right)  +\mu\left(  H\left(  B_{1}B_{2}\right)
_{\omega}-H\left(  E_{1}E_{2}|X\right)  _{\omega}\right) \\
&  =\left(  1-\epsilon\right)  ^{2}H\left(  A_{1}^{\prime}A_{2}^{\prime
}\right)  _{\rho}+\epsilon\left(  1-\epsilon\right)  \left(  H\left(
A_{1}^{\prime}\right)  _{\rho}+H\left(  A_{2}^{\prime}\right)  _{\rho}\right)
\\
&  \ \ \ \ \ \ -\epsilon\left(  1-\epsilon\right)  \left(  H\left(
A_{1}^{\prime}|YX\right)  _{\rho}+H\left(  A_{2}^{\prime}|YX\right)  _{\rho
}\right) \\
&  \ \ \ \ \ \ +\lambda\left[  \left(  1-\epsilon\right)  ^{2}H\left(
A_{1}^{\prime}A_{2}^{\prime}|X\right)  _{\rho}+\epsilon\left(  1-\epsilon
\right)  \left(  H\left(  A_{1}^{\prime}|X\right)  _{\rho}+H\left(
A_{2}^{\prime}|X\right)  _{\rho}\right)  \right] \\
&  \ \ \ \ \ \ -\lambda\left[  \epsilon^{2}H\left(  A_{1}^{\prime}%
A_{2}^{\prime}|X\right)  _{\rho}+\epsilon\left(  1-\epsilon\right)  \left(
H\left(  A_{1}^{\prime}|X\right)  _{\rho}+H\left(  A_{2}^{\prime}|X\right)
_{\rho}\right)  \right] \\
&  \ \ \ \ \ \ +\mu\left[  \left(  1-\epsilon\right)  ^{2}H\left(
A_{1}^{\prime}A_{2}^{\prime}\right)  _{\rho}+\epsilon\left(  1-\epsilon
\right)  \left(  H\left(  A_{1}^{\prime}\right)  _{\rho}+H\left(
A_{2}^{\prime}\right)  _{\rho}\right)  \right] \\
&  \ \ \ \ \ \ -\mu\left[  \epsilon^{2}H\left(  A_{1}^{\prime}A_{2}^{\prime
}|X\right)  _{\rho}+\epsilon\left(  1-\epsilon\right)  \left(  H\left(
A_{1}^{\prime}|X\right)  _{\rho}+H\left(  A_{2}^{\prime}|X\right)  _{\rho
}\right)  \right]  .
\end{align*}
The above equality follows by exploiting the properties of the quantum erasure
channel and because the entropy $H\left(  A_{1}^{\prime}A_{2}^{\prime
}|YX\right)  _{\rho}=0$. Continuing, the above quantity is less than the
following one:%
\begin{align*}
&  \leq2\left(  1-\epsilon\right)  -\epsilon\left(  1-\epsilon\right)  \left(
H\left(  A_{1}^{\prime}|YXIJ\right)  _{\sigma}+H\left(  A_{2}^{\prime
}|YXIJ\right)  _{\sigma}\right) \\
&  \ \ \ \ \ \ +\lambda\left(  1-2\epsilon\right)  H\left(  A_{1}^{\prime
}A_{2}^{\prime}|XIJ\right)  _{\sigma}\\
&  \ \ \ \ \ \ +\mu\left[  2\left(  1-\epsilon\right)  -\epsilon^{2}H\left(
A_{1}^{\prime}A_{2}^{\prime}|XIJ\right)  _{\sigma}-\epsilon\left(
1-\epsilon\right)  \left(  H\left(  A_{1}^{\prime}|XIJ\right)  _{\sigma
}-H\left(  A_{2}^{\prime}|XIJ\right)  _{\sigma}\right)  \right] \\
&  =2\left(  1-\epsilon\right) \\
&  \ \ \ \ \ \ \ +\lambda\left(  1-2\epsilon\right)  \left(  H\left(
A_{1}^{\prime}|XIJ\right)  _{\sigma}+H\left(  A_{2}^{\prime}|XIJ\right)
_{\sigma}\right) \\
&  \ \ \ \ \ \ +\mu\left[  2\left(  1-\epsilon\right)  -\epsilon\left(
H\left(  A_{1}^{\prime}|XIJ\right)  _{\sigma}-H\left(  A_{2}^{\prime
}|XIJ\right)  _{\sigma}\right)  \right] \\
&  \ \ \ \ \ \ \ -\epsilon\left(  1-\epsilon\right)  \left(  H\left(
A_{1}^{\prime}|YXIJ\right)  _{\sigma}+H\left(  A_{2}^{\prime}|YXIJ\right)
_{\sigma}\right) \\
&  \ \ \ \ \ \ -\left[  \left(  \lambda\left(  1-2\epsilon\right)
-\mu\epsilon^{2}\right)  I\left(  A_{1}^{\prime};A_{2}^{\prime}|XIJ\right)
_{\sigma}\right] \\
&  \leq P_{\lambda,\mu}\left(  \mathcal{N}_{\epsilon}\right)  +P_{\lambda,\mu
}\left(  \mathcal{N}_{\epsilon}\right)  -\epsilon\left(  1-\epsilon\right)
\left(  H\left(  A_{1}^{\prime}|YXIJ\right)  _{\sigma}+H\left(  A_{2}^{\prime
}|YXIJ\right)  _{\sigma}\right) \\
&  \ \ \ \ \ \ \ -\left[  \left(  \lambda\left(  1-2\epsilon\right)
-\mu\epsilon^{2}\right)  I\left(  A_{1}^{\prime};A_{2}^{\prime}|XIJ\right)
_{\rho}\right] \\
&  \leq P_{\lambda,\mu}\left(  \mathcal{N}_{\epsilon}\right)  +P_{\lambda,\mu
}\left(  \mathcal{N}_{\epsilon}\right)  .
\end{align*}
The first inequality follows from similar proofs we have seen for a state
$\sigma$\ of the form in (\ref{eq:cq-state-erasure-mixed}). The first equality
follows by rearranging terms. The second inequality follows from the form of
$D_{\lambda,\mu}$ in (\ref{eq:objective-erasure}). The final inequality
follows because Lemma~\ref{lem:suff-condition} states that it is sufficient to
consider $\lambda\left(  1-2\epsilon\right)  \geq\mu\epsilon$. Note that this
condition implies that%
\[
\lambda\left(  1-2\epsilon\right)  \geq\mu\epsilon^{2},
\]
and hence that the quantity in square brackets in the line above the last one
is positive.
\end{proof}

\end{document}